\documentclass{article}

\usepackage{pgfplots}
\usepackage[letterpaper, margin=1in]{geometry}
\usepackage{gensymb}
\usepackage{url,amssymb,amsmath,amsthm,amscd,paralist,bbm,wrapfig}
\usepackage{tkz-graph}
\usepackage{tikz-cd}
\usepackage{amsmath, bm}
\usepackage{fancyhdr}
\usepackage{amssymb}
\usepackage{graphicx}
\usepackage{algorithm}
\usepackage{algorithmic} 

\usepackage[T1]{fontenc}
\usepackage[numbers]{natbib}
\tikzstyle{vertex}=[circle,black, fill=black, draw, inner sep=0pt, minimum size=6pt]

\usepackage{tikz}
\usetikzlibrary{decorations.pathreplacing}
\usepackage{xcolor}
\usepackage{mathtools}
\definecolor{cof}{RGB}{219,144,71}
\definecolor{pur}{RGB}{186,146,162}
\definecolor{greeo}{RGB}{91,173,69}
\definecolor{greet}{RGB}{52,111,72}
\pgfplotsset{compat=1.14}
\begin{document}
\newcommand{\footremember}[2]{%
   \footnote{#2}
    \newcounter{#1}
    \setcounter{#1}{\value{footnote}}%
}
\newcommand{\footrecall}[1]{%
    \footnotemark[\value{#1}]%
} 

\newcommand{\mbf}[1]{\ensuremath\mathbf{#1}}
\newcommand{\R}{\mathbb{R}}
\newcommand{\C}{\mathbb{C}}
\newcommand{\al}{\alpha}
\newcommand{\PiWdix}{\Pi_{i=d}^1 W_{i,+,x}}
\newcommand{\PiWdiy}{\Pi_{i=d}^1 W_{i,+,y}}
\newcommand{\PiWdixo}{\Pi_{i=d}^1 W_{i,+,x_0}}
\newcommand{\inner}[2]{\langle #1,#2 \rangle}
\newcommand{\relu}{\text{relu}}
\newcommand{\E}{\operatorname{\mathbb{E}}}
\newcommand{\Pro}{\operatorname{\mathbb{P}}}
\newcommand{\one}{\operatorname{\mathbbm{1}}}
\newcommand{\sgn}{\operatorname{sgn}}
\newcommand{\sign}{\operatorname{sgn}}
\newcommand{\diag}{\operatorname{diag}}
\newcommand{\dimension}{\operatorname{dim}}
\newcommand{\argmin}{\operatorname{argmin}}

\newcommand{\G}{\mathcal{G}}
\newcommand{\MM}{\mathcal{M}}
\newcommand{\F}{\mathcal{F}} 
\newcommand{\SG}{\mathcal{S}(G)}
 \newcommand{\SZG}{\mathcal{S}_0(G)}
\newcommand{\So}{\mathcal{S}_0}
\newcommand{\Sl}{\mathcal{S}_\ell}
\newcommand{\I}{\mathcal{I}}
\newcommand{\K}{\mathcal{K}}
\newcommand{\mr}{\operatorname{mr}}
\newcommand{\mro}{\operatorname{mr}_0}
\newcommand{\MRo}{\operatorname{MR}_0}
\newcommand{\mrl}{\operatorname{mr}_\ell}
\newcommand{\mrp}{\operatorname{mr}_+}
\newcommand{\mrs}{\operatorname{mr}_-}
\newcommand{\hmr}{\operatorname{hmr}}
\newcommand{\mrF}{\operatorname{mr}^F}
\newcommand{\mrZ}{\operatorname{mr}^{\ZZ}}
\newcommand{\M}{\operatorname{M}}
\newcommand{\Mo}{\operatorname{M}_0}
\newcommand{\Mp}{\operatorname{M}_+}
\newcommand{\Zo}{\operatorname{Z}_0}
\newcommand{\perm}{\operatorname{perm}}
\newcommand{\cyc}{\operatorname{cyc}}
\newcommand{\nc}{\operatorname{nc}}
\newcommand{\nev}{\operatorname{ne}}
\newcommand{\sgc}{\operatorname{sgc}}
\newcommand{\pma}{\operatorname{pm}}
\newcommand{\match}{\operatorname{match}}
\newcommand{\nul}{\operatorname{null}}
\newcommand{\rank}{\operatorname{rank}}
\newcommand{\rk}{\operatorname{rk}^{\Z_2}}
\newcommand{\w}{\operatorname{\omega}}
\newcommand{\cov}{\operatorname{cov}}
\newcommand{\var}{\operatorname{Var}}

\newtheorem{thm}{Theorem}
\newtheorem{lem}{Lemma}
\newtheorem{cor}[thm]{Corollary}
\newtheorem{prop}{Proposition}
\newtheorem{prob}[thm]{Problem}
\newtheorem{alg}[thm]{Algorithm}
\newtheorem{nota}[thm]{Notation}

\theoremstyle{definition}
\newtheorem{defn}[thm]{Definition}
\newtheorem{ex}{Example}

\theoremstyle{remark}
\newtheorem{rem}[thm]{Honor Pledge}
\newtheorem{obs}[thm]{Observation} 
\author{Paul Hand\footremember{Rice}{Department of Computational and Applied Mathematics, Rice University, Houston, TX}, Oscar Leong\footrecall{Rice} , and Vladislav Voroninski\footremember{Helm}{Helm.ai, Menlo Park, CA}}
\title{Phase Retrieval Under a Generative Prior}
\maketitle

\begin{abstract}
The phase retrieval problem asks to recover a natural signal $y_0 \in \mathbb{R}^n$ from $m$ quadratic observations, where $m$ is to be minimized. As is common in many imaging problems, natural signals are considered sparse with respect to a known basis, and the generic sparsity prior is enforced via $\ell_1$ regularization. While successful in the realm of linear inverse problems, such $\ell_1$ methods have encountered possibly fundamental limitations, as no computationally efficient algorithm for phase retrieval of a $k$-sparse signal has been proven to succeed with fewer than $O(k^2\log n)$ generic measurements, exceeding the theoretical optimum of $O(k \log n)$. In this paper, we propose a novel framework for phase retrieval by 1) modeling natural signals as being in the range of a deep generative neural network $G : \mathbb{R}^k \rightarrow \mathbb{R}^n$  and 2) enforcing this prior directly by optimizing an empirical risk objective over the domain of the generator. Our formulation has provably favorable global geometry for gradient methods, as soon as $m = O(kd^2\log n)$, where $d$ is the depth of the network. Specifically, when suitable deterministic conditions on the generator and measurement matrix are met, we construct a descent direction for any point outside of a small neighborhood around the unique global minimizer and its negative multiple, and show that such conditions hold with high probability under Gaussian ensembles of multilayer fully-connected generator networks and measurement matrices. This formulation for structured phase retrieval thus has two advantages over sparsity based methods: 1) deep generative priors can more tightly represent natural signals and 2) information theoretically optimal sample complexity. We corroborate these results with experiments showing that exploiting generative models in phase retrieval tasks outperforms sparse phase retrieval methods.

\end{abstract}

\section{Introduction}

We study the problem of recovering a signal $y_0 \in \R^n$ given $m \ll n$ phaseless observations of the form $b = |Ay_0|$ where the measurement matrix $A \in \R^{m \times n}$ is known and $|\cdot|$ is understood to act entrywise. This is known as the \textit{phase retrieval problem}. In this work, we assume, as a prior, that the signal $y_0$ is in the range of a generative model $G : \R^k \rightarrow \R^n$ so that $y_0 = G(x_0)$ for some $x_0 \in \R^k$. To recover $y_0$, we first recover the original latent code $x_0$ corresponding to it, from which $y_0$ is obtained by applying $G$. Hence we study the \textit{phase retrieval problem under a generative prior}: \[
\text{find}\ x \in \R^k\ \text{such that}\ b = |AG(x)|.
\] 

We will refer to this formulation as Deep Phase Retrieval (DPR). The phase retrieval problem has applications in X-ray crystallography \cite{Harrison93,Millane90}, optics \cite{Walther1963}, astronomical imaging \cite{Fienup87}, diffraction imaging \cite{diffImaging}, and microscopy \cite{MicroscopyPR}. In these problems, the phase information of an object is lost due to physical limitations of scientific instruments. In crystallography, the linear measurements in practice are typically Fourier modes because they are the far field limit of a diffraction pattern created by emitting a quasi-monochromatic wave on the object of interest.

In many applications, the signals to be recovered are compressible or sparse with respect to a certain basis (e.g. wavelets). Many researchers have attempted to leverage sparsity priors in phase retrieval to yield more efficient recovery algorithms 
\cite{Cai2016, Voroninski2013, SPARTA, Jain2013,SparsePRoverview,Hedge2017}. However, these methods have been met with potentially severe fundamental limitations. In the Gaussian measurement regime where $A$ has i.i.d. Gaussian entries, one would hope that recovery of a $k$-sparse $n$-dimensional signal is possible with $O(k\log n)$ measurements. However, there is no known method to succeed with fewer than $O(k^2 \log n)$ measurements. Moreover, \cite{Voroninski2013} proved that the semidefinite program PhaseLift cannot outperform this suboptimal sample complexity by direct $\ell_1$ penalization. This is in stark contrast to the success of leveraging sparsity in linear compressed sensing to yield optimal sample complexity. 

\paragraph{Our contribution.} We establish information theoretically optimal sample complexity\footnote{with respect to the dimensionality of the latent code given to the generative network} for structured phase retrieval under generic measurements and a novel nonlinear formulation based on empirical risk under a generative prior. In this work, we suppose that the signal of interest is in the range of a generative model. In particular, the generative model is a $d$-layer, fully-connected, feed forward neural network with Rectifying Linear Unit (ReLU) activation functions and no bias terms. Let $W_i \in \R^{n_i \times n_{i-1}}$ denote the weights in the $i$-th layer of our network for $i = 1,\dots,d$ where $k = n_0 < n_1 < \dots < n_d$. Given an input $x \in \R^k$, the output of the the generative model $G : \R^k \rightarrow \R^{n_d}$ can be expressed as  \begin{align*}
G(x) : = \relu\left(W_d \dots\relu(W_2(\relu(W_1x)))\dots\right)
\end{align*} where $\relu(x) = \max(x,0)$ acts entrywise. To recover $x_0$ from the measurements $|AG(x_0)|$, we study the following $\ell_2$ empirical risk minimization problem: \begin{align} \min_{x \in \R^k} f(x) := \frac{1}{2}\left\||AG(x)| - |AG(x_0)|\right\|^2. \label{objective} \end{align} We establish a set of deterministic conditions on the generator $G$ and a measurement matrix $A$, which guarantee that the empirical risk formulation has favorable global geometry for gradient methods. We then establish that these deterministic conditions hold with high probability when $G$ has random Gaussian weights and $A$ is a Gaussian measurement matrix. Our deterministic conditions likely accommodate other probability distributions on the generator weights and measurement matrices. Moreover, we note that while we assume the weights to have i.i.d. Gaussian entries, we make no assumption about the independence between layers. The Gaussian assumption of the weight matrices is supported by empirical evidence showing neural networks, learned from data, that have weights that obey statistics similar to Gaussians \cite{GaussNets}. Furthermore, there has also been work done in establishing a relationship between deep networks and Gaussian processes \cite{Lee2018}. 

Due to the non-convexity of (\ref{objective}), there is no a-priori guarantee that gradient descent schemes can solve (\ref{objective}) as many local minima may exist. In spite of this, our main result illustrates that the objective function exhibits favorable geometry for gradient methods. Moreover, our result holds with information theoretically optimal sample complexity:

\begin{thm}[Informal]
If we have a sufficient number of measurements $m = \Omega(kd \log(n_1\dots n_d))$ and our network is sufficiently expansive at each layer $n_i = \Omega(n_{i-1}\log n_{i-1})$, then there exists a descent direction $v_{x,x_0} \in \R^k$ for any non-zero $x \in \R^k$ outside of two small  neighborhoods centered at the true solution $x_0$ and a negative multiple $-\rho_d x_0$ with high probability. In addition, the origin is a local maximum of $f$. Here $\rho_d > 0$ depends on the number of layers $d$ and $\rho_d \rightarrow 1$ as $d \rightarrow \infty$.
\end{thm}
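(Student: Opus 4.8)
The plan is to use a two-stage argument: first reduce to deterministic conditions on the network weights and on $A$, then verify those conditions hold with high probability. Off the measure-zero set where a pre-activation vanishes, $G$ is differentiable and $G(x)=W_{+,x}x$ with $W_{+,x}:=\PiWdix$ the layerwise linearization at $x$, so
\begin{align*}
\nabla f(x) = W_{+,x}^T A^T\Big[AG(x) - \diag\big(\sgn(AG(x))\odot\sgn(AG(x_0))\big)AG(x_0)\Big].
\end{align*}
First I would isolate a \emph{Weight Distribution Condition} on each $W_i$ — that $v\mapsto\relu(W_i v)$ approximately preserves squared norm and that $\langle\relu(W_i x),\relu(W_i y)\rangle$ concentrates about $\tfrac{\|x\|\|y\|}{2\pi}\big(\sin\theta+(\pi-\theta)\cos\theta\big)$ uniformly in $x,y$, where $\theta=\angle(x,y)$ — together with a \emph{range-restricted concentration property} on $A$: for vectors in the range of $G$, the operator $\tfrac1m A^TA$ acts like $I$ and $\tfrac1m A^T\diag\big(\sgn(AG(x))\odot\sgn(AG(x_0))\big)A$ acts like $N_{u,v}:=\E[\sgn(a^Tu)\sgn(a^Tv)aa^T]$ with $u=\widehat{G(x)}$, $v=\widehat{G(x_0)}$, a matrix in $\mathrm{span}\{I,uu^T,vv^T,uv^T+vu^T\}$. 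These depend only on the $W_i$ and $A$, not on $x$.

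Next I would show these conditions force $\nabla f$ to be uniformly close to an explicit deterministic vector field $h_{x,x_0}$. Composing the Weight Distribution Condition over layers makes $G$ approximately norm- and angle-tracking: $\|G(x)\|$ is a fixed multiple of $\|x\|$, and $\angle(G(x),G(x_0))\approx\theta_d:=g^{\circ d}(\theta_0)$ where $g(\theta)=\cos^{-1}\tfrac{\sin\theta+(\pi-\theta)\cos\theta}{\pi}$ is a strict contraction toward $0$; moreover $W_{+,x}^TW_{+,x}\approx cI$ and $W_{+,x}^TW_{+,x_0}$ concentrates about a matrix of the form $c'I+(\text{rank-}2\text{ on }\mathrm{span}\{x,x_0\})$. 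Plugging these, and the fact that $N_{u,v}v\in\mathrm{span}\{u,v\}$, into the gradient formula yields $h_{x,x_0}\in\mathrm{span}\{x,x_0\}$ with coefficients that are explicit functions of $r:=\|x\|/\|x_0\|$ and $\theta:=\angle(x,x_0)$ alone, so the landscape question collapses to two real variables.

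Then I would analyze $h_{x,x_0}$ in the $(r,\theta)$ variables. (a) Construct an explicit $v_{x,x_0}\in\mathrm{span}\{x,x_0\}$ with $\langle h_{x,x_0},v_{x,x_0}\rangle\ge c\|h_{x,x_0}\|\,\|v_{x,x_0}\|>0$ for every $x\ne0$ outside $B(x_0,\delta\|x_0\|)\cup B(-\rho_d x_0,\delta\|x_0\|)$; since $\nabla f\approx h_{x,x_0}$, the one-sided directional derivative of $f$ along $-v_{x,x_0}$ is then strictly negative, so $-v_{x,x_0}$ is a descent direction. (b) Locate the spurious near-stationary point by solving $h_{x,x_0}=0$ along the slice $\theta=\pi$: there $\angle(G(-\rho x_0),G(x_0))=g^{\circ d}(\pi)=:\psi_d\to0$, and balancing $G(-\rho x_0)\approx N_{\widehat{G(-\rho x_0)},\widehat{G(x_0)}}G(x_0)$ forces $\rho=\rho_d$ with $\rho_d>0$ and $\rho_d\to1$ as $d\to\infty$ (the angle recursion flattens, so deep networks ``forget'' the sign opposition). (c) Handle the origin via positive homogeneity of $G$: for $\|x\|>0$, $f(x)=\tfrac12\|AG(x_0)\|^2-\|x\|\langle|AG(\widehat x)|,|AG(x_0)|\rangle+\tfrac12\|x\|^2\|AG(\widehat x)\|^2$, and since $G$ maps into the nonnegative orthant the coefficient $\langle|AG(\widehat x)|,|AG(x_0)|\rangle$ is bounded below by a positive constant uniformly in $\widehat x$ (by the concentration estimates), so $f$ strictly decreases along every ray from $0$ and the origin is a strict local maximum.

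The final step is probabilistic: Gaussian $W_i$ satisfy the Weight Distribution Condition with high probability once $n_i=\Omega(n_{i-1}\log n_{i-1})$ (an $\epsilon$-net of size $(3/\epsilon)^{O(n_{i-1})}$ on pairs of directions, sub-Gaussian/sub-exponential tail bounds per net point, union bound), and Gaussian $A$ satisfies the range-restricted concentration property with high probability once $m=\Omega(kd\log(n_1\cdots n_d))$ (a net over the $O(kd)$-dimensional parameterization of $\mathrm{range}(G)$, a union bound over the polynomially many linear pieces of $G$ supplying the $\log(n_1\cdots n_d)$ factor, and a bound on how much the sign patterns $\sgn(AG(x))$ can move between net points). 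I expect the last item — uniform range-restricted concentration for $A$ — to be the main obstacle: unlike the linear compressed sensing analogue, the summands $\sgn(a_i^TG(x))\sgn(a_i^TG(x_0))\,a_i a_i^T$ are discontinuous in $x$ and have heavier (sub-exponential) tails, so one cannot pass from the net to all of $\mathrm{range}(G)$ by crude Lipschitz continuity; one must control the measure of the sign-flip regions and argue jointly over two independent sources of randomness — the $W_i$, which determine both the partition into linear pieces and $\mathrm{range}(G)$, and $A$. A secondary, bookkeeping-heavy difficulty is propagating constants through the $d$-fold angle recursion precisely enough to pin down $\rho_d$ and to confirm the excluded ball around $-\rho_d x_0$ shrinks as claimed.
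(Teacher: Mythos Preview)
Your proposal matches the paper's architecture almost exactly: reduce to two deterministic conditions (the WDC on the $W_i$ and a range-restricted concentration property on $A$), show these force the gradient to concentrate around an explicit vector field $h_{x,x_0}\in\mathrm{span}\{x,x_0\}$ depending only on $(r,\theta)$, analyze the zeros of $h$ to locate the two balls, and then verify the conditions probabilistically. Two points are worth flagging.

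First, you propose to \emph{construct} a separate $v_{x,x_0}\in\mathrm{span}\{x,x_0\}$ with $\langle h_{x,x_0},v_{x,x_0}\rangle>0$. The paper makes the simpler choice $v_{x,x_0}:=\nabla f(x)$ (or its one-sided limit at nondifferentiable points). Then $D_{-v}f(x)=-\lim_n \langle\nabla f(x_n),\nabla f(x)\rangle$, and since $\|\nabla f - h\|\le \eta\max(\|x\|,\|x_0\|)$ uniformly, it suffices to show $\|h_{x,x_0}\|$ is bounded \emph{below} outside the two balls. This avoids building an auxiliary direction and reduces step (a) of your plan to a pure lower bound on $\|h\|$, which is exactly what Proposition~1 in the paper does.

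Second, you correctly identify the uniform range-restricted concentration for $A$ as the crux, but your proposed fix --- ``control the measure of the sign-flip regions'' between net points --- is not how the paper proceeds, and it is not clear a direct Lipschitz/measure argument would close. The paper instead observes that, on any fixed $k$-dimensional subspace $Z$, the rows of $A$ cut the unit sphere of $Z$ into at most $O(m^{2k})$ cells on each of which $\sgn(Az)$ is constant; one then picks a representative $z_i$ per cell, applies the fixed-$(z,w)$ RCP with a union bound over the finitely many $(z_i,w_j)$, and uses a $1$-bit compressed sensing tessellation result (Plan--Vershynin) to show each cell has diameter $\le\epsilon$, so that the continuous part $\Phi_{z,w}$ varies by $O(\epsilon)$ within a cell. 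This cell-counting plus tessellation step is the specific idea you are missing; once you have it, your worry about ``joint randomness of $W_i$ and $A$'' dissolves, since the count of linear pieces of $G$ is bounded deterministically and the RRCP is then a high-probability statement over $A$ alone.
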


Our main result asserts that the objective function does not have any spurious local minima or saddle points away from neighborhoods of the true solution and a negative multiple of it. Hence if one were to solve (\ref{objective}) via gradient descent and the algorithm converged, the final iterate would be close to the true solution or a negative multiple of it. The proof of this result is a concentration argument. We first prove the sufficiency of two deterministic conditions on the weights $W_i$ and measurement matrix $A$. We then show that Gaussian $W_i$ and $A$ satisfy these conditions with high probability. Finally, using these two conditions, we argue that the specified descent direction $v_{x,x_0}$ concentrates around a vector $h_{x,x_0}$ that is continuous for non-zero $x \in \R^k$ and vanishes only when $x \approx x_0$ or $x \approx -\rho_d x_0$.

Based on our main theoretical result, we also propose a gradient descent scheme to solve (\ref{objective}) in Section 2. We solved phase retrieval tasks on both synthetic and natural signals in Section 4, comparing our algorithm's results to three sparse phase retrieval methods. Our results corroborate our main theoretical result that recovery with $O(k d^2 \log n)$ measurements is possible while outperforming the alternative sparse phase retrieval methods. In particular, our proposed framework is advantageous in allowing enforcement of tighter priors than sparsity, and has information theoretically optimal sample complexity, unlike $\ell_1$-based methods.



\paragraph{Prior methodologies for general phase retrieval.} In the Gaussian measurement regime, most of the techniques to solve phase retrieval problems can be classified as convex or non-convex methods. In terms of convex techniques, lifting-based methods transform the signal recovery problem into a rank-one matrix recovery problem by \textit{lifting} the signal into the space of positive semi-definite matrices. These semidefinite programming (SDP) approaches, such as Phaselift \cite{CandStroh2013}, can be provably recover any $n$-dimensional signal with $O(n \log n)$ measurements. Other convex methods include PhaseCut \cite{Phasecut}, an SDP approach, and linear programming algorithms such as PhaseMax \cite{Phasemax}

Non-convex methods encompass alternating minimization approaches such as the original Gerchberg-Saxton \cite{GerchSaxAlg} and Fienup \cite{FienupAlg} algorithms and direct optimization algorithms such as Wirtinger Flow \cite{Candes2015}. These latter methods directly tackle the least squares objective function: \begin{align}
\min_{y \in \R^n} \frac{1}{2} \left\| |Ay|^2 - |Ay_0|^2 \right\|^2. \label{wirtflowfunc}
\end{align} In the seminal work, \cite{Candes2015} show that through an initialization via the spectral method, a gradient descent scheme can solve (\ref{wirtflowfunc}) where the gradient is understood in the sense of Wirtinger calculus with $O(n \log n)$ measurements. Expanding on this, a later study on the minimization of (\ref{wirtflowfunc}) in \cite{Wright2016} showed that with $O(n \log^3 n)$ measurements, the energy landscape of the objective function exhibited global benign geometry which would allow it to be solved efficiently by gradient descent schemes without special initialization. There also exist amplitude flow methods that solve the following non-smooth variation of (\ref{wirtflowfunc}):
\begin{align}
\min_{y \in \R^n} \frac{1}{2} \left\| |Ay| - |Ay_0| \right\|^2. \label{ampflowfunc}
\end{align} These methods have found success with only $O(n)$ measurements \cite{Wang2017} and have shown to empirically perform better than intensity-based methods using the squared formulation in (\ref{wirtflowfunc}) \cite{Yeh2015}.

\paragraph{Sparse phase retrieval.} Many of the successful methodologies for general phase retrieval have been adapted to try to solve sparse phase retrieval problems. In terms of non-convex optimization, wirtinger flow type methods such as Thresholded Wirtinger Flow \cite{Cai2016} create a sparse initializer via the spectral method and perform thresholded gradient descent updates to generate sparse iterates to solve (\ref{wirtflowfunc}). Another non-convex method, SPARTA \cite{SPARTA}, estimates the support of the signal for its initialization and performs hard thresholded gradient updates to the amplitude-based objective function (\ref{ampflowfunc}). Both of these methods require $O(k^2 \log n)$ measurements for a generic $k$-sparse $n$-dimensional signal to succeed, which is more than the theoretical optimum $O(k \log n)$.

While lifting-based methods such as Phaselift have been proven unable to beat the suboptimal sample complexity $O(k^2 \log n)$, there has been some progress towards breaking this barrier. In \cite{HV2016}, the authors show that with an initializer that sufficiently correlates with the true solution, a linear program can recover the sparse signal from $O(k \log \frac{n}{k})$ measurements. However, the best known initialization methods require at least $O(k^2 \log n)$ measurements \cite{Cai2016}. Outside of the Gaussian measurement regime, there have been other results showing that if one were able to design their own measurement matrices, then the optimal sample complexity could be reached \cite{SparsePRoverview}. For example, \cite{Romberg2015} showed that assuming the measurement vectors were chosen from an incoherent subspace, then recovery is possible with $O(k \log \frac{n}{k})$ measurements. However, these results would be difficult to generalize to the experimental setting as their design architectures are often unrealistic. Moreover, the Gaussian measurement regime more closely models the experimental Fourier diffraction measurements observed in, for example, X-ray crystallography. As Fourier models are the ultimate goal, results towards lowering this sample complexity in the Gaussian measurement regime must be made or new modes of regularization must be explored in order for the theory and practice of phase retrieval to advance.

\paragraph{Related work.} There has been recent empirical evidence supporting applying a deep learning based approach to holographic imaging, a phase retrieval problem. The authors in \cite{Rivenson2017} show that a neural network with ReLU activation functions can learn to perform holographic image reconstruction. In particular, they show that compared to current approaches, this neural network based method requires less measurements to succeed and is computationally more efficient, needing only one hologram to reconstruct the necessary images. 

Furthermore, there have been a number of recent advancements in leveraging generative priors over sparsity priors in compressed sensing. In \cite{Price2017}, the authors considered the least squares objective: \begin{align}
\min_{x \in \R^k} \frac{1}{2} \|AG(x) - AG(x_0)\|^2. \label{compsensfunc}
\end{align} They provided empirical evidence showing that 5-10X fewer measurements were needed to succeed in recovery compared to standard sparsity-based approaches such as Lasso. In terms of theory, they showed that if $A$ satisfied a restricted eigenvalue condition and if one were able to solve (\ref{compsensfunc}), then the solution would be close to optimal. The authors in \cite{Hand2017} analyze the same optimization problem as in \cite{Price2017} but establish global theoretical guarantees regarding the non-convex objective function. Under particular conditions about the expansivity of each neural network layer and randomness assumptions on their weights, they show that the energy landscape of the objective function does not have any spurious local minima. Furthermore, they show that there is always a descent direction outside of two small neighborhoods of the global minimum and a negative scalar multiple of it. The success of leveraging generative priors in compressed sensing along with the sample complexity bottlenecks in sparse phase retrieval have influenced this work to consider enforcing a generative prior in phase retrieval to surpass sparse phase retrieval's current theoretical and practical limitations.

\subsection{Notation}

Let $(\cdot)^\top$ denote the real transpose. Let $[n] = \{1,\dots,n\}$. Let $\one_{S}$ denote the indicator function on the set $S$. For a vector $v \in \R^n$, $\text{diag}(v > 0)$ is $1$ in the $i$-th diagonal entry if $v_i > 0$ and $0$ otherwise. Let $\mathcal{B}(x,r)$ denote the Euclidean ball centered at $x$ with radius $r$. Let $\|\cdot\|$ denote the $\ell_2$ norm for vectors and spectral norm for matrices. For any non-zero $x \in \R^n$, let $\hat{x} = x/\|x\|$. Let $\Pi_{i=d}^1 W_i = W_d W_{d-1} \dots W_1$. Let $I_n$ be the $n \times n$ identity matrix. Let $S^{k-1}$ denote the unit sphere in $\R^k$. We write $c = \Omega(\delta)$ when $c \geq C\delta$ for some positive constant $C$. Similarly, we write $c = O(\delta)$ when $c \leq C \delta$ for some positive constant $C$. When we say that a constant depends polynomially on $\epsilon^{-1}$, this means that it is at least $C \epsilon^{-k}$ for some positive $C$ and positive integer $k$. For notational convenience, we write $a = b + O_1(\epsilon)$ if $\|a - b\| \leq \epsilon$ where $\|\cdot\|$ denotes $|\cdot|$ for scalars, $\ell_2$ norm for vectors, and spectral norm for matrices. Define $\sgn : \R \rightarrow \R$ to be $\sgn(x) = x/|x|$ for non-zero $x \in \R$ and $\sgn(x) = 0$ otherwise. For a vector $v \in \R^n$, $\text{diag}(\sgn(v))$ is $\sgn(v_i)$ in the $i$-th diagonal entry. Let $g^{\circ d}$ denote the composition of a function $g$ with itself $d$ times.

\section{Algorithm}

While our main result illustrates that the objective function exhibits favorable geometry for optimization, it does not guarantee recovery of the signal as gradient descent algorithms could, in principle, converge to the negative multiple of our true solution. Hence we propose a gradient descent scheme to recover the desired solution by escaping this region. First, consider Figure \ref{fig:objfunclandscape} which illustrates the behavior of our objective function \textit{in expectation}, i.e. when the number of measurements $m \rightarrow \infty$.

\begin{figure}[!htbp]
  \centering
  \includegraphics[scale = 0.25]{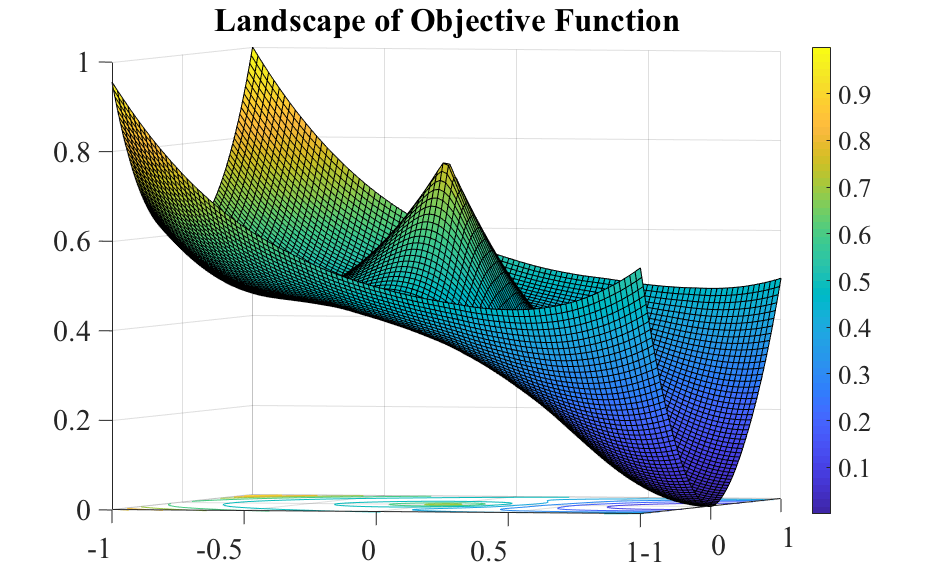}
  \caption{Energy landscape of (\ref{objective}) with $m \rightarrow \infty$ and true solution $x_0 = \left[1,\ 0\right]^\top \in \R^2$.}
  \label{fig:objfunclandscape}
\end{figure}

We observe two important attributes of the objective function's landscape: 1) there exist two minima, the true solution $x_0$ and a negative multiple $-\beta x_0$ for some $\beta > 0$ and 2) if $z \approx x_0$ while $w \approx -\beta x_0$, we have that $f(z) < f(w)$, i.e. the objective function value is lower near the true solution than near its negative multiple. This is due to the fact that the true solution is in fact the global optimum.


Based on these attributes, we will introduce a gradient descent scheme to converge to the global minimum. First, we define some useful quantities. For any $x \in \R^k$ and matrix $W \in \R^{n \times k}$, define $W_{+,x} : = \text{diag}(Wx > 0)W.$ That is, $W_{+,x}$ keeps the rows of $W$ that have a positive dot product with $x$ and zeroes out the rows that do not. We will extend the definition of $W_{+,x}$ to each layer of weights $W_i$ in our neural network. For $W_1 \in \R^{n_1 \times k}$ and $x \in \R^k$, define $W_{1,+,x} : = \text{diag}(W_1x > 0)W_1.$ For each layer $i \in [d]$, define \[W_{i,+,x} : = \text{diag}(W_i W_{i-1,+,x}\dots W_{2,+,x}W_{1,+,x}x > 0) W_i.\] $W_{i,+,x}$ keeps the rows of $W_i$ that are active when the input to the generative model is $x$. Then, for any $x \in \R^k$, the output of our generative model can be written as $G(x) = (\PiWdix)x.$ For any $z \in \R^n$, define $A_z : = \text{diag}(\sgn(Az))A.$ Note that $|AG(x)| = A_{G(x)}G(x)$ for any $x \in \R^k$.

Since a gradient descent scheme could in principle be attracted to the negative multiple, we exploit the geometry of the objective function's landscape to escape this region. First, choose a random initial iterate for gradient descent $x_1 \neq 0$. At each iteration $i = 1,2,\dots$, compute the descent direction \begin{align*}
v_{x_i,x_0} := (\Pi_{i=d}^1 W_{i,+,x_i})^\top A_{G(x_i)}^\top 
\left(|AG(x_i)| - |AG(x_0)|\right).
\end{align*} This is the gradient of our objective function $f$ where $f$ is differentiable. Once computed, we then take a step in the direction of $-v_{x_i,x_0}$. However, prior to taking this step, we compare the objective function value for $x_i$ and its negation $-x_i$. If $f(-x_i) < f(x_i)$, then we set $x_i$ to its negation, compute the descent direction and update the iterate. The intuition for this algorithm relies on the landscape illustrated in Figure \ref{fig:objfunclandscape}: since the true solution $x_0$ is the global minimum and $-\rho_d x_0$ is a local minimum, the objective function value near $x_0$ is smaller than near $-\rho_d x_0$. Hence if we begin to converge towards $-\rho_d x_0$, this algorithm will escape this region by choosing the point with lower objective function value, which will be in a neighborhood of $x_0$. Algorithm 1 formally outlines this process.

\begin{algorithm}
\caption{Deep Phase Retrieval (DPR) Gradient method}
\label{alg}
\begin{algorithmic}[1]
\REQUIRE Weights $W_i$, measurement matrix $A$, observations $|AG(x_0)|$, and step size $\alpha > 0$
\STATE Choose an arbitrary initial point $x_1 \in \mathbb{R}^k\setminus \{0\}$
\FOR {$i = 1, 2, \ldots$}  \label{alg:st3}
\IF {$f(-x_{i}) < f(x_{i})$} \label{alg:cond1}
\STATE $x_{i} \gets - x_{i}$; \label{alg:cond2}
\ENDIF
\STATE Compute $v_{x_i,x_0} =(\Pi_{i=d}^1 W_{i,+,x_i})^\top A_{G(x_i)}^\top 
\left(|AG(x_i)| - |AG(x_0)|\right)$
\STATE $x_{i+1} = x_i - \alpha  v_{x_i,x_0}$
\ENDFOR
\end{algorithmic}
\end{algorithm}

\paragraph{Remark.} We would like to note an important distinction between the landscape observed in general phase retrieval and our formulation. In general phase retrieval, there is an inherent ambiguity in the measurements $|Ay_0|$ as both $y_0$ and $-y_0$ solve (\ref{wirtflowfunc}) and (\ref{ampflowfunc}). In the complex case, there are even a continuum of solutions as $e^{i\phi}y_0$ is a solution for any $\phi \in [0,2\pi]$. In our scenario, the generative model $G$ introduces a nonlinearity that resolves this ambiguity in two ways. First, due to its injectivity: if $y_0 = G(x_0)$ for some $x_0 \in \R^k$, then $x_0$ is in fact the global optimum to (\ref{objective}). Second, as an added benefit, the non-negativity of $G$ naturally resolves the sign ambiguity since $-y_0 \notin \text{range}(G)$. While these properties certainly aid our analysis, they do not trivialize the matter as the nonlinearity of $G$ must still be handled with care.


\section{Main Theoretical Analysis}

We now formally present our main result. While the objective function is not smooth, it's one-sided directional derivatives exist everywhere due to the piecewise linearity of $G$. Let $D_vf(x)$ denote the unnormalized one-sided directional derivative of $f$ at $x$ in the direction $v$: $D_v f(x) = \lim_{t \rightarrow 0^+} \frac{f(x + tv) - f(x)}{t}.$ Note that we make no assumption about the independence between layers.

\begin{thm}
Fix $\epsilon > 0$ such that $K_1d^8 \epsilon^{1/4} \leq 1$ and let $d \geq 2$. Suppose that $G$ is such that $W_i$ has i.i.d. $\mathcal{N}(0,1/n_i)$ entries for $i = 1,\dots,d$. Suppose that $A$ has i.i.d. $\mathcal{N}(0,1/m)$ entries. If $m \geq cdk \log (n_1n_2\dots n_d)$ and $n_i \geq c n_{i-1}\log n_{i-1}$ for $i =1,\dots,d$, then with probability at least $1 - \sum_{i=1}^d \tilde{c}n_ie^{-\gamma n_{i-1}} - \tilde{\gamma}m^{4k}e^{-Cm}$, the following holds: for all non-zero $x,x_0 \in \R^k$, there exists $v_{x,x_0} \in \R^k$ such that the one-sided directional derivatives of $f$ satisfy \begin{align*}
D_{-v_{x,x_0}}f(x) < 0,\ &\ \forall x \notin \mathcal{B}(x_0, K_2 d^3 \epsilon^{1/4}\|x_0\|) \cup \mathcal{B}(-\rho_d x_0, K_2 d^{14}\epsilon^{1/4}\|x_0\|) \cup \{0\}, \\
D_x f(0) < 0,\ &\ \forall x \neq 0,
\end{align*} where $\rho_d > 0$ converges to $1$ as $d \rightarrow \infty$ and $K_1$ and $K_2$ are universal constants. Here $c$ and $\gamma^{-1}$ depend polynomially on $\epsilon^{-1}$, $C$ depends on $\epsilon$, and $\tilde{c}$ and $\tilde{\gamma}$ are universal constants.
\end{thm}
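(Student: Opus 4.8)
The plan is to isolate two deterministic conditions, one on the weights and one on the measurement matrix, show they suffice, and then verify them for Gaussian ensembles, extending the empirical-risk approach of \cite{Hand2017} to the phaseless setting. The weight condition is the Weight Distribution Condition (WDC) applied layerwise: for each $i$ and all nonzero $u,v\in\R^{n_{i-1}}$,
\[
\big\|(\diag(W_i u>0)W_i)^\top(\diag(W_i v>0)W_i)-Q_{u,v}\big\|\le\epsilon,\qquad Q_{u,v}=\tfrac{\pi-\theta}{2\pi}I+\tfrac{\sin\theta}{2\pi}M_{\hat u\leftrightarrow\hat v},
\]
where $\theta=\angle(u,v)$ and $M_{\hat u\leftrightarrow\hat v}$ sends $\hat u\mapsto\hat v$, $\hat v\mapsto\hat u$ on $\mathrm{span}(u,v)$ and is $0$ on its complement. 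The measurement condition is a Range Restricted Concentration Property (RRCP): for all nonzero $x,y\in\R^k$, restricted to the at most $2k$-dimensional subspace $\text{range}(\PiWdix)+\text{range}(\PiWdiy)$, the matrix $A_{G(x)}^\top A_{G(y)}$ is within spectral norm $\epsilon$ of $\Phi_{G(x),G(y)}:=\E_{g\sim\mathcal N(0,I)}[\sgn(\langle g,G(x)\rangle)\,\sgn(\langle g,G(y)\rangle)\,gg^\top]$, and $A_{G(x)}^\top A_{G(x)}$ is within $\epsilon$ of the identity there.

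I would first verify that the two conditions hold with the stated probabilities. The WDC is proved exactly as in \cite{Hand2017}: concentrate $(\diag(W_i u>0)W_i)^\top(\diag(W_i v>0)W_i)$ around $Q_{u,v}$ for fixed $u,v$, take an $\epsilon$-net over the unit pair $(\hat u,\hat v)$, and use a Lipschitz estimate; this needs $n_i\gtrsim n_{i-1}\log n_{i-1}$ and fails with probability $\tilde c\,n_i e^{-\gamma n_{i-1}}$ per layer. For RRCP, the map $x\mapsto(\PiWdix,\sgn(AG(x)))$ is piecewise constant with at most $(n_1\cdots n_d\,m)^{O(k)}$ pieces — the ReLU arrangement refines $\R^k$ into $\le\prod_i n_i^{\,k}$ regions on each of which $G$ is linear, and the $m$ measurement hyperplanes pulled back through $G$ refine further by a factor $\le m^k$; on each piece the relevant subspace is fixed, so matrix Bernstein for a sum of $m$ sub-exponential rank-one terms on an $O(k)$-dimensional subspace, combined with a union bound over pieces and pairs of pieces, yields the claim once $m\gtrsim\epsilon^{-2}k\log(n_1\cdots n_d)$ — the extra factor $d$ in the hypothesis absorbing layerwise constants and net resolution — with failure probability $\tilde\gamma\,m^{4k}e^{-Cm}$.

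With both conditions in force, the next step is to show that the prescribed direction $v_{x,x_0}=(\PiWdix)^\top A_{G(x)}^\top(|AG(x)|-|AG(x_0)|)$ concentrates, uniformly over nonzero $x,x_0$, around a deterministic $h_{x,x_0}$ depending only on $\|x\|,\|x_0\|,\hat x,\hat x_0$. Writing $|AG(x)|=A_{G(x)}G(x)$, RRCP replaces $(\PiWdix)^\top A_{G(x)}^\top A_{G(x)}G(x)$ by $(\PiWdix)^\top(\PiWdix)x$ and $(\PiWdix)^\top A_{G(x)}^\top A_{G(x_0)}G(x_0)$ by $(\PiWdix)^\top\Phi_{G(x),G(x_0)}G(x_0)$; since $\Phi_{G(x),G(x_0)}$ is a multiple of the identity plus a rank-two piece in $\mathrm{span}(G(x),G(x_0))$, the whole vector is a combination of $(\PiWdix)^\top(\PiWdix)x$ and $(\PiWdix)^\top(\PiWdixo)x_0$. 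Iterating the WDC bound through the $d$ layers (the product lemma of \cite{Hand2017}) turns these into explicit matrices that are $2^{-d}$ times a combination of $I$ and $M_{\hat x\leftrightarrow\hat x_0}$ with coefficients built from $\{\bar\theta_i\}_{i\le d}$, turns $\|G(x)\|$ into $(1+o(1))2^{-d/2}\|x\|$, and turns $\angle(G(x),G(x_0))$ into $\bar\theta_d$, the $d$-fold iterate of $\theta\mapsto\cos^{-1}\!\big(\tfrac{(\pi-\theta)\cos\theta+\sin\theta}{\pi}\big)$ at $\angle(x,x_0)$; accumulated errors are polynomial in $d$ times a fractional power of $\epsilon$, matching $K_1d^8\epsilon^{1/4}\le1$. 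This produces a closed form for $h_{x,x_0}$ as a combination of $x$ and $x_0$.

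The crux — and the step I expect to be the main obstacle — is the geometric lemma: $\|h_{x,x_0}\|\gtrsim\mathrm{poly}(d)^{-1}\|x_0\|$ for every nonzero $x\notin\mathcal B(x_0,K_2d^3\epsilon^{1/4}\|x_0\|)\cup\mathcal B(-\rho_dx_0,K_2d^{14}\epsilon^{1/4}\|x_0\|)$. This requires analyzing the closed form over the two-dimensional slice parametrized by $(\|x\|/\|x_0\|,\angle(x,x_0))$, proving that $h$ vanishes only near $x=x_0$ and near one negative multiple $x=-\rho_dx_0$, and quantifying the angle collapse $\bar\theta_i\to0$ that forces $\rho_d\to1$ as $d\to\infty$; the far larger radius $d^{14}$ around $-\rho_dx_0$ is the delicate point, reflecting near-cancellation of the coefficients there. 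To finish: where $f$ is differentiable $\nabla f(x)=v_{x,x_0}$ so $D_{-v_{x,x_0}}f(x)=-\|v_{x,x_0}\|^2$, and at the measure-zero nonsmooth points the ReLU- and sign-flip contributions to $D_{-v_{x,x_0}}f(x)$ are nonpositive, so $D_{-v_{x,x_0}}f(x)\le-(1-o(1))\|v_{x,x_0}\|^2<0$ using $\|v_{x,x_0}\|\ge\|h_{x,x_0}\|-(\text{concentration error})>0$ off the two balls; at the origin, positive homogeneity of $G$ gives $f(tx)=\tfrac12 t^2\| |AG(x)| \|^2-t\langle |AG(x)|,|AG(x_0)| \rangle+\tfrac12\| |AG(x_0)| \|^2$ for $t>0$, so $D_xf(0)=-\langle |AG(x)|,|AG(x_0)| \rangle<0$ since both vectors are entrywise positive almost surely, which also makes $0$ a strict local maximum. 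Union-bounding the failure probabilities of WDC and RRCP completes the argument.
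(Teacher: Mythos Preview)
Your overall strategy matches the paper's exactly: isolate the deterministic WDC and RRCP, verify them for Gaussian ensembles with the stated probabilities, concentrate $v_{x,x_0}$ around an explicit $h_{x,x_0}$ via the angle-iteration $\bar\theta_i=g(\bar\theta_{i-1})$, and then locate the near-zeros of $h$ in the $(\|x\|/\|x_0\|,\angle(x,x_0))$ plane. The decomposition of $\Phi_{G(x),G(x_0)}$ into a multiple of the identity plus a rank-two piece, and the identification of $\rho_d$ via $\bar\theta_0=\pi$, are precisely what the paper does.

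There is, however, a genuine gap in how you handle nondifferentiable points. You assert that ``the ReLU- and sign-flip contributions to $D_{-v_{x,x_0}}f(x)$ are nonpositive,'' but this is neither proved nor how the paper proceeds. The one-sided derivative along $-v_{x,x_0}$ may enter a linear piece of $G$, or a sign pattern of $AG(\cdot)$, \emph{different} from the one used to define $v_{x,x_0}$, and there is no a priori sign on the resulting discrepancy in the ReLU case. The paper sidesteps this entirely: for any direction $y$ it picks a sequence $x_n\to x$ of differentiable points with $D_yf(x)=\lim_n\nabla f(x_n)\cdot y$, so $D_{-v_{x,x_0}}f(x)=-\lim_n v_{x_n,x_0}\cdot v_{x,x_0}$; the point is that \emph{both} factors are within $O(d^3\sqrt\epsilon/2^d)\max(\|x\|,\|x_0\|)$ of the continuous $h_{\cdot,x_0}$ \emph{regardless of which linear piece they come from}, and continuity of $h$ then gives the strict negativity. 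Your argument would need this uniform-over-pieces concentration anyway, so you should use it directly rather than invoke an unproved sign claim.

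A smaller issue: at the origin your ``entrywise positive almost surely'' argument does not live on the right event. The conclusion must hold uniformly over all nonzero $x,x_0$ on the event where WDC and RRCP hold, not merely almost surely for fixed $x,x_0$. The paper computes $D_xf(0)=-\tfrac12\sum_k|\langle a_k,G(x)\rangle\langle a_k,G(x_0)\rangle|\le -\tfrac12\langle x,(\PiWdix)^\top A^\top A(\PiWdixo)x_0\rangle$ and then uses RRCP together with the WDC consequence $\langle G(x),G(x_0)\rangle\ge\tfrac{1}{4\pi}\,2^{-d}\|x\|\|x_0\|$ to get a quantitative, deterministic strict negativity on that event.
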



The result will be shown by proving the sufficiency of two deterministic conditions on the weights $W_i$ of our generative network and the measurement matrix $A$. 

\paragraph{Weight Distribution Condition.} The first condition quantifies the Gaussianity and spatial arrangement of the neurons in each layer. We say that $W$ satisfies the \textit{Weight Distribution Condition} (WDC) with constant $\epsilon > 0$ if for any non-zero $x,y \in \R^k$: \begin{align*}
\left\|W_{+,x}^\top W_{+,y} - Q_{x,y}\right\| \leq \epsilon\ \text{where}\ Q_{x,y}: = \frac{\pi - \theta_{x,y}}{2\pi} I_k + \frac{\sin \theta_{x,y}}{2\pi} M_{\hat{x} \leftrightarrow \hat{y}}.
\end{align*} Here $\theta_{x,y} = \angle (x,y)$ and $M_{\hat{x} \leftrightarrow \hat{y}}$ is the matrix that sends $\hat{x} \mapsto \hat{y}$, $\hat{y} \mapsto \hat{x}$, and $z \mapsto 0$ for any $z \in \text{span}(\{x,y\})^{\perp}$. If $W_{i,j} \sim \mathcal{N}(0,1/n)$, then an elementary calculation gives $\E\left[W_{+,x}^\top W_{+,y}\right] = Q_{x,y}$. \cite{Hand2017} proved that Gaussian $W$ satisfies the WDC with high probability (Lemma 1 in Appendix).


\paragraph{Range Restricted Concentration Property.}
The second condition is similar in the sense that it quantifies whether the measurement matrix behaves like a Gaussian when acting on the difference of pairs of vectors given by the output of the generative model. We say that $A \in \R^{m \times n}$ satisfies the \textit{Range Restricted Concentration Property} (RRCP) with constant $\epsilon > 0$ if for all non-zero $x,y \in \R^{k}$, the matrices $A_{G(x)}$ and $A_{G(y)}$ satisfy the following for all $x_1,x_2,x_3,x_4 \in \R^k$: \begin{align*}
|\langle (A_{G(x)}^\top A_{G(y)} - \Phi_{G(x),G(y)})(G(x_1) - G(x_2)),& G(x_3) - G(x_4)\rangle | \\
&\leq 7 \epsilon \|G(x_1) - G(x_2)\|\|G(x_3) - G(x_4)\|
\end{align*} where \begin{align*}
\Phi_{z,w} := \frac{\pi - 2 \theta_{z,w}}{\pi}I_{n} + \frac{2\sin \theta_{z,w}}{\pi}M_{\hat{z} \leftrightarrow \hat{w}}.
\end{align*} If $A_{i,j} \sim \mathcal{N}(0,1/m)$, then for any $z,w \in \R^{n}$, a similar calculation for Gaussian $W$ gives $\E\left[A_z^\top A_w\right] = \Phi_{z,w}$. In our work, we establish that Gaussian $A$ satisfies the RRCP with high probability. Please see Section 5.3 in the Appendix for a complete proof.


We emphasize that these two conditions are deterministic, meaning that our results could extend to non-Gaussian distributions. We now state our main deterministic result:

\begin{thm} Fix $\epsilon > 0$ such that $K_1d^8 \epsilon^{1/4} \leq 1$ and let $d \geq 2$. Suppose that $G$ is such that $W_i$ satisfies the WDC with constant $\epsilon$ for all $i = 1,\dots,d$. Suppose $A \in \R^{m \times n_d}$ satisfies the RRCP with constant $\epsilon$. Then the same conclusion as Theorem 2 holds.
\end{thm}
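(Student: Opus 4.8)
The plan is to reduce the entire landscape claim to the behavior of one explicit continuous ``surrogate'' vector, following the concentration-to-a-deterministic-map template of \cite{Hand2017}, with the phaseless nonlinearity absorbed through the RRCP. Using $|AG(x)| = A_{G(x)}G(x)$, $|AG(x_0)| = A_{G(x_0)}G(x_0)$ and $G(x) = (\PiWdix)x$, first rewrite
\[
v_{x,x_0} \;=\; (\PiWdix)^\top A_{G(x)}^\top A_{G(x)}(\PiWdix)x \;-\; (\PiWdix)^\top A_{G(x)}^\top A_{G(x_0)}(\PiWdixo)x_0 .
\]
Then apply the WDC-based product estimates of \cite{Hand2017} (the appendix lemmas) to control $(\PiWdix)^\top(\PiWdiy)$ by the matrix product $\prod_{i=1}^{d} Q_{\theta_{i-1}}$, where $\theta_0 = \theta_{x,y}$ and $\theta_i = g(\theta_{i-1})$ for the angle-recursion map $g(\theta) = \cos^{-1}\!\big(\tfrac{\sin\theta + (\pi-\theta)\cos\theta}{\pi}\big)$; in particular this gives $\|(\PiWdix)x\|^2 = (1+O(d\sqrt\epsilon))2^{-d}\|x\|^2$, $\|\PiWdix\| = (1+O(d\sqrt\epsilon))2^{-d/2}$, and an explicit description of the action of $(\PiWdix)^\top(\PiWdixo)$ on $\mathrm{span}\{x,x_0\}$. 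Simultaneously, applying the RRCP to the pairs $(G(x),0)$ and $(G(x),G(x_0))$ — legitimate since $G(0)=0$ — replaces $A_{G(x)}^\top A_{G(x)}$ by $I_{n_d}$ and $A_{G(x)}^\top A_{G(x_0)}$ by $\Phi_{G(x),G(x_0)}$ up to additive $O(\epsilon)2^{-d}\|x\|$ once the prefactor $\|\PiWdix\|$ is accounted for, and the angle $\theta_{G(x),G(x_0)}$ entering $\Phi$ equals $g^{\circ d}(\theta_{x,x_0})$ by the WDC. Chaining everything yields $v_{x,x_0} = h_{x,x_0} + 2^{-d}\,O_1\!\big(\mathrm{poly}(d)\,\epsilon^{1/4}\,\|x_0\|\big)$ for an explicit $h_{x,x_0}$ that is continuous on $\{x\neq 0\}$, built only from $\|x\|$, $\|x_0\|$ and $\theta_{x,x_0}$, positively homogeneous of degree one jointly in $(x,x_0)$, exactly zero at $x=x_0$, and — because $g^{\circ d}$ drives even the angle $\pi$ to a small angle tending to $0$ — also vanishing at a point $x = -\rho_d x_0$ with $\rho_d\to 1$.

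Granting the surrogate, the two stated conclusions follow quickly. For the origin, positive homogeneity of $G$ gives $f(tx) = \tfrac12 t^2\||AG(x)|\|^2 - t\langle|AG(x)|,|AG(x_0)|\rangle + \tfrac12\||AG(x_0)|\|^2$ for $t\ge 0$, hence $D_x f(0) = -\langle|AG(x)|,|AG(x_0)|\rangle$; writing this as $\langle G(x), A_{G(x)}^\top A_{G(x_0)}G(x_0)\rangle$ and applying the RRCP gives $\langle|AG(x)|,|AG(x_0)|\rangle = \|G(x)\|\,\|G(x_0)\|\big(q(\psi)+O(\epsilon)\big)$ with $\psi=\theta_{G(x),G(x_0)}\in[0,\pi/2]$ and $q(\psi) = \tfrac{(\pi-2\psi)\cos\psi + 2\sin\psi}{\pi}$, which is non-increasing on $[0,\pi/2]$ with $q(\pi/2)=2/\pi>0$; since $\|G(x)\|,\|G(x_0)\|$ are bounded below by the WDC, $D_x f(0) < 0$ for every $x\neq 0$. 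For the descent direction, at points of differentiability $D_{-v_{x,x_0}}f(x) = -\|v_{x,x_0}\|^2$, so it suffices to prove $\|h_{x,x_0}\| \gtrsim 2^{-d}\,\mathrm{poly}(d)\,\epsilon^{1/4}\|x_0\|$ outside the two excised balls; at the measure-zero set where a ReLU or an absolute value sits on a boundary I would pass to one-sided derivatives of $\relu$ and $|\cdot|$ and check that the extra ``kink'' terms either have a helpful sign (coordinates with $(AG(x))_j=0$ carry the factor $-|AG(x_0)|_j\le 0$ against a nonnegative derivative) or are of lower order, as in the smooth-$G$ analysis of \cite{Hand2017}.

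The substantive step is therefore the lower bound $\|h_{x,x_0}\| \gtrsim 2^{-d}\,\mathrm{poly}(d)\,\epsilon^{1/4}\|x_0\|$ away from $\mathcal{B}(x_0,\cdot)$ and $\mathcal{B}(-\rho_d x_0,\cdot)$, and this is where essentially all of the work and all of the $d$-dependence of the constants lives. I would reduce to the plane $\mathrm{span}\{x,x_0\}$, parametrize by $r=\|x\|/\|x_0\|$ and $\theta = \theta_{x,x_0}\in[0,\pi]$, and evaluate $h$ in a basis adapted to the angle recursion, so that $h$ becomes an explicit two-component function of $(r,\theta)$ combining the product of the $2\times 2$ blocks $\tfrac{\pi-\theta_i}{2\pi}I + \tfrac{\sin\theta_i}{2\pi}M$ with the $\Phi_{G(x),G(x_0)}$-contribution; its zero set must then be pinned down. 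For $\theta$ bounded away from $\pi$, $g$ is a mild contraction toward $0$ and a direct estimate forces any near-zero of $h$ to satisfy $r\approx 1$, $\theta\approx 0$, i.e. $x\approx x_0$, with only a cubic loss in $d$, which is the source of the $d^3$ radius. For $\theta$ near $\pi$ the situation is genuinely more delicate: $g(\pi)=\pi/2$, so the first layer ``folds'' the antipodal direction to angle near $\pi/2$ and the next $d-1$ layers contract it toward $0$; the vanishing of the $\hat{x}$-component of $h$ then becomes a fixed-point-type equation characterizing the scalar $\rho_d$ (with $\rho_d\to 1$ precisely because $g^{\circ d}(\pi)\to 0$), and propagating the $\mathrm{poly}(d)\,\epsilon^{1/4}$-sized errors through this longer fold-then-contract chain amplifies the $d$-polynomial prefactors and inflates the radius to $d^{14}\epsilon^{1/4}\|x_0\|$. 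By contrast, the concentration of the first paragraph is routine given the appendix lemmas and the origin and differentiable-point arguments are short, so this case analysis near $\theta\in\{0,\pi\}$, together with the uniform control of $g^{\circ d}$ and the bookkeeping of the $d$-polynomial constants, is the main obstacle.
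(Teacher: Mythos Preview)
Your proposal is correct and matches the paper's own proof: use the RRCP and then the WDC to concentrate $v_{x,x_0}$ around the explicit continuous surrogate $h_{x,x_0}$, argue the directional derivative is negative wherever $\|h_{x,x_0}\|$ exceeds the concentration error, and then carry out exactly the planar case analysis in $(r,\theta)=(\|x\|/\|x_0\|,\theta_{x,x_0})$ that you describe to pin the approximate zeros of $h$ to the two balls (this is the paper's Proposition~1, and indeed where all the $d$-polynomial constants live). Two small technical slips to clean up: the concentration bound is $\|v_{x,x_0}-h_{x,x_0}\|=O(d^3\sqrt{\epsilon}\,2^{-d})\max(\|x\|,\|x_0\|)$ and the $\epsilon^{1/4}$ only appears after a second square-root loss inside the planar analysis, and the replacement $A_{G(x)}^\top A_{G(x)}\approx I_{n_d}$ should be obtained from the RRCP with $y=x$ (so that $\Phi_{G(x),G(x)}=I_{n_d}$) rather than via the formally excluded pair $(G(x),0)$.
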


\subsection{Proof sketch for Theorem 2}

Before we outline the proof of Theorem 2, we specify the descent direction $v_{x,x_0}$. For any $x \in \R^k$ where $f$ is differentiable, we have that \begin{align*}
\nabla f(x) = (\Pi_{i=d}^1 W_{i,+,x})^\top A^\top A (\Pi_{i=d}^1 W_{i,+,x})x - (\Pi_{i=d}^1 W_{i,+,x})^\top A_{G(x)}^\top A_{G(x_0)}(\Pi_{i=d}^1 W_{i,+,x_0})x_0.
\end{align*} This is precisely the descent direction specified in DPR, expanded with our notation. When $f$ is not differentiable at $x$, choose a direction $w$ such that $f$ is differentiable at $x + \delta w$ for sufficiently small $\delta > 0$. Such a direction $w$ exists by the piecewise linearity of the generative model $G$. Hence we define our descent direction $v_{x,x_0}$ as \begin{align*}
v_{x,x_0} = \begin{cases}
\nabla f(x) & f\ \text{differentiable at } x\in \R^k \\
\lim_{\delta \rightarrow 0^+} \nabla f(x + \delta w) & \text{otherwise.}
\end{cases}
\end{align*} For non-zero $x,x_0 
\in \R^k$, let $\theta_0 = \angle (x,x_0)$. To understand how the map $x \mapsto \relu(Wx)$ distorts angles in expectation, define $g : \R \rightarrow \R$ by $$g(\theta) = \cos^{-1}\left(\frac{\cos\theta(\pi - \theta)+\sin\theta}{\pi}
\right).$$ Then for $i \geq 1$, set $\overline{\theta}_i = g(\overline{\theta}_{i-1})$ where $\overline{\theta}_0 = \theta_0$. The following is a sketch of the proof of Theorem 2:
\begin{itemize}
\item By the WDC and RRCP, we have that the descent direction $v_{x,x_0}$ concentrates around a particular vector $\overline{v}_{x,x_0}$ defined by \begin{align*}
\overline{v}_{x,x_0} & := (\PiWdix)^\top(\PiWdix)x - (\PiWdix)^\top \Phi_{G(x),G(x_0)} (\PiWdixo)x_0. 
\end{align*}
\item The WDC establishes that $\overline{v}_{x,x_0}$ concentrates uniformly for all non-zero $x,x_0 \in \R^k$ around a continuous vector $h_{x,x_0}$ defined by \begin{align*}
h_{x,x_0} & : = - \frac{\|x_0\|}{2^d}\left(\frac{\pi - 2\overline{\theta}_d}{\pi}\right)\left(\prod_{i=0}^{d-1}\frac{\pi - \overline{\theta}_i}{\pi}\right)\hat{x}_0 \\
& + \frac{1}{2^d}\left[\|x\| - \|x_0\|\left( \frac{2\sin \overline{\theta}_d}{\pi} - \left(\frac{\pi - 2\overline{\theta}_d}{\pi}\right) \sum_{i=0}^{d-1} \frac{\sin \overline{\theta}_i}{\pi}\left(\prod_{j=i+1}^{d-1} \frac{\pi - \overline{\theta}_i}{\pi}\right)\right)\right]\hat{x}.
\end{align*}
This holds uniformly for all non-zero $x,x_0 \in \R^k$.
\item A direct analysis shows that $h_{x,x_0}$ is only small in norm for $x \approx x_0$ and $x \approx -\rho_d x_0$. See section 5.2 for a complete proof. Since $v_{x,x_0} \approx \overline{v}_{x,x_0} \approx h_{x,x_0}$, $v_{x,x_0}$ is also only small in norm in neighborhoods around $x_0$ and $-\rho_d x_0$, establishing Theorem 3.
\item Gaussian $W_i$ and $A$ satisfy the WDC and RRCP with high probability (Lemma 1 and Proposition 2 in Appendix). 
\end{itemize}

Theorem 2 is a combination of Lemma 1, Proposition 2, and Theorem 3. The full proofs of these results can be found in the Appendix.

\section{Numerical Experiments}
In this section, we investigate the use of enforcing generative priors on phase retrieval tasks. We used the proposed DPR Gradient Method (Algorithm 1) and compared our results with three sparse phase retrieval algorithms: the sparse truncated amplitude flow algorithm (SPARTA) \cite{SPARTA}, Thresholded Wirtinger Flow (TWF) \cite{Cai2016}, and the alternating minimization algorithm CoPRAM \cite{Hedge2017}. For the remainder of this section, we will refer to the DPR Gradient Method as DPR.

\subsection{Experiments for Gaussian signals}

We first consider synthetic experiments using Gaussian measurements on Gaussian signals. In particular, we considered a two layer network given by $G(x) = \relu(W_2\relu(W_1x))$ where each $W_i$ has i.i.d. $\mathcal{N}(0,1)$ entries for $i = 1,2$. We set $k = 10$, $n_1 = 500$,  and $n_2 = 1000$.  We let the entries of $A \in \R^{m\times n_2}$ and $x_0 \in \R^k$ be i.i.d. 
$\mathcal{N}(0,1)$.  We ran DPR for $25$ random instances of $(A,W_1,W_2,x_0)$. A reconstruction $x^{\star}$ is considered successful if the relative error $\|x^{\star}-x_0\|/\|x_0\| \leq 10^{-4}$. We also compared our results with SPARTA, TWF, and CoPRAM. In this setting, we chose a $k=10$-sparse $y_0 \in \R^{n_2}$, where the nonzero coefficients are i.i.d. $\mathcal{N}(0,1)$.  As before, we ran each algorithm with $25$ random instances of $(A,y_0)$. Figure \ref{fig:phase_plot} displays the percentage of successful trials for different ratios $m/n$ where $n = n_2 = 1000$ and $m$ is the number of measurements. Since SPARTA was the most successful amongst the sparse methods, we also experimented with lower sparsity levels $k = 3,5$. We note that DPR achieves nearly the same empirical success rate of recovering a $10$-dimensional latent code as SPARTA in recovering a $3$-sparse $1000$-dimensional signal. For more comparable sparsity levels to the latent code dimension, DPR outperforms each alternative method with far fewer measurements.

\begin{figure}[!htb]
  \centering
  \includegraphics[scale = 0.12]{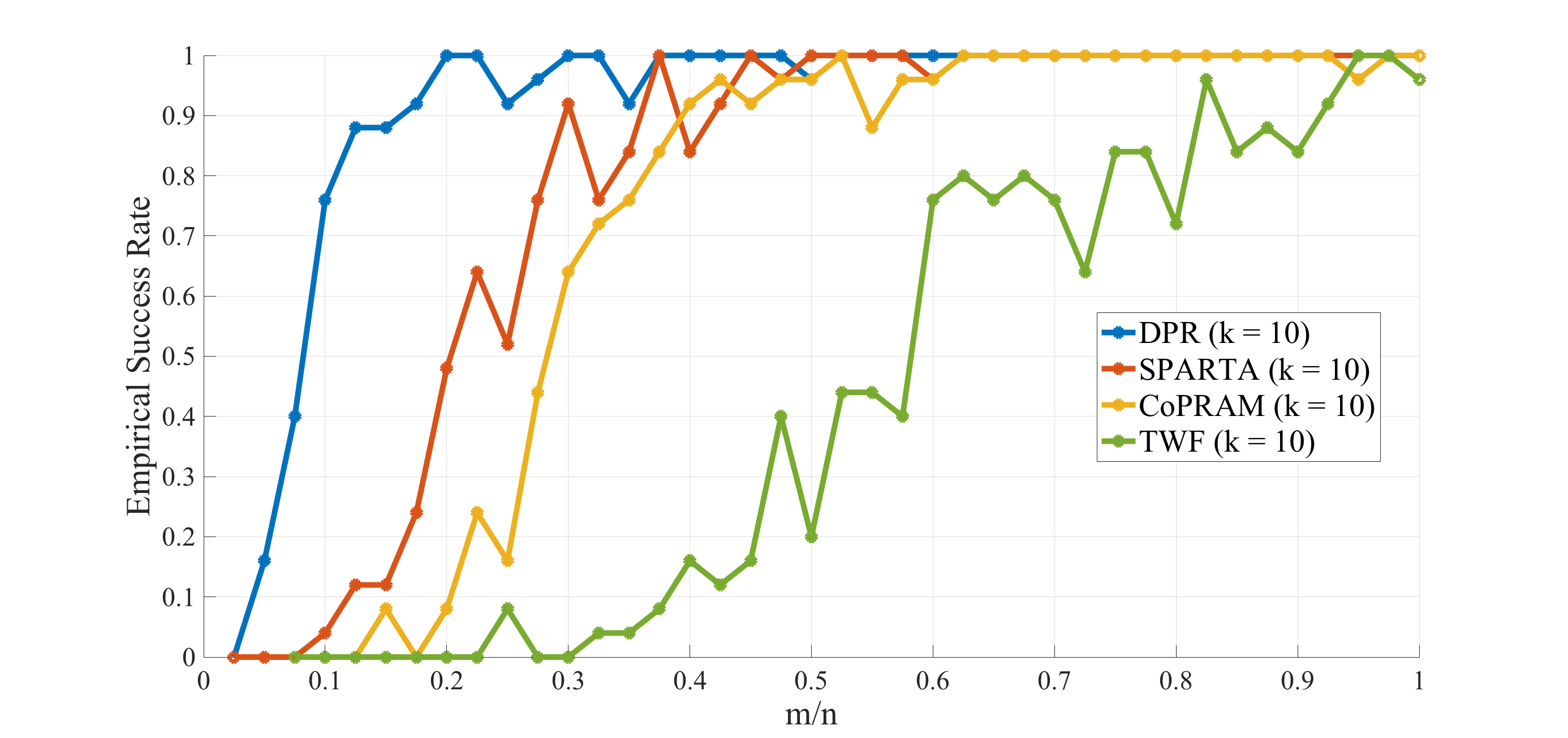}
  \includegraphics[scale = 0.12]{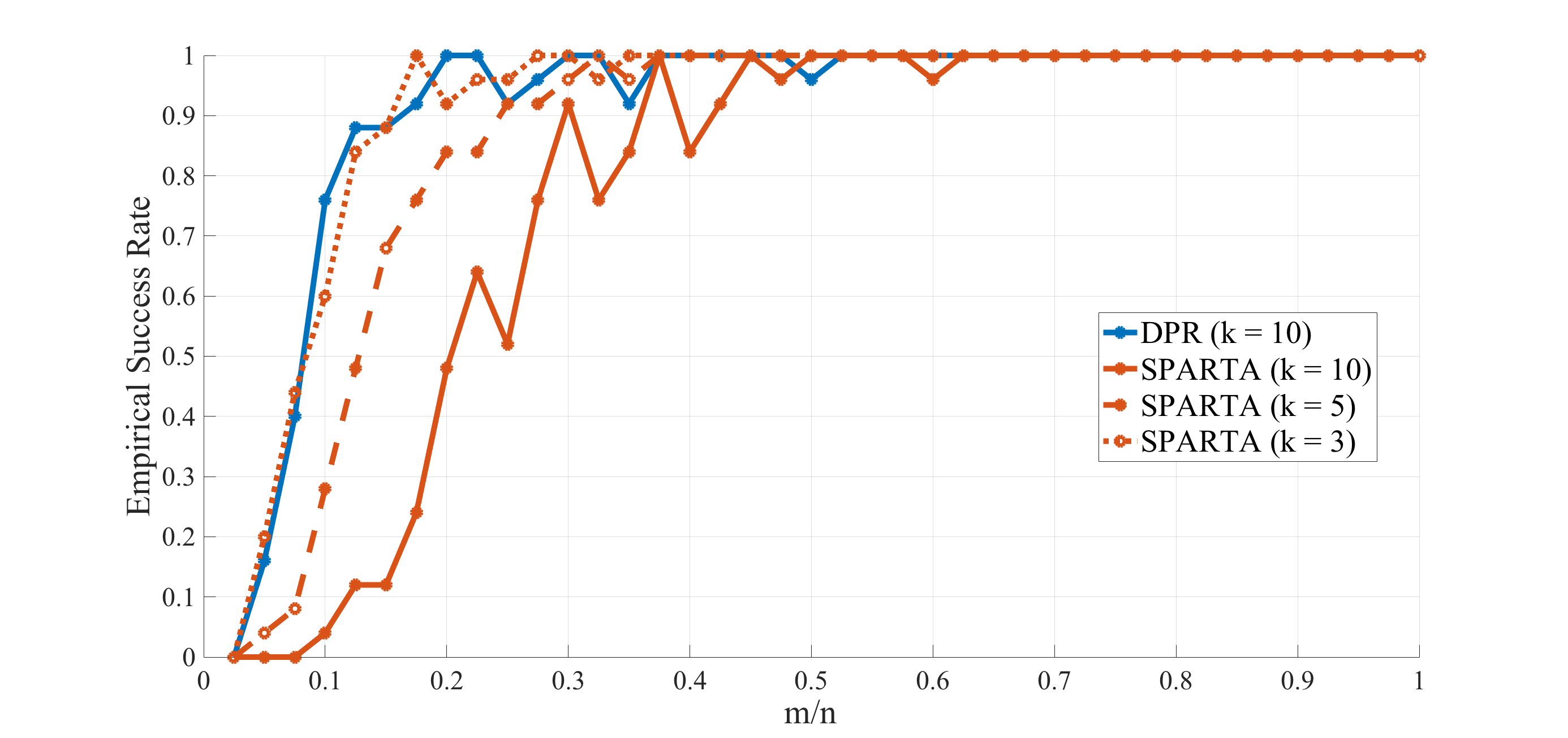}
  \caption{Empirical success rate with ratios $m/n$ where $n = 1000$ with DPR's latent code dimension set to $k = 10.$ The sparsity level for each recovery algorithm is initially set to $k = 10$ (top) while we experiment with SPARTA's sparsity levels from $k = 3,5,10$ (bottom).}
  \label{fig:phase_plot}
\end{figure}

\subsection{Experiments for MNIST and CelebA}

We next consider image recovery tasks, where we use two different generative models for the MNIST and CelebA datasets. In each task, the goal is to recover an image $y_0$ given $|Ay_0|$ where $A \in \R^{m \times n}$ has i.i.d. $\mathcal{N}(0,1/m)$ entries. We found an estimate image $G(x^{\star})$ in the range of our generator via gradient descent, using the Adam optimizer \cite{ADAM}. Empirically, we noticed that DPR would typically only negate the latent code (Lines \ref{alg:cond1}--\ref{alg:cond2}) at the initial iterate, if necessary. Hence we use a modified version of DPR in these image experiments: we ran two sessions of gradient descent for a random initial iterate $x_1$ and its negation $-x_1$ and chose the most successful reconstruction.

In the first image experiment, we used a pretrained Variational Autoencoder (VAE) from \cite{Price2017} that was trained on the MNIST dataset \cite{MNIST}. This dataset consists of $60,000$ $28 \times 28$ images of handwritten digits. As described in \cite{Price2017}, the recognition network is of size $784 - 500 - 500 - 20$ while the generator network is of size $20 - 500 - 500 - 784$. The latent code space dimension is $k = 20$.

For the sparse phase retrieval methods, we performed sparse recovery by transforming the images using the Daubechies-4 Wavelet Transform. More specifically, we considered the vector of wavelet coefficients $v_0 = \Psi y_0$ which is compressible. Since the wavelet transform $\Psi$ is orthogonal and the measurement matrix $A$ is Gaussian, $A\Psi^\top$ and $A$ are equal in distribution due to the rotational invariance of $A$. Hence, instead of recovering $v_0$ from measurements $|Ay_0| = |A\Psi^\top v_0|$, we recover $v_0$ from $|Av_0|$. We then take an inverse wavelet transform to retrieve the approximate image. To appropriately take the wavelet transform, we pad the images with zeros uniformly around the border so that they are of size $32 \times 32$. If a sparse recovery algorithm required a sparsity parameter, we ran the algorithm with a range of sparsity parameter values, choosing the best reconstruction in terms of lowest reconstruction error. The resulting images generated by our algorithm were also uniformly padded with zeros around the border to obtain $32 \times 32$ images.

\begin{figure}[h]
  \centering
  \includegraphics[scale = 0.12]{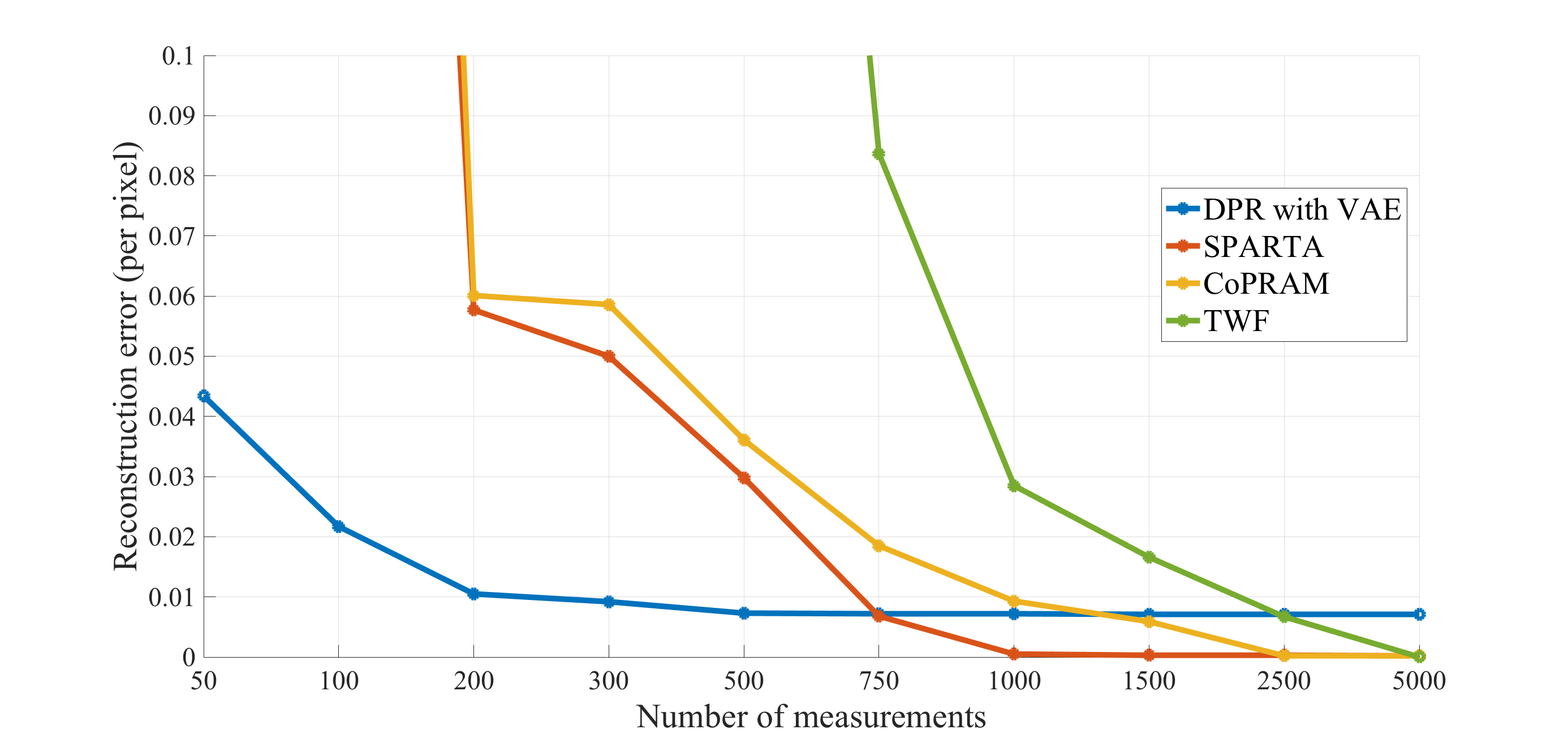}
  \includegraphics[scale = 0.12]{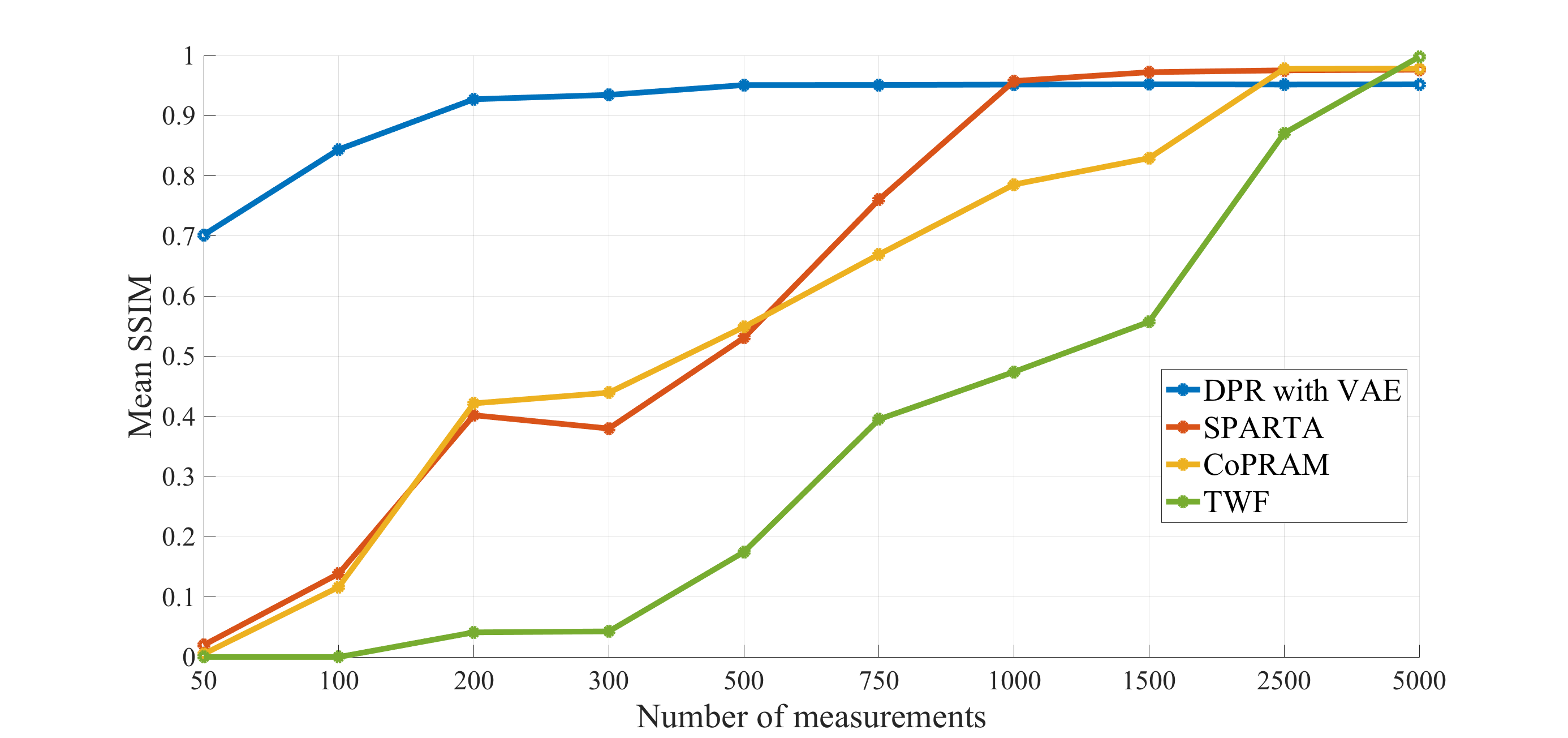}
  \caption{Each algorithm's average reconstruction error (top) and mean SSIM (bottom) over $10$ images from the MNIST test set for different numbers of measurements.}
  \label{fig:mnist_quant_results}
\end{figure}

We attempted to reconstruct $10$ images from the MNIST test set. We allowed $5$ random restarts for each algorithm and recorded the result with the least $\ell_2$ reconstruction error per pixel. In addition, we calculated the Structural Similarity Index Measure (SSIM) \cite{Wang2004} for each reconstruction and computed the average for various numbers of measurements. The results in Figure \ref{fig:mnist_quant_results} demonstrate the success of our algorithm with very few measurements. For $200$ measurements, we can achieve accurate recovery with a mean SSIM value of over $0.9$ while other algorithms require $1000$ measurements or more. In terms of reconstruction error, our algorithm exhibits recovery with $200$ measurements comparable to the alternatives requiring $750$ measurements or more, which is where they begin to succeed.

We note that while our algorithm succeeds with fewer measurements than the other methods, our performance, as measured by per-pixel reconstruction error, saturates as the number of measurements increases since our reconstruction accuracy is ultimately bounded by the generative model's representational error. As generative models improve, their representational errors will decrease.  Nonetheless, as can be seen in the reconstructed digits in Figure \ref{fig:mnist_qualitative_results}, the recoveries are semantically correct (the correct digit is legibly recovered) even though the reconstruction error does not decay to zero.  In applications, such as MRI and molecular structure estimation via X-ray crystallography, semantic error measures would be a more informative estimates of recovery performance than per-pixel error measures.

\begin{figure}[h]
  \centering
    \includegraphics[scale = 0.1]{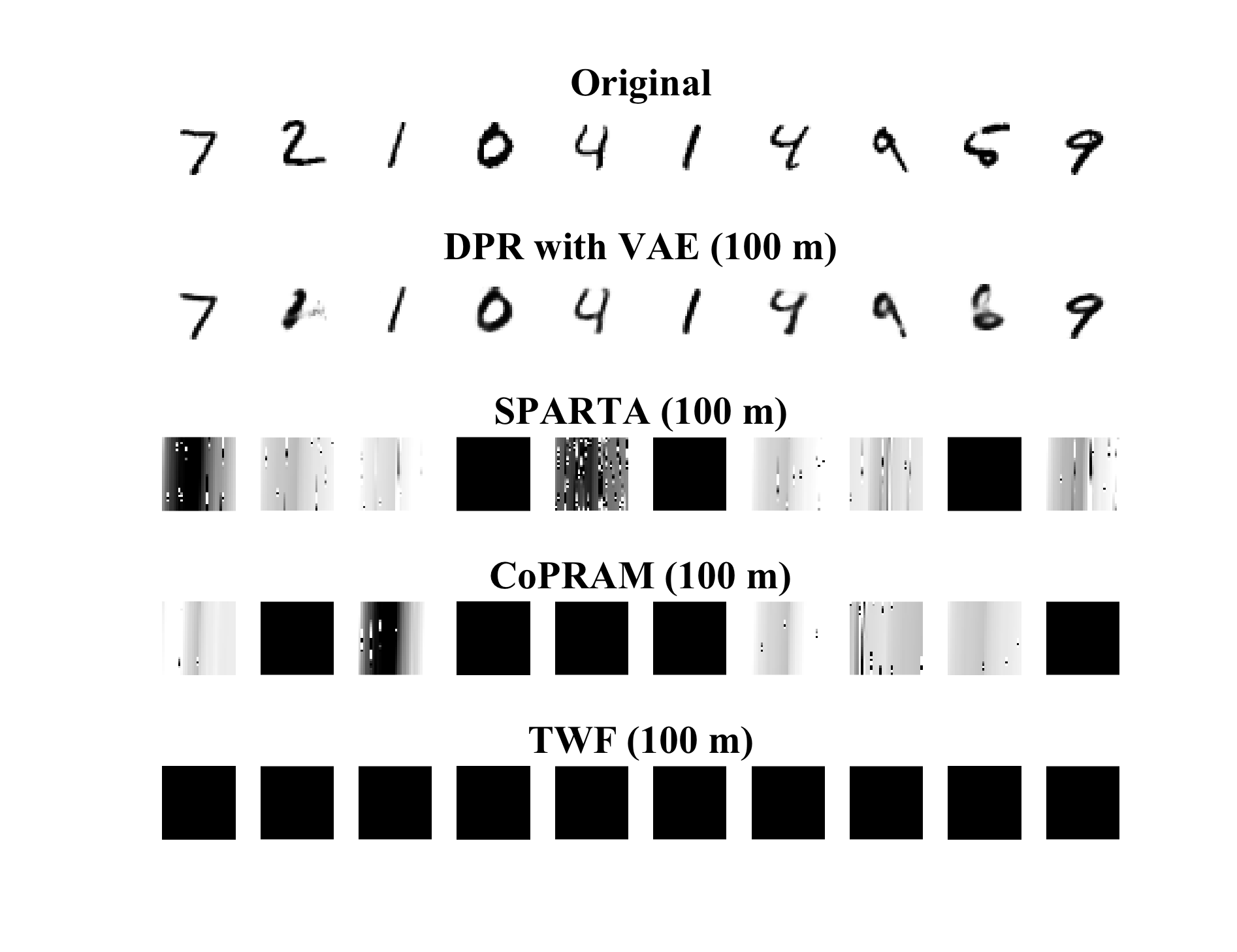}
  \includegraphics[scale = 0.1]{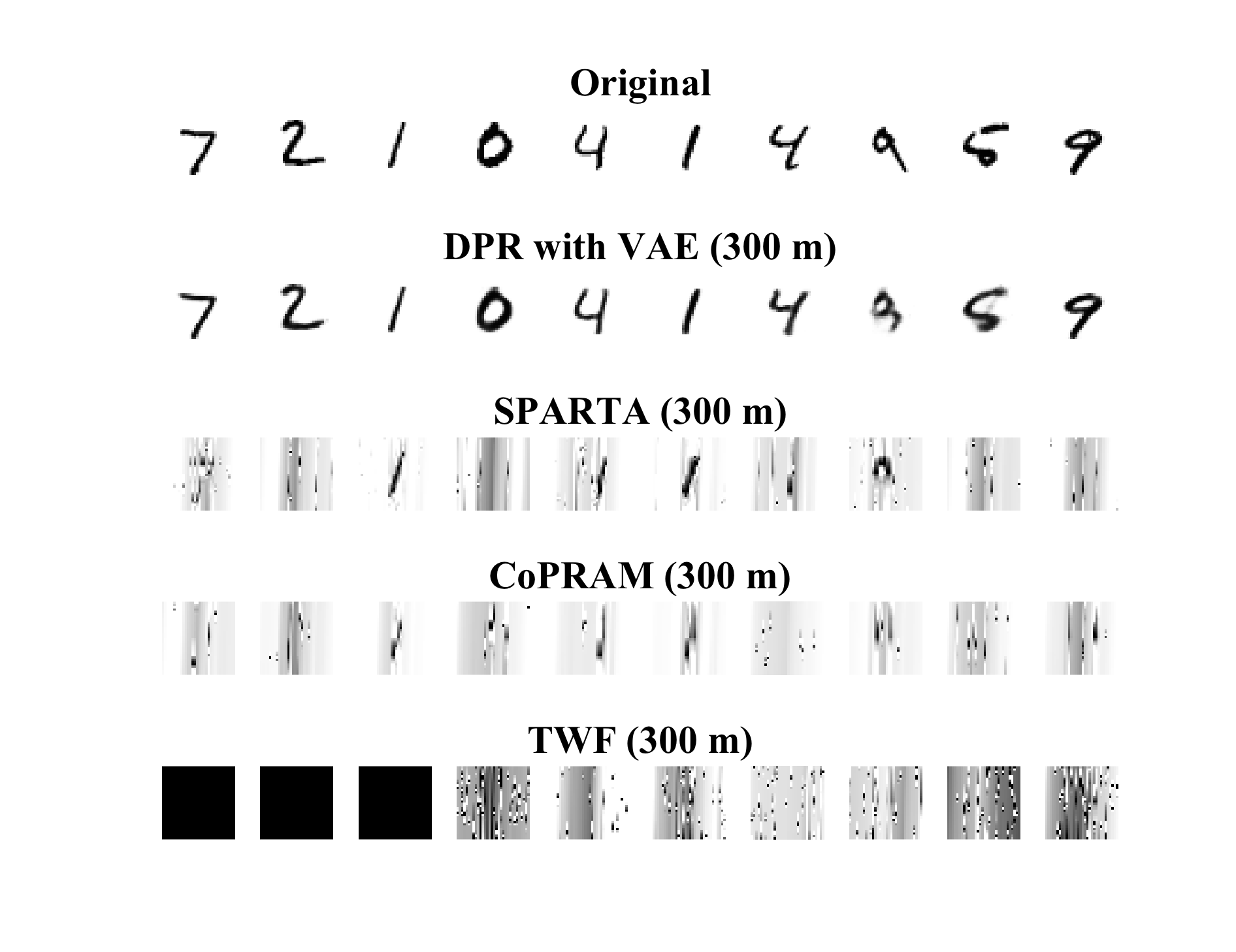}
  \caption{Each algorithm's reconstructed images with $100$ measurements (left) and $300$ measurements (right). If an image is blank, then the reconstruction error between the blank image and the original image was lower than that of the algorithm's reconstructed image and the original image. We note that even for as few as $100$ measurements, nearly all of DPR's reconstructions are semantically correct.}
  \label{fig:mnist_qualitative_results}
\end{figure}


In the second experiment, we used a pretrained Deep Convolutional Generative Adversarial Network (DCGAN) from \cite{Price2017} that was trained on the CelebA dataset \cite{CelebA}. This dataset consists of $200,000$ facial images of celebrities. The RGB images were cropped to be of size $64 \times 64$, resulting in vectorized images of dimension $64 \times 64 \times 3 = 12288.$ The latent code space dimension is $k = 100$. We allowed $2$ random restarts. We ran numerical experiments with the other methods and they did not succeed at measurement levels below $5000$. Figure \ref{fig:celebA_results_500} showcases our results in reconstructing $10$ images from the DCGAN's test set with $500$ measurements. Even for a limited number of measurements, our algorithm's reconstructions are faithful representations of the original images.

\vspace{-0.7mm}

\begin{figure}[!htb]
  \centering
  \includegraphics[scale = 0.35]{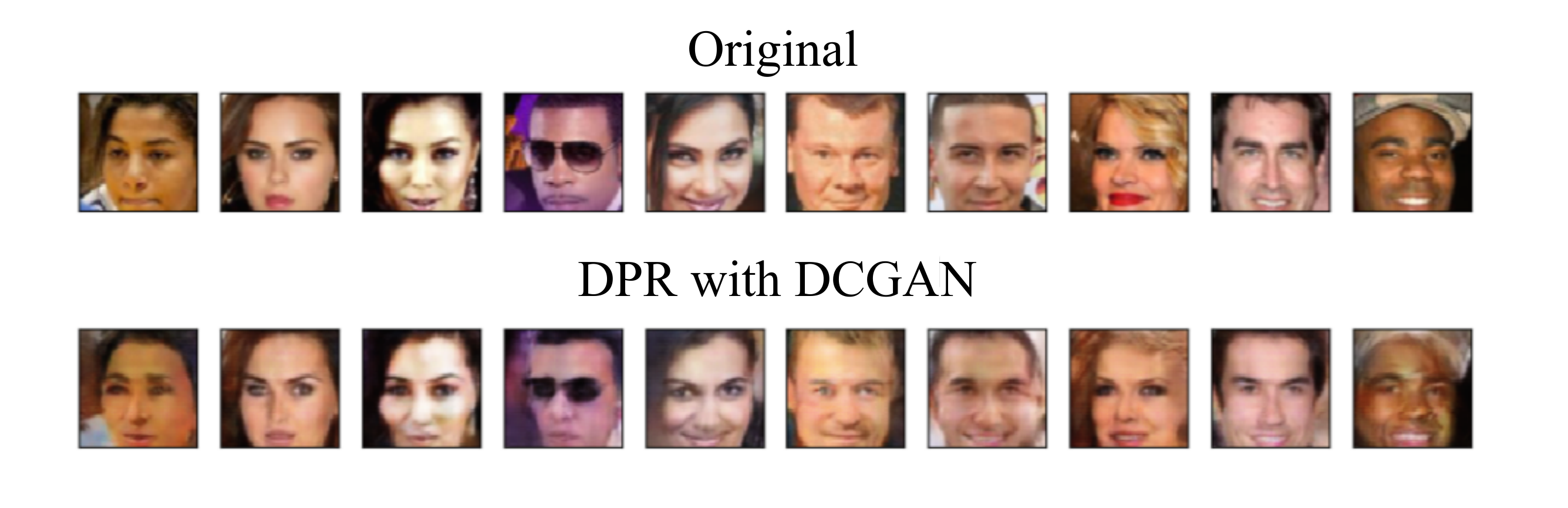}
  \caption{$10$ reconstructed images from celebA's test set using DPR with $500$ measurements.}
  \label{fig:celebA_results_500}
\end{figure}

\newpage

\bibliographystyle{plain}

\bibliography{main.bib}

\begin{thebibliography}{10}

\bibitem{GaussNets}
Sanjeev Arora, Yingyu Liang, and Tengyu Ma.
\newblock Why are deep nets reversible: A simple theory with implications for
  training.
\newblock {\em CoRR}, abs/1511.05653, 2015.

\bibitem{Romberg2015}
Sohail Bahmani and Justin Romberg.
\newblock Efficient compressive phase retrieval with constrained sensing
  vectors.
\newblock {\em Advances in Neural Information Processing Systems (NIPS 2015)},
  pages 523--531, 2015.

\bibitem{Baraniuk_lemma}
Richard Baraniuk, Mark Davenport, Ronald DeVore, and Michael Wakin.
\newblock A simple proof of the restricted isometry property for random
  matrices.
\newblock {\em Constructive Approximation}, 28(3):253--263, 2008.

\bibitem{Price2017}
Ashish Bora, Alexandros~G. Dimakis, Ajil Jalal, and Eric Price.
\newblock Compressed sensing using generative models.
\newblock {\em arXiv preprint arXiv:1703.03208}, 2017.

\bibitem{diffImaging}
Oliver Bunk, Ana Diaz, Franz Pfeiffer, Christian David, Bernd Schmitt,
  Dillip~K. Satapathy, and J.~Friso van~der Veen.
\newblock Diffractive imaging for periodic samples: Retrieving one-dimensional
  concentration profiles across microfluidic channels.
\newblock {\em Acta Crystallographica Section A: Foundations of
  Crystallography}, 63(4):306--314, 2007.

\bibitem{Cai2016}
Tony Cai, Xiaodong Li, and Zongming Ma.
\newblock Optimal rates of convergence for noisy sparse phase retrieval via
  thresholded wirtinger flow.
\newblock {\em The Annals of Statistics}, 44(5):2221--2251, 2016.

\bibitem{Candes2015}
Emmanuel~J. Candes, Xiaodong Li, and Mahdi Soltanolkotabi.
\newblock Phase retrieval via wirtinger flow: Theory and applications.
\newblock {\em IEEE Transactions on Information Theory}, 61(4):195--2007, 2017.

\bibitem{CandStroh2013}
Emmanuel~J. Candes, Thomas Strohmer, and Vladislav Voroninski.
\newblock Phaselift: Exact and stable signal recovery from magnitude
  measurements via convex programming.
\newblock {\em Comm. Pure Applied Math}, 66(8):1241--1274, 2013.

\bibitem{Wang2017}
Yonina~C. Eldar, Georgios~B. Giannakis, and Gang Wang.
\newblock Solving systems of random quadratic equations via truncated amplitude
  flow.
\newblock {\em IEEE Transactions on Information Theory}, 23(26):773--794, 2017.

\bibitem{Fienup87}
C.~Fienup and J.~Dainty.
\newblock Phase retrieval and image reconstruction for astronomy.
\newblock {\em Image Recovery: Theory and Application}, pages 231--275, 1987.

\bibitem{FienupAlg}
J.R. Fienup.
\newblock Phase retrieval algorithms: A comparison.
\newblock {\em Applied Optics}, 21:2758--2768, 1982.

\bibitem{GerchSaxAlg}
R.W. Gerchberg and W.O. Saxton.
\newblock A practical algorithm for the determination of phase from image and
  diffraction plane pictures.
\newblock {\em Optik}, 35:237--246, 1972.

\bibitem{Phasemax}
Tom Goldstein and Christoph Struder.
\newblock Phasemax: Convex phase retrieval via basis pursuit.
\newblock {\em arXiv preprint arXiv: 1610.07531}, 2016.

\bibitem{Rivenson2017}
Harun G\"unaydin, Da~Tend, Yair Rivenson, Aydogan Ozcan, and Yibo Zhang.
\newblock Phase recovery and holographic image reconstruction using deep
  learning in neural networks.
\newblock {\em arXiv preprint arXiv:1705.04286}, 2017.

\bibitem{HV2016}
Paul Hand and Vladislav Voroninski.
\newblock Compressed sensing from phaseless gaussian measurements via linear
  programming in the natural parameter space.
\newblock {\em arXiv preprint arXiv:1611.05985}, 2016.

\bibitem{Hand2017}
Paul Hand and Vladislav Voroninski.
\newblock Global guarantees for enforcing generative priors by empirical risk.
\newblock {\em arXiv preprint arXiv:1705.07576}, 2017.

\bibitem{Harrison93}
Robert~W. Harrison.
\newblock Phase problem in crystallography.
\newblock {\em J. Opt. Soc. Am. A}, 10(5):1046--1055, 1993.

\bibitem{SparsePRoverview}
Kishore Jaganathan, Samet Oymak, and Babak Hassibi.
\newblock Sparse phase retrieval: Convex algorithms and limitations.
\newblock {\em Information Theory Proceedings (ISIT)}, 2013 IEEE International
  Symposium on:1022--1026, 2013.

\bibitem{Hedge2017}
Gauri Jagatap and Chinmay Hegde.
\newblock Sample-efficient algorithms for recovering structured signals from
  magnitude-only measurements.
\newblock {\em Neural Information Processing Systems (NIPS 2017)}, 2017.

\bibitem{Jain2013}
Prateek Jain, Praneeth Netrapalli, and Sujay Sanghavi.
\newblock Phase retrieval using alternating minimization.
\newblock {\em Neural Information Processing Systems (NIPS 2013)}, 2013.

\bibitem{ADAM}
Diederik Kingma and Jimmy~Ba. Adam.
\newblock Adam: A method for stochastic optimization.
\newblock {\em arXiv preprint}, arXiv:1412.6980, 2014.

\bibitem{MNIST}
Yann LeCun, Leon Bottou, Yoshua Bengio, and Patrick Haffner.
\newblock Gradient-based learning applied to document recognition.
\newblock {\em Proceedings of the IEEE}, 86(11):2278--2324, 1998.

\bibitem{Lee2018}
Jaehoon Lee, Yasaman Bahri, Roman Novak, Samuel~S. Schoenholz, Jeffrey
  Pennington, and Jascha Sohl-Dickstein.
\newblock Deep neural networks as gaussian processes.
\newblock {\em International Conference on Learning Representations (ICLR
  2018)}, 2018.

\bibitem{Voroninski2013}
Xiaodong Li and Vladislav Voroninski.
\newblock Sparse signal recovery from quadratic measurements via convex
  programming.
\newblock {\em SIAM Journal on Mathematical Analysis}, 45(5):3019--3033, 2013.

\bibitem{CelebA}
Ziwei Liu, Ping Luo, Xiaogang Wang, and Xiaoou Tang.
\newblock Deep learning face attributes in the wild.
\newblock {\em Proceedings of the IEEE International Conference on Computer
  Vision}, pages 3730--3738, 2015.

\bibitem{MicroscopyPR}
Jianwei Miao, Tetsuya Ishikawa, Qun Shen, and Thomas Earnest.
\newblock Extending x-ray crystallography to allow the imaging of
  noncrystalline materials, cells, and single protein complexes.
\newblock {\em Annu. Rev. Phys. Chem.}, 59:387--410, 2008.

\bibitem{Millane90}
RP~Millane.
\newblock Phase retrieval in crystallography and optics.
\newblock {\em J. Opt. Soc. Am. A}, 7(3):394--411, 1990.

\bibitem{Vershynin_hyperplane_thm}
Yaniv Plan and Roman Vershynin.
\newblock One-bit compressed sensing by linear programming.
\newblock {\em Communications on Pure and Applied Mathematics},
  66(8):1275--1297, 2013.

\bibitem{Wright2016}
Ju~Sun, Qing Qu, and John Wright.
\newblock A geometric analysis of phase retrieval.
\newblock {\em Information Theory (ISIT), 2016 IEEE International Symposium},
  pages 2379--2383, 2016.

\bibitem{Vershynin_notes}
Roman Vershynin.
\newblock Introduction to the non-asymptotic analysis of random matrices.
\newblock {\em Compressed Sensing: Theory and Applications}, Cambridge
  University Press, 2012.

\bibitem{Phasecut}
Ir\`{e}ne Waldspurger, Alexandre d'Aspremont, and St\'{e}phane Mallat.
\newblock Phase recovery, maxcut and complex semidefinite programming.
\newblock {\em Mathematical Programming}, 149(1-2):47--81, 2015.

\bibitem{Walther1963}
Adriaan Walther.
\newblock The question of phase retrieval in optics.
\newblock {\em Journal of Modern Optics}, 10(1):41--49, 1963.

\bibitem{SPARTA}
Gang Wang, Liang Zhang, Georgios~B. Giannakis, Mehmet Ak\c{c}akaya, and Jie
  Chen.
\newblock Sparse phase retrieval via truncated amplitude flow.
\newblock {\em Signal Processing IEEE Transactions on}, 66:479--491, 2018.

\bibitem{Wang2004}
Zhou Wang, A.C. Bovik, H.R. Sheikh, and E.P. Simoncelli.
\newblock Image quality assessment: from error visibility to structural
  similarity.
\newblock {\em IEEE transactions on image processing}, 13(4):600--612, 2004.

\bibitem{sphere_covering}
James~G. Wendel.
\newblock A problem in geometric probability.
\newblock {\em Math. Scand.}, 11:109--111, 1962.

\bibitem{Yeh2015}
Li-Hao Yeh, Jonathan Dong, Jingshan Zhong, Lei Tian, Michael Chen, Gongguo
  Tang, Mahdi Soltanolkotabi, and Laura Waller.
\newblock Experimental robustness of fourier ptychography phase retrieval
  algorithms.
\newblock {\em Optics Express}, PP(99):33214--33240, 2015.

\end{thebibliography}

\newpage

\section{Appendix}

\subsection{Full Proof of Theorem 3}

\begin{proof}
Set \begin{align*}
v_{x,x_0} = \begin{cases}
\nabla f(x)\ &\ f\ \text{is differentiable at}\ x \in \R^k\\
\lim_{\delta \rightarrow 0^+} \nabla f(x + \delta w)\ &\ \text{otherwise,}
\end{cases}
\end{align*} where $f$ is differentiable at $x + \delta w$ for sufficiently small $\delta > 0$. Recall that \begin{align*}
\nabla f(x) = (\PiWdix)^\top A^\top A (\PiWdix)x - (\PiWdix)^\top A_{G(x)}^\top A_{G(x_0)}(\PiWdixo)x_0.
\end{align*} Let \begin{align}
\overline{v}_{x,x_0} & := (\PiWdix)^\top(\PiWdix)x - (\PiWdix)^\top \Phi_{G(x),G(x_0)} (\PiWdixo)x_0, \label{vbar_def} \\
h_{x,x_0} & : = - \frac{\|x_0\|}{2^d}\left(\frac{\pi - 2\overline{\theta}_d}{\pi}\right)\left(\prod_{i=0}^{d-1}\frac{\pi - \overline{\theta}_i}{\pi}\right)\hat{x}_0 \\
& + \frac{1}{2^d}\left[\|x\| - \|x_0\|\left( \frac{2\sin \overline{\theta}_d}{\pi} - \left(\frac{\pi - 2\overline{\theta}_d}{\pi}\right) \sum_{i=0}^{d-1} \frac{\sin \overline{\theta}_i}{\pi}\left(\prod_{j=i+1}^{d-1} \frac{\pi - \overline{\theta}_i}{\pi}\right)\right)\right]\hat{x}, \label{h_xdef}
\end{align} and \begin{align*}
S_{\epsilon,x_0} & : = \left\{0 \neq x \in \R^k\ |\ \|h_{x,x_0}\| \leq \frac{1}{2^d}\epsilon \max(\|x\|,\|x_0\|)\right\}.
\end{align*} First, observe that by the WDC, we have that for all $x \neq 0$ and $i = 1,\dots,d$: \begin{align}
\left\|W_{i,+,x}^\top W_{i,+,x} - \frac{1}{2}I_{n_i}\right\| \leq \epsilon \Longrightarrow \|W_{i,+,x}\|^2 \leq \frac{1}{2} + \epsilon. \label{WDC_bound}
\end{align} Observe that \begin{align*}
\left\|\nabla f(x) - \overline{v}_{x,x_0}\right\| & \leq \left\|(\PiWdix)^\top(A^\top A - I_{n_d})(\PiWdix)x\right\| \\
& + \left\|(\PiWdix)^\top(A_{G(x)}^\top A_{G(x_0)} - \Phi_{G(x),G(x_0)})(\PiWdixo)x_0\right\|.
\end{align*} Hence by the RRCP and (\ref{WDC_bound}), we have that \begin{align}
\left\|\nabla f(x) - \overline{v}_{x,x_0}\right\| & \leq 14 \epsilon\left(\frac{1}{2} + \epsilon\right)^d \max(\|x\|,\|x_0\|) \label{RRCP_bound}.\end{align} 

Then Lemma 2 guarantees that for all non-zero $x,x_0 \in \R^k$: \begin{align}
\|\overline{v}_{x,x_0} - h_{x,x_0} \| \leq 78\frac{d^3}{2^d} \sqrt{\epsilon} \max(\|x\|,\|x_0\|). \label{h_bound}
\end{align} Then we have that for all non-zero $x,x_0 \in \R^k$: \begin{align*}
\|v_{x,x_0} - h_{x,x_0}\| & = \lim_{t \rightarrow 0^+} \|\nabla f(x + tw) - h_{x + t w,x_0}\| \\
& \leq \lim_{t \rightarrow 0^+} \left(\|\nabla f(x + tw) - \overline{v}_{x+tw,x_0}\| + \|\overline{v}_{x+tw,x_0} - h_{x + t w,x_0}\|\right) \\
& \leq \sqrt{\epsilon}\left(14 \frac{(1 + 2\epsilon)^d}{2^d} + 78\frac{d^3}{2^d}\right)\max(\|x\|,\|x_0\|) \\
& \leq \sqrt{\epsilon}K \frac{d^3}{2^d}\max(\|x\|,\|x_0\|)
\end{align*} for some universal constant $K$ where the first equality follows by the definition of $v_{x,x_0}$ and the continuity of $h_{x,x_0}$ for non-zero $x,x_0$. The second inequality combines (\ref{RRCP_bound}) and (\ref{h_bound}) and since $2\epsilon d \leq 1 \Longrightarrow (1 + 2\epsilon)^d \leq e^{2\epsilon d} \leq 1 + 4\epsilon d.$ This establishes concentration of $v_{x,x_0}$ to $h_{x,x_0}$ for all non-zero $x,x_0 \in \R^k$: \begin{align}
\|v_{x,x_0} - h_{x,x_0}\| \leq \sqrt{\epsilon}K \frac{d^3}{2^d}\max(\|x\|,\|x_0\|) \label{conc_v_h}
\end{align}

Now, due to the continuity and piecewise linearity of the function $G(x)$ and $|\cdot|$, we have that for any $x,y \neq 0$ that there exists a sequence $\{x_n\} \rightarrow x$ such that $f$ is differentiable at each $x_n$ and $D_yf(x) = \lim_{n \rightarrow \infty} \nabla f(x_n)\cdot y$. Thus, as $\nabla f(x_n) = v_{x_n,x_0}$, \begin{align*}
D_{-v_{x,x_0}}f(x) = - \lim_{n \rightarrow \infty} v_{x_n,x_0} \cdot v_{x,x_0}. \label{direc_deriv_def}
\end{align*} Then observe that \begin{align*}
v_{x_n,x_0} \cdot v_{x,x_0}  & =  h_{x_n,x_0} \cdot h_{x,x_0} + (v_{x_n,x_0} - h_{x_n,x_0})\cdot h_{x,x_0} + h_{x_n,x_0}\cdot (v_{x,x_0 } - h_{x,x_0}) \\
& + (v_{x_n,x_0} - h_{x_n,x_0})\cdot (v_{x,x_0} - h_{x,x_0}) \\
&  \geq h_{x_n,x_0} \cdot h_{x,x_0} - \|v_{x_n,x_0} - h_{x_n,x_0}\|\|h_{x,x_0}\| - \|h_{x_n,x_0}\|\|v_{x,x_0 } - h_{x,x_0}\| \\
& - \|v_{x_n,x_0} - h_{x_n,x_0}\|\|v_{x,x_0} - h_{x,x_0}\| \\
& \geq h_{x_n,x_0} \cdot h_{x,x_0}  - \|h_{x,x_0}\|\sqrt{\epsilon}K \frac{d^3}{2^d}\max(\|x\|,\|x_0\|)  \\
& - \|h_{x_n,x_0}\|\sqrt{\epsilon}K \frac{d^3}{2^d}\max(\|x\|,\|x_0\|) - \epsilon \left[K \frac{d^3}{2^d}\right]^2 \max(\|x_n\|,\|x_0\|)\max(\|x\|,\|x_0\|)
\end{align*} where in the last inequality, we used (\ref{conc_v_h}).  By the continuity of $h_{x,x_0}$ for non-zero $x \in \R^k$, we have that for $x \in S_{4\sqrt{\epsilon}Kd^3,x_0}^c$: \begin{align*}
\lim_{n \rightarrow \infty} v_{x_n,x_0} \cdot v_{x,x_0} & \geq \|h_{x,x_0}\|^2 - 2\|h_{x,x_0}\|\sqrt{\epsilon}K \frac{d^3}{2^d}\max(\|x\|,\|x_0\|) - \epsilon \left[K \frac{d^3}{2^d}\right]^2 \max(\|x\|,\|x_0\|)^2 \\
& = \frac{\|h_{x,x_0}\|}{2}\left(\|h_{x,x_0}\| - 4\sqrt{\epsilon}K \frac{d^3}{2^d}\max(\|x\|,\|x_0\|)\right) \\
& + \frac{1}{2}\left(\|h_{x,x_0}\|^2 - 2\epsilon \left[K \frac{d^3}{2^d}\right]^2\max(\|x\|,\|x_0\|)^2\right) \\
& > 0.
\end{align*} Hence we conclude that for all $x \in S_{4\sqrt{\epsilon}Kd^3,x_0}^c$, $D_{-v_{x,x_0}}f(x) < 0$.

We now show that $D_{x} f(0) < 0$ for all $x \neq 0$. Note that by direct calculation, \begin{align*}
D_xf(0) = - \frac{1}{2} \sum_{k=1}^m |\langle a_k, (\PiWdix)x\rangle \langle a_k, (\PiWdixo)x_0 \rangle|
\end{align*} where each $a_k$ denotes a row of $A$. We further note \begin{align*}
\left|\sum_{k=1}^m \langle a_k, (\PiWdix)x\rangle \langle a_k, (\PiWdixo)x_0 \rangle\right| = \left|\langle x,(\PiWdix)^\top A^\top A (\PiWdixo)x_0\rangle\right|
\end{align*} and thus \begin{align*}
D_xf(0) & \leq -\frac{1}{2}\left|\langle x,(\PiWdix)^\top A^\top A (\PiWdixo)x_0\rangle\right|  \\
& \leq - \frac{1}{2} \langle x,(\PiWdix)^\top A^\top A (\PiWdixo)x_0\rangle \\
& = - \frac{1}{2} \langle x,(\PiWdix)^\top( A^\top A  - I_{n_d})(\PiWdixo)x_0\rangle \\
& - \frac{1}{2} \langle x,(\PiWdix)^\top (\PiWdixo)x_0\rangle \\
& \stackrel{(*)}{\leq} \frac{1}{2}\left(\frac{7\epsilon (1/2+ \epsilon)^d}{2^d}\|x\|\|x_0\| - \frac{1}{4\pi} \frac{1}{2^d}\|x\|\|x_0\|\right) \\
& \leq \frac{1}{2}\left(\frac{14\epsilon }{2^d}\|x\|\|x_0\| - \frac{1}{4\pi} \frac{1}{2^d}\|x\|\|x_0\|\right) \\
& < 0
\end{align*} where $(*)$ follows by the RRCP and Lemma 3 as long as $4\epsilon d \leq 1$ and $\epsilon < 1/(56\pi)$.

We conclude by applying Proposition 1 and $24\pi d^6\sqrt{4\sqrt{\epsilon}Kd^3} \leq 1$ to attain \begin{align*}
S_{4\sqrt{\epsilon}Kd^6} \subset \mathcal{B}(x_0,89d\sqrt{4\sqrt{\epsilon}Kd^3}\|x_0\|) \cup \mathcal{B}(\rho_d x_0, 836831d^{12}\sqrt{4\sqrt{\epsilon}Kd^3}\|x_0\|).
\end{align*}
\end{proof}

We record some results that were used in the above proof. In \cite{Hand2017}, it was shown that Gaussian $W_i$ satisfies the WDC with high probability: 

\begin{lem}[Lemma 9 in \cite{Hand2017}]
Fix $0 < \epsilon < 1$. Let $W \in \R^{n \times k}$ have i.i.d. $\mathcal{N}(0,1/n)$ entries. If $n \geq c k \log k$ then with probability at least $1-8n\exp(-\gamma k)$, $W$ satisfies the WDC with constant $\epsilon$. Here $c,\gamma^{-1}$ are constants that depend only polynomially on $\epsilon^{-1}$.
\end{lem}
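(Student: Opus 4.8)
Write $w_1,\dots,w_n$ for the rows of $W$, so the $w_i\sim\mathcal{N}(0,I_k/n)$ are independent, and set $A_{x,y}:=W_{+,x}^\top W_{+,y}=\sum_{i=1}^n \one(\inner{w_i}{x}>0)\one(\inner{w_i}{y}>0)\,w_iw_i^\top$. The plan has three parts: (i) verify $\E A_{x,y}=Q_{x,y}$; (ii) prove pointwise concentration $\|A_{x,y}-Q_{x,y}\|\le \epsilon/4$ for each \emph{fixed} pair $(x,y)$ with probability exponentially close to one; and (iii) upgrade to a bound uniform over all non-zero $x,y$ by a covering argument that copes with the discontinuity of the map $x\mapsto W_{+,x}$.

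For (i), note $A_{x,y}$ depends on $x,y$ only through $\hat x,\hat y$, so fix unit vectors and write $\theta=\theta_{x,y}$. By rotational invariance of the $w_i$, decompose $\R^k=\operatorname{span}\{x,y\}\oplus\operatorname{span}\{x,y\}^\perp$: on the orthogonal complement the two indicators do not involve those coordinates, contributing $\Pro(\inner{w_1}{x}>0,\inner{w_1}{y}>0)\cdot\tfrac1n I=\tfrac{\pi-\theta}{2\pi}I$ restricted to that subspace, while the $2\times2$ block on $\operatorname{span}\{x,y\}$ is a planar Gaussian integral evaluating to the remaining terms of $Q_{x,y}$; this is the elementary computation already quoted in the text. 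For (ii), write $\|A_{x,y}-Q_{x,y}\|=\sup_{z\in S^{k-1}}|z^\top(A_{x,y}-Q_{x,y})z|$, fix a $\tfrac1{10}$-net of $S^{k-1}$ of size $\le 21^k$, and control $z^\top A_{x,y}z=\sum_i \one(\inner{w_i}{x}>0)\one(\inner{w_i}{y}>0)\inner{w_i}{z}^2$, a sum of independent nonnegative sub-exponential terms each of mean $O(1/n)$ and sub-exponential norm $O(1/n)$; Bernstein's inequality gives deviation $\epsilon$ with failure probability $\le 2\exp(-c\epsilon^2 n)$, and a union bound over the net gives $\Pro(\|A_{x,y}-Q_{x,y}\|>\epsilon/4)\le 2\cdot 21^k\exp(-c\epsilon^2 n)$.

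For (iii), take a $\delta$-net $\mathcal{N}$ of $S^{k-1}$ with $|\mathcal{N}|\le(1+2/\delta)^k$ and union-bound (ii) over all pairs in $\mathcal{N}\times\mathcal{N}$. Given arbitrary unit $x,y$, pick $x',y'\in\mathcal{N}$ within $\delta$. Then $\|Q_{x,y}-Q_{x',y'}\|\le C\delta$ since $Q$ is Lipschitz in $(\theta,M_{\hat x\leftrightarrow\hat y})$. For the random matrices, $A_{x,y}-A_{x',y'}=\sum_{i\in S}(\cdots)w_iw_i^\top$ with coefficients bounded by $1$, where $S$ lies in the union of the two angular wedges of width $\le\delta$ about the hyperplanes $x^\perp$ and $y^\perp$. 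Two auxiliary uniform estimates then close the argument: (a) with high probability every such wedge contains at most $Cn\delta$ rows (a net over wedges plus a binomial tail, valid once $\delta\gtrsim(\log n)/n$), and (b) with high probability $\bigl\|\sum_{i\in S}w_iw_i^\top\bigr\|\le C\bigl((|S|/n)\log(en/|S|)+k/n\bigr)$ for every index set $S$ (the standard bound that any few rows of a Gaussian matrix have small operator norm, uniformly). Combining, $\|A_{x,y}-A_{x',y'}\|\le \epsilon/4$ once $\delta$ is a small fixed power of $\epsilon$ and $n\gtrsim k$; with (ii) on the net and the $C\delta$ bound on $Q$ this yields $\|A_{x,y}-Q_{x,y}\|\le\epsilon$ for all non-zero $x,y$. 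Tracking the three failure events and balancing the net size $(C/\delta)^{2k}$ against the Bernstein exponent $c\epsilon^2 n$ produces the stated hypothesis $n\ge ck\log k$ and probability $\ge 1-8n\exp(-\gamma k)$ with $c,\gamma^{-1}$ polynomial in $\epsilon^{-1}$.

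\textbf{Main obstacle.} The essential difficulty is the non-continuity of $x\mapsto W_{+,x}$: a naive net-plus-Lipschitz bound fails because $\one(\inner{w}{x}>0)$ jumps. The crux is therefore the geometric control in (iii)(a)–(b) — showing that only $O(n\delta)$ rows can flip between $\delta$-close directions and that such a small family of rank-one Gaussian outer products has operator norm $O(\delta+k/n)$ up to a logarithmic factor — and then balancing the net fineness $\delta$ (which must be polynomially small in $\epsilon$) against the $(C/\delta)^{2k}$ union-bound cost, which is what pushes the sample complexity to $n\gtrsim k\log k$ rather than $n\gtrsim k$.
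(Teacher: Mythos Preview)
The paper does not prove this lemma; it is quoted verbatim from \cite{Hand2017} and only the statement is reproduced in the appendix. So there is no in-paper proof to compare against.

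That said, your proposal is the right strategy and is essentially the argument of \cite{Hand2017}: compute the expectation $Q_{x,y}$, get pointwise concentration via Bernstein over a net of quadratic-form test vectors, then pass to all $(x,y)$ by a covering argument in which the only delicate point is controlling the rows $w_i$ that ``flip sign'' between $\delta$-close directions --- exactly the wedge-count plus small-subset-operator-norm pair of estimates you isolate in (iii)(a)--(b). Your identification of the discontinuity of $x\mapsto W_{+,x}$ as the main obstacle, and the resolution by bounding $|S|\lesssim n\delta$ together with a uniform bound on $\bigl\|\sum_{i\in S} w_iw_i^\top\bigr\|$, matches the structure of the original proof.

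One technical slip to fix: in step (ii) you write $\|A_{x,y}-Q_{x,y}\|=\sup_{z\in S^{k-1}}|z^\top(A_{x,y}-Q_{x,y})z|$, but for $x\ne y$ the matrix $A_{x,y}=W_{+,x}^\top W_{+,y}$ is not symmetric, so the spectral norm is not captured by a single-vector quadratic form. Replace this by $\sup_{u,v\in S^{k-1}} u^\top(A_{x,y}-Q_{x,y})v$ and net over pairs $(u,v)$; the Bernstein bound and the rest of the argument go through unchanged (the net size becomes $21^{2k}$ instead of $21^k$, which is absorbed into the constants). With that correction the sketch is sound.
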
 The following is a technical result showing concentration of $\overline{v}_{x,x_0}$ around $h_{x,x_0}$: 

\begin{lem}
Fix $0 < \epsilon < d^{-4}(1/16\pi)^2$ and let $d \geq 2$. Let $W_i$ satisfy the WDC with constant $\epsilon$ for $i = 1,\dots d$. For any non-zero $x,y \in \R^k$, we have \begin{align*}
\|\overline{v}_{x,y} - h_{x,y}\| \leq \frac{78d^3}{2^d}\sqrt{\epsilon}\max(\|x\|,\|y\|).
\end{align*}
\end{lem}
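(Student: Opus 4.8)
The plan is to prove Lemma~2 by induction on the depth $d$, peeling off one layer of the network at a time and tracking how the ReLU-layer map $x \mapsto \relu(W_i x)$ distorts both norms and the relevant inner-product structure. The key algebraic fact is that the WDC lets us replace each $W_{i,+,x}^\top W_{i,+,y}$ by the matrix $Q_{x,y}$ up to spectral error $\epsilon$, and the point-wise products $Q$ composes in a way that mirrors the recursion $\overline\theta_i = g(\overline\theta_{i-1})$ and the factor $1/2$ on norms. So I would first set up notation for the ``partial'' products $\PiWdix$ restricted to layers $j,\dots,i$, define the expected counterpart $h^{(i)}_{x,x_0}$ layer by layer, and record the base case $i=1$ (a single layer), where $\|\overline v - h\|$ is controlled directly by the WDC applied twice.

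The inductive step is where the real work lies. Writing $\overline v_{x,x_0}$ as $(\PiWdix)^\top(\PiWdix)x - (\PiWdix)^\top\Phi_{G(x),G(x_0)}(\PiWdixo)x_0$, I would split the error into (a) the error from replacing the innermost layer's Gram-type product by its expectation $Q$ and (b) the propagated error from the induction hypothesis on the remaining $d-1$ layers, then combine using submultiplicativity of the spectral norm together with the WDC bound $\|W_{i,+,x}\|^2 \le \tfrac12 + \epsilon$ from (\ref{WDC_bound}). Each layer contributes roughly a factor $1/2$ to the norm and an additive $O(\sqrt\epsilon)$ (the $\sqrt{}$ comes from bounding angle perturbations: $|\theta_{x,y}'-\theta_{x,y}| = O(\sqrt\epsilon)$ when the Gram matrix is perturbed by $\epsilon$, since the derivative of $\arccos$ blows up near the endpoints), and the Lipschitz constant of $g$ is $\le 1$ so the angle recursion does not amplify errors. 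Summing the $d$ layer-wise contributions, each scaled by $2^{-d}$, gives a bound of the form $C d \cdot 2^{-d}\sqrt\epsilon \cdot \max(\|x\|,\|x_0\|)$; the extra powers of $d$ (getting up to $d^3$) come from the fact that the coefficients multiplying $\hat x$ and $\hat x_0$ in $h_{x,x_0}$ themselves involve sums of $d$ terms each of which is a product of $d$ factors, so perturbing any one factor perturbs $O(d)$ summands, and one must also control $\|h_{x,x_0}\|$ itself crudely by $O(d)\,2^{-d}\max(\|x\|,\|x_0\|)$ to close the recursion.

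The main obstacle I expect is the careful bookkeeping of how errors compound through the composition: one needs a clean recursive identity relating $h^{(i)}$ to $h^{(i-1)}$ (both for the ``diagonal'' term $(\PiWdix)^\top(\PiWdix)x$ and the ``cross'' term involving $\Phi$), and then a uniform-in-$\theta_0$ estimate showing that the discrepancy between the true angle after $i$ layers and $\overline\theta_i = g^{\circ i}(\theta_0)$ stays $O(\sqrt\epsilon)$ — this last point is delicate because $g$ has derivative $1$ at $\theta=0$, so contractivity is not strict and errors can accumulate linearly in $d$ rather than being damped. I would handle this by proving a lemma that $|g(\alpha)-g(\beta)| \le |\alpha-\beta|$ for all $\alpha,\beta \in [0,\pi]$ and iterating, absorbing the resulting factor of $d$ into the final $d^3$. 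The requirement $\epsilon < d^{-4}(1/16\pi)^2$ in the hypothesis is exactly what is needed so that the accumulated angle error $O(d\sqrt\epsilon)$ stays bounded below $\pi$ (keeping all the $\arccos$ and $\sin$ evaluations in a regime where they are Lipschitz with a universal constant), and so that $(\tfrac12+\epsilon)^d$ stays comparable to $2^{-d}$. Once the recursion is set up with these estimates in place, the conclusion follows by a straightforward (if tedious) summation, which I would not grind through here.
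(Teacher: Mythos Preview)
Your plan is not the paper's route, and as written it is missing one structural observation without which the induction on $d$ cannot close. The paper does not induct on $d$ in this lemma at all. Instead it first \emph{expands} $\Phi_{G(x),G(y)}$ using its rank-two structure: since $\Phi_{z,w}=\frac{\pi-2\theta}{\pi}I+\frac{2\sin\theta}{\pi}M_{\hat z\leftrightarrow\hat w}$ and $M_{\hat z\leftrightarrow\hat w}\,w=\tfrac{\|w\|}{\|z\|}z$, one gets $\Phi_{x_d,y_d}\,y_d=\frac{\pi-2\theta_d}{\pi}\,y_d+\frac{2\sin\theta_d}{\pi}\tfrac{\|y_d\|}{\|x_d\|}\,x_d$ with $\theta_d=\angle(x_d,y_d)$. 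This collapses $\overline v_{x,y}-h_{x,y}$ into three terms $Q_1,Q_2,Q_3$, each a scalar times $(\PiWdix)^\top(\PiWdix)x$ or $(\PiWdix)^\top(\PiWdiy)y$ minus its idealized counterpart. Each $Q_i$ is then bounded in a line or two by quoting Lemma~3 (i.e.\ Lemma~5 of \cite{Hand2017}), which supplies the estimates (\ref{conc_htilde}), (\ref{conc_norm}), (\ref{conc_angle}) as black boxes; the rest is arithmetic on constants.

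Your induction never reaches this point because you do not unpack $\Phi_{G(x),G(y)}$. That matrix lives at the output of the full depth-$d$ network and depends on $x_d,y_d$; peeling off an inner layer via the WDC does not reduce the cross term $(\PiWdix)^\top\Phi_{G(x),G(y)}(\PiWdiy)y$ to a $(d{-}1)$-layer instance of the same problem, so there is no clean recursion for your ``$h^{(i)}$'' on the $\Phi$ side. Once you \emph{do} expand $\Phi$ as above, what remains is exactly the content of Lemma~3, and your layer-by-layer induction would then amount to re-deriving that lemma inside this proof. That is a legitimate (longer) route, and the error accounting you sketch---$g$ being $1$-Lipschitz, angle perturbation $O(\sqrt\epsilon)$ from an $\epsilon$-perturbation of the Gram matrix, the $d^3$ arising from $d$-fold products and sums---matches what \cite{Hand2017} does. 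But the expansion of $\Phi$ is not a bookkeeping detail you can defer; it is the step that makes the cross term tractable for either approach.
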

\begin{proof} Observe that \begin{align*}
\|\overline{v}_{x,y} - h_{x,y}\| & \leq \underbrace{\left\|(\PiWdix)^\top(\PiWdix)x - \frac{1}{2^d}x\right\|}_{= Q_1}\  \\
& + \underbrace{\left\|\frac{\pi - 2\theta_d}{\pi}(\PiWdix)^\top(\PiWdiy)y - \frac{\pi - 2\overline{\theta}_d}{\pi}\tilde{h}_{x,y}\right\|}_{=Q_2}\ \\
& + \underbrace{\left\|\frac{2\sin \theta_d}{\pi}\frac{\|(\PiWdiy)y\|}{\|(\PiWdix)x\|}(\PiWdix)^\top(\PiWdix)x - \frac{2\sin \overline{\theta}_d}{\pi} \frac{\|y\|}{\|x\|}\frac{1}{2^d}x\right\|}_{=Q_3}.
\end{align*}

We focus on bounding each individual quantity $Q_i$ for $i = 1,2,3$:

$Q_1$: We have that by (\ref{conc_htilde}) in Lemma 3, \begin{align}
\left\|(\PiWdix)^\top(\PiWdix)x - \frac{1}{2^d}x\right\| \leq 24 \frac{d^3 \sqrt{\epsilon}}{2^d}\|x\|.
\end{align}

$Q_2$: We write \begin{align*}
Q_2 & \leq \left\|\frac{\pi - 2\theta_d}{\pi}(\PiWdix)^\top(\PiWdiy)y - \frac{\pi - 2\theta_d}{\pi}\tilde{h}_{x,y}\right\| \\
& + \left\|\frac{\pi - 2\theta_d}{\pi}\tilde{h}_{x,y} - \frac{\pi - 2\overline{\theta}_d}{\pi}\tilde{h}_{x,y}\right\| \\
& \stackrel{(*)}{\leq} \left|\frac{\pi - 2\theta_d}{\pi}\right|24 \frac{d^3\sqrt{\epsilon}}{2^d}\|y\| + \left|\frac{2}{\pi}(\theta_d - \overline{\theta}_d)\right|\|\tilde{h}_{x,y}\| \\
& \stackrel{(**)}{\leq} 24 \frac{d^3\sqrt{\epsilon}}{2^d}\|y\| + \frac{8d\sqrt{\epsilon}}{\pi}\frac{\left(1 + \frac{d}{\pi}\right)}{2^d}\|y\|
\end{align*} where in $(*)$ we used (\ref{conc_htilde}) and $(**)$ used (\ref{conc_angle}) and the fact that $\|\tilde{h}_{x,y}\| \leq 2^{-d}(1 + \frac{d}{\pi})\|y\|$. Hence \begin{align*}
Q_2 \leq \frac{1}{2^d} \left(24 d^3 + \frac{8d}{\pi}\left( 1+ \frac{d}{\pi}\right)\right)\sqrt{\epsilon}\|y\|.
\end{align*}

$Q_3$: Let $y_d : = (\PiWdiy)y$ and $x_d : = (\PiWdix)x$. Then we have that \begin{align*}
Q_3 & \leq \underbrace{\left|\frac{2\sin \theta_d}{\pi} - \frac{2\sin \overline{\theta}_d}{\pi}\right|\frac{\|y_d\|}{\|x_d\|}\left\|(\PiWdix)^\top x_d\right\|}_{=Q_{3,1}} \\
& + \underbrace{\left\|\frac{2\sin \overline{\theta}_d}{\pi}\frac{\|y_d\|}{\|x_d\|}(\PiWdix)^\top(\PiWdix)x - \frac{2\sin \overline{\theta}_d}{\pi} \frac{\|y\|}{\|x\|}(\PiWdix)^\top(\PiWdix)x \right\|}_{=Q_{3,2}} \\
& + \underbrace{\left\|\frac{2\sin \overline{\theta}_d}{\pi}\frac{\|y\|}{\|x\|}(\PiWdix)^\top(\PiWdix)x - \frac{2\sin \overline{\theta}_d}{\pi}\frac{\|y\|}{\|x\|}\frac{1}{2^d}x\right\|}_{=Q_{3,3}}.
\end{align*} Using (\ref{WDC_bound}) and (\ref{conc_angle}) gives \begin{align*}
Q_{3,1} & \leq \frac{2}{\pi}|\theta_d - \overline{\theta}_d| \left(\frac{1}{2} + \epsilon\right)^d\frac{\|y\|}{\|x\|}\|x\| \\
& \leq \frac{8d}{\pi}\left(\frac{1}{2} + \epsilon\right)^d \sqrt{\epsilon} \|y\| \\
& = \frac{8d(1 + 2\epsilon)^d}{\pi 2^d}\sqrt{\epsilon}\|y\|.
\end{align*} Likewise, equations (\ref{WDC_bound}) and (\ref{conc_norm}) gives \begin{align*}
Q_{3,2} & \leq \left|\frac{\|y_d\|}{\|x_d\|} - \frac{\|y\|}{\|x\|}\right|\left|\frac{2\sin \overline{\theta}_d}{\pi}\right|\left(\frac{1}{2} + \epsilon\right)^d\|x\| \\
& \leq 8d\epsilon \frac{\|y\|}{\|x\|} \frac{2}{\pi}\left(\frac{1}{2} + \epsilon\right)^d \|x\| \\
& \leq \frac{16d\sqrt{\epsilon}}{\pi} \left(\frac{1}{2} + \epsilon\right)^d \|y\| \\
& = \frac{16d(1+2\epsilon)^d}{\pi 2^d}\sqrt{\epsilon}\|y\|
\end{align*} Lastly, we use (\ref{conc_htilde}) to attain \begin{align*}
(3) & \leq \frac{2}{\pi} \frac{\|y\|}{\|x\|} 24\frac{d^3\sqrt{\epsilon}}{2^d}\|x\| \\
& \leq \frac{48d^3\sqrt{\epsilon}}{\pi2^d}\|y\|.
\end{align*} Combining the bounds for $Q_{3,i}$ for $i = 1,2,3$ gives \begin{align*}
Q_{3} & \leq Q_{3,1} + Q_{3,2} + Q_{3,3} \\ 
& \leq \frac{8d(1 + 2\epsilon)^d}{\pi 2^d}\sqrt{\epsilon}\|y\|+ \frac{16d(1+2\epsilon)^d}{\pi 2^d}\sqrt{\epsilon}\|y\|+ \frac{48d^3\sqrt{\epsilon}}{\pi2^d}\|y\| \\
& = \frac{1}{2^d}\left(\frac{24d(1 + 2\epsilon)^d + 48d^3}{\pi}\right)\sqrt{\epsilon}\|y\|.
\end{align*} Thus we attain \begin{align*}
Q_1 + Q_2 + Q_3 & \leq \frac{K_d}{2^d} \sqrt{\epsilon}\max(\|x\|,\|y\|)
\end{align*} where \begin{align*}
K_d & = 24d^3 + 24d^3 + \frac{8d(1 + d/\pi)}{\pi} + \frac{24(1 + 2\epsilon)^d}{\pi} + \frac{48d^3}{\pi} \\
& = \left( 48 + \frac{48}{\pi}\right)d^3 + \frac{8d}{\pi}\left(1 + \frac{d}{\pi} + 3d(1 + 2\epsilon)^d\right) \\
& \stackrel{(*)}{\leq} 64d^3 + \frac{8d}{\pi}\left( 1+ \frac{d}{\pi} + 3d( 1+ 4\epsilon d)\right) \\
& \leq 64d^3 + \frac{40d^3 + 96\epsilon d^3}{\pi} \\
& \stackrel{(**)}{\leq} 64d^3 + \frac{41d^3}{\pi} \\
& \leq 78d^3
\end{align*} where $(*)$ and $(**)$ hold as long as $\epsilon \leq \min(1/2d,1/96)$.
\end{proof}

The following result summarizes some useful bounds from \cite{Hand2017}:

\begin{lem}[Results from Lemma 5 in \cite{Hand2017}]
Fix $0 < \epsilon < d^{-4}(1/16\pi)^2$ and let $d \geq 2$. Let $W_i$ satisfy the WDC with constant $\epsilon$ for $i = 1,\dots d$. Then for any non-zero $x,y \in \R^k$, the following hold: \begin{align}
\left\|(\PiWdix)^\top(\PiWdiy)y - \tilde{h}_{x,y}\right\| & \leq 24 \frac{d^3\sqrt{\epsilon}}{2^d}\|y\|, \label{conc_htilde}\\
\left\langle (\PiWdix)x, (\PiWdiy)y \right\rangle & \geq \frac{1}{4\pi}\frac{1}{2^d}\|x\|\|y\|, \label{conc_inn_prod}\\
\left| \frac{\|y_d\|}{\|x_d\|} - \frac{\|y\|}{\|x\|} \right| & \leq 8d\epsilon \frac{\|y\|}{\|x\|}, \label{conc_norm}\\
|\theta_d - \overline{\theta}_d| & \leq 4d\sqrt{\epsilon} \label{conc_angle}
\end{align} where $x_d : = (\Pi_{j=d}^1 W_{j,+x})x$, $y_d: =(\Pi_{j=d}^1 W_{j,+y})y$, $\theta_d : = \angle(x_d,y_d)$, $\overline{\theta}_d : = g^{\circ d}(\angle(x,y))$, and the vector $\tilde{h}_{x,y}$ is defined as \begin{align*}
\tilde{h}_{x,y} : = \frac{1}{2^d}\left[ \left(\prod_{i=0}^{d-1} \frac{\pi - \overline{\theta}_i}{\pi}\right)y + \sum_{i=0}^{d-1}\frac{\sin\overline{\theta}_i}{\pi} \left(\prod_{j=i+1}^{d-1} \frac{\pi - \overline{\theta}_i}{\pi}\right)\frac{\|y\|}{\|x\|}x\right].
\end{align*}.
\end{lem}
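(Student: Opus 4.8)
The plan is to establish all four estimates by an induction on the depth that reduces the composed, ReLU-restricted weight products to a scalar two-dimensional recursion driven by the angle map $g$ and the per-layer contraction factor $1/2$. Throughout, write $P_i := \Pi_{j=i}^1 W_{j,+,x}$ and $P_i' := \Pi_{j=i}^1 W_{j,+,y}$ for the $i$-th partial products (so $P_0 = P_0' = I_k$, $x_i := P_i x$, $y_i := P_i' y$, $\theta_i := \angle(x_i,y_i)$), and note that the WDC applied at layer $i$ controls $W_{i,+,x}^\top W_{i,+,y}$ through the vectors $x_{i-1},y_{i-1}$ feeding that layer, since those are exactly the inputs determining which rows of $W_i$ survive.

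First I would handle the scalar quantities. Applying the WDC with $x=y$ pins $W_{i,+,x}^\top W_{i,+,x}$ within $\epsilon$ of $\tfrac12 I_{n_i}$ in spectral norm (this is (\ref{WDC_bound})), so $\big|\,\|x_i\|^2 - \tfrac12\|x_{i-1}\|^2\big| \le \epsilon\|x_{i-1}\|^2$; telescoping and comparing $(\tfrac12\pm\epsilon)^i$ with $2^{-i}$ gives $\|x_i\|^2/\|x\|^2 = 2^{-i}(1 + O(i\epsilon))$, likewise for $y$, and dividing yields (\ref{conc_norm}) after tracking constants. For the angle, using that $M_{\hat x_{i-1}\leftrightarrow\hat y_{i-1}}$ swaps the two unit vectors one computes from the WDC that $\inner{x_i}{y_i} = \|x_{i-1}\|\|y_{i-1}\|\cdot\tfrac{(\pi-\theta_{i-1})\cos\theta_{i-1}+\sin\theta_{i-1}}{2\pi} + O_1\!\big(\epsilon\|x_{i-1}\|\|y_{i-1}\|\big)$, and dividing by $\|x_i\|\|y_i\|$ gives $\cos\theta_i = \cos g(\theta_{i-1}) + O(\epsilon)$. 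Here is the one genuinely analytic point: $\cos^{-1}$ is only H\"older-$\tfrac12$ near $\pm1$, so this upgrades only to $|\theta_i - g(\theta_{i-1})| = O(\sqrt\epsilon)$ --- the source of the $\sqrt\epsilon$ in the conclusions. Telescoping this with the Lipschitz-$1$ bound on $g$ from \cite{Hand2017} (which also keeps $\theta_i,\overline\theta_i$ away from $\pi$) produces (\ref{conc_angle}). Finally, $\inner{x_d}{y_d} = \|x_d\|\|y_d\|\cos\theta_d$ together with $\cos\overline\theta_d \ge \cos g(\pi/2) = 1/\pi$ for $d\ge2$ (using $g$ nondecreasing with $g(\pi)=\pi/2$), the angle bound, and the smallness hypothesis on $\epsilon$ to absorb the $O(d\sqrt\epsilon)$ and $(1-2\epsilon)^d$ factors, gives the lower bound (\ref{conc_inn_prod}).

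For the vector estimate (\ref{conc_htilde}) I would induct on $d$, peeling off the outermost factor: $(\PiWdix)^\top(\PiWdiy)y = P_{d-1}^\top\big(W_{d,+,x}^\top W_{d,+,y}\big)y_{d-1}$, and the WDC replaces $W_{d,+,x}^\top W_{d,+,y}$ by $Q_{x_{d-1},y_{d-1}}$ up to a spectral-norm-$\epsilon$ error $E_d$ with $\|P_{d-1}^\top E_d y_{d-1}\| \le \epsilon(\tfrac12+\epsilon)^{(d-1)/2}\|y_{d-1}\|$. Since $Q_{x_{d-1},y_{d-1}}y_{d-1} = \tfrac{\pi-\theta_{d-1}}{2\pi}y_{d-1} + \tfrac{\sin\theta_{d-1}}{2\pi}\|y_{d-1}\|\hat x_{d-1}$, the main term becomes $\tfrac{\pi-\theta_{d-1}}{2\pi}\,P_{d-1}^\top P_{d-1}' y + \tfrac{\sin\theta_{d-1}}{2\pi}\tfrac{\|y_{d-1}\|}{\|x_{d-1}\|}\,P_{d-1}^\top P_{d-1} x$, to which I apply the inductive hypothesis on the first piece and the WDC-with-$x=y$ estimate $P_{d-1}^\top P_{d-1} x = 2^{-(d-1)}x + O_1(\cdot)$ on the second, using (\ref{conc_norm}) for the norm ratio; collecting terms reproduces $\tilde h_{x,y}$ but with the actual angles $\theta_i$ in place of $\overline\theta_i$, and a final pass swaps these using (\ref{conc_angle}) and the Lipschitzness of $\theta\mapsto\tfrac{\pi-\theta}{\pi}$ and $\theta\mapsto\sin\theta$. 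The main obstacle is the bookkeeping: one must verify that the three sources of error --- an $\sim\epsilon$ contribution from each WDC application (damped by the operator norm $\sim 2^{-i/2}$ of the surviving partial product), an $\sim\sqrt\epsilon$ contribution from each angle replacement, and the norm-ratio errors --- accumulate across the $d$ layers to no more than $24\,d^3 2^{-d}\sqrt\epsilon\max(\|x\|,\|y\|)$, rather than to something growing exponentially in $d$. Controlling this polynomial-in-$d$ constant is exactly where one leans on the structural facts of Lemma 5 of \cite{Hand2017}: the coefficients $\tfrac{\pi-\theta_i}{\pi},\tfrac{\sin\theta_i}{\pi}$ are bounded by $1$ and the telescoping products $\prod_i\tfrac{\pi-\theta_i}{\pi}$ never amplify, so each layer contributes only additively to the total error.
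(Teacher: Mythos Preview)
Your sketch is correct and follows essentially the argument of Lemma~5 in \cite{Hand2017}. Note, however, that the present paper does not actually prove this lemma at all: it is stated verbatim as ``Results from Lemma~5 in \cite{Hand2017}'' and simply cited, with no argument given. So you have in fact supplied more than the paper does --- a faithful outline of the proof from the original source, including the key analytic point that the $\sqrt{\epsilon}$ (rather than $\epsilon$) arises from the H\"older-$\tfrac12$ continuity of $\cos^{-1}$ near its endpoints, and the inductive layer-peeling structure that keeps the accumulated error polynomial in $d$ rather than exponential.
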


\subsection{Determining where $h_{x,x_0}$ vanishes}

Before proving Proposition 1, we outline how the vector $h_{x,x_0}$ was derived. Recall that \begin{align*}
\nabla f(x) = (\PiWdix)^\top A^\top A (\PiWdix)x - (\PiWdix)^\top A_{G(x)}^\top A_{G(x_0)}(\PiWdixo)x_0.
\end{align*} The concentration of the first term follows by the RRCP and Lemma 3: \begin{align*}
(\PiWdix)^\top A^\top A (\PiWdix)x \approx (\PiWdix)^\top (\PiWdix)x \approx \tilde{h}_{x,x} = \frac{1}{2^d}x.
\end{align*} For the second term, note that the RRCP gives
\begin{align*}
(\PiWdix)^\top A_{G(x)}^\top A_{G(x_0)}(\PiWdixo)x_0 \approx (\PiWdix)^\top \Phi_{G(x),G(x_0)}(\PiWdixo)x_0.
\end{align*} Letting $x_d = (\PiWdix)x = G(x)$ and $x_{0,d} = (\PiWdixo)x_0 = G(x_0)$, note that \begin{align*}
\Phi_{x_d,x_{0,d}} = \frac{\pi - 2\theta_d}{\pi}I + \frac{2\sin \theta_d}{\pi} M_{\hat{x}_d \leftrightarrow \hat{x}_{0,d}}
\end{align*} where $\theta_d = \angle(x_d,x_{0,d})$. By Lemma 5 in \cite{Hand2017}, this angle is well-defined and $\|x_d\|,\|x_{0,d}\| \neq 0$ as long as each $W_i$ satisfies the WDC. Finally, note that the definition of $M_{\hat{x} \leftrightarrow \hat{y}}$ gives \begin{align*}
M_{\hat{x}_d \leftrightarrow \hat{x}_{0,d}} x_{0,d} & = \|x_{0,d}\| M_{\hat{x}_d \leftrightarrow \hat{x}_{0,d}} \hat{x}_{0,d} = \|x_{0,d}\| \hat{x}_d = \frac{\|x_{0,d}\|}{\|x_d\|}x_d.
\end{align*} Thus we see that \begin{align*}
&(\PiWdix)^\top  \Phi_{x_d,x_{0,d}}(\PiWdixo)x_0 \\
& = \frac{\pi - 2\theta_d}{\pi}(\PiWdix)^\top(\PiWdixo)x_0 + \frac{2\sin \theta_d}{\pi} \frac{\|x_{0,d}\|}{\|x_d\|}(\PiWdix)^\top(\PiWdix)x \\
& \approx \frac{\pi - 2\overline{\theta}_d}{\pi} \tilde{h}_{x,x_0} + \frac{2\sin\overline{\theta}_d}{\pi} \frac{\|x_0\|}{\|x\|} \frac{1}{2^d}x
\end{align*} where $\overline{\theta}_d = g^{\circ d}(\angle(x,x_0))$ and the definition of $\tilde{h}_{x,x_0}$ is given in Lemma 3. The concentration of the angle $\theta_d$ and norm $\|x_{0,d}\|/\|x_d\|$ are given in Lemma 3. Combining the concentrations of the two terms in $\nabla f(x)$ gives $h_{x,x_0}$.

Now, we establish that the set of all $x$ such that $\|h_{x,x_0}\| \approx 0$, denoted by $S_{\epsilon,x_0}$, is contained in two neighborhoods centered at $x_0$ and a negative multiple $-\rho_d x_0$.

\begin{prop} Suppose $24\pi d^6\sqrt{\epsilon}\leq 1.$ Let \begin{align*}
S_{\epsilon,x_0} = \left\{0 \neq x \in \R^k\ |\ \|h_{x,x_0}\| \leq \frac{1}{2^d}\epsilon \max(\|x\|,\|x_0\|)\right\}
\end{align*} where $d \geq 2$ and let 
\begin{align*}
h_{x,x_0} & = - \frac{\|x_0\|}{2^d}\left(\frac{\pi - 2\overline{\theta}_d}{\pi}\right)\left(\prod_{i=0}^{d-1}\frac{\pi - \overline{\theta}_i}{\pi}\right)\hat{x}_0 \\
& + \frac{1}{2^d}\left[\|x\| - \|x_0\|\left( \frac{2\sin \overline{\theta}_d}{\pi} - \left(\frac{\pi - 2\overline{\theta}_d}{\pi}\right) \sum_{i=0}^{d-1} \frac{\sin \overline{\theta}_i}{\pi}\left(\prod_{j=i+1}^{d-1} \frac{\pi - \overline{\theta}_i}{\pi}\right)\right)\right]\hat{x}.
\end{align*} where $\overline{\theta}_0 = \angle(x,x_0)$ and $\overline{\theta}_i = g(\overline{\theta}_{i-1})$. Define \begin{align*}
\rho_d : = - \frac{2 \sin \breve{\theta}_d}{\pi} + \left(\frac{\pi - 2 \breve{\theta}_d}{\pi}\right) \sum_{i=0}^{d-1}\frac{\sin \breve{\theta}_i}{\pi} \left(\prod_{j=i+1}^{d-1} \frac{\pi - \breve{\theta}_i}{\pi}\right)
\end{align*} where $\breve{\theta}_0 = \pi$ and $\breve{\theta}_i = g(\breve{\theta}_{i-1})$. If $x \in S_{\epsilon,x_0}$, then either \begin{align*}
|\overline{\theta}_0| \leq 2\sqrt{\epsilon}\ \text{and}\ |\|x\| - \|x_0\|| \leq 29d\sqrt{\epsilon}\|x_0\|
\end{align*} or \begin{align*}
|\overline{\theta}_0 - \pi|\leq 24\pi^2d^4\sqrt{\epsilon}\ \text{and}\ |\|x\| - \rho_d\|x_0\|| \leq 3517d^8\sqrt{\epsilon}\|x_0\|.
\end{align*} In particular, we have 
\begin{align*}
S_{\epsilon,x_0} \subset \mathcal{B}(x_0,89d\sqrt{\epsilon}\|x_0\|) \cup \mathcal{B}(-\rho_d x_0, 836831d^{12}\sqrt{\epsilon}\|x_0\|).
\end{align*} Additionally, $\rho_d \rightarrow 1$ as $d \rightarrow \infty$.
\end{prop}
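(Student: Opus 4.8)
The plan is to pin down $h_{x,x_0}$ inside the two–dimensional plane through $x$ and $x_0$, where it is a combination of $\hat{x}_0$ and $\hat{x}$. Write the $\hat{x}_0$–coefficient of $2^d h_{x,x_0}$ as $-\|x_0\|M(\theta_0)$ and the $\hat{x}$–coefficient as $\|x\|-\|x_0\|\Psi(\theta_0)$, where $M,\Psi$ are scalar functions of $\theta_0:=\overline{\theta}_0=\angle(x,x_0)$ and $d$ alone. Resolving $h_{x,x_0}$ against the orthonormal basis $\{\hat{x},w\}$ of that plane gives the closed form
\[
4^d\,\|h_{x,x_0}\|^2=\bigl(\|x\|-\|x_0\|\,\xi_d(\theta_0)\bigr)^2+\|x_0\|^2\,M(\theta_0)^2\sin^2\theta_0,\qquad \xi_d(\theta_0):=\Psi(\theta_0)+M(\theta_0)\cos\theta_0 .
\]
The crucial first simplification is that the weighted sum defining $\Psi$ telescopes, via the defining relation $\pi\cos\overline{\theta}_{i+1}=(\pi-\overline{\theta}_i)\cos\overline{\theta}_i+\sin\overline{\theta}_i$ of $g$:
\[
\sum_{i=0}^{d-1}\frac{\sin\overline{\theta}_i}{\pi}\prod_{j=i+1}^{d-1}\frac{\pi-\overline{\theta}_j}{\pi}=\cos\overline{\theta}_d-\cos\theta_0\prod_{i=0}^{d-1}\frac{\pi-\overline{\theta}_i}{\pi},
\]
so that $\xi_d(\theta_0)=\xi(\overline{\theta}_d)$ with $\xi(\phi):=\tfrac{2\sin\phi}{\pi}+\tfrac{\pi-2\phi}{\pi}\cos\phi$. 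Since $\overline{\theta}_d\in[0,\tfrac{\pi}{2}]$ and $\xi'(\phi)=-\tfrac{\pi-2\phi}{\pi}\sin\phi\le 0$ there, $\xi$ decreases from $\xi(0)=1$ to $\xi(\tfrac{\pi}{2})=\tfrac{2}{\pi}$; in particular $\xi_d(0)=1$, $\xi_d(\pi)=\xi(\breve{\theta}_d)=\rho_d$, and $|\xi(\phi)-1|\le\phi^2/2$.

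From $x\in S_{\epsilon,x_0}$, i.e. $4^d\|h_{x,x_0}\|^2\le\epsilon^2\max(\|x\|,\|x_0\|)^2$, and the nonnegativity of the two summands I extract $|\,\|x\|-\|x_0\|\xi_d(\theta_0)\,|\le\epsilon\max(\|x\|,\|x_0\|)$ and $\|x_0\|M(\theta_0)|\sin\theta_0|\le\epsilon\max(\|x\|,\|x_0\|)$. Because $\xi_d(\theta_0)\in[\tfrac2\pi,1]$, the first inequality gives the a priori bound $\|x\|\le\|x_0\|/(1-\epsilon)\le 2\|x_0\|$, hence $\max(\|x\|,\|x_0\|)\le 2\|x_0\|$; writing $s:=\|x\|/\|x_0\|$, the two constraints become $|s-\xi_d(\theta_0)|\le 2\epsilon$ and $M(\theta_0)\sin\theta_0\le 2\epsilon$.

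The heart of the argument is the dichotomy for $\theta_0$ extracted from $\phi(\theta_0):=M(\theta_0)\sin\theta_0\le 2\epsilon$. On $[0,\tfrac1{2d}]$ one has $\overline{\theta}_i\le\theta_0\le\tfrac1{2d}$ for all $i$, whence $M(\theta_0)\ge(1-\tfrac1{2\pi d})^d(1-\tfrac1{\pi d})\ge\tfrac12$ and $\phi(\theta_0)\ge\theta_0/\pi$. On $[\tfrac1{2d},\tfrac\pi2]$ and $[\tfrac\pi2,\pi-\tfrac1{2d}]$ I use the monotone comparison $\overline{\theta}_i\le g^{\circ i}(\tfrac\pi2)=\breve{\theta}_{i+1}$ (respectively $\overline{\theta}_i\le\breve{\theta}_i$ for $i\ge1$, as then $\overline{\theta}_1\le\tfrac\pi2$), together with the standard decay $\breve{\theta}_i=O(1/i)$ of the angle recursion — giving $\prod_i\tfrac{\pi-\breve{\theta}_i}{\pi}\ge c_0 d^{-C_0}$ for absolute constants $c_0,C_0$ — and $\sin t\ge\tfrac{2t}{\pi}$, to obtain $\phi(\theta_0)\ge c_1 d^{-(C_0+1)}$ on $[\tfrac1{2d},\tfrac\pi2]$ and $\phi(\theta_0)\ge c_2(\pi-\theta_0)^2 d^{-C_0}\ge c_3 d^{-(C_0+2)}$ on $[\tfrac\pi2,\pi-\tfrac1{2d}]$. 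Under $24\pi d^6\sqrt\epsilon\le1$ the quantity $2\epsilon$ is smaller than all of these (since $C_0+2$ stays comfortably below $12$), so $\{\phi\le 2\epsilon\}$ is confined to $[0,\tfrac1{2d}]\cup[\pi-\tfrac1{2d},\pi]$, and the piecewise lower bounds sharpen this to: either $\theta_0\le 2\pi\epsilon\le 2\sqrt\epsilon$, or $\pi-\theta_0\le 24\pi^2 d^4\sqrt\epsilon$.

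It remains to control $s$ and pass to Euclidean distance. In the first case $\overline{\theta}_d\le\theta_0\le 2\sqrt\epsilon$ gives $|\xi_d(\theta_0)-1|\le 2\epsilon$, hence $|s-1|\le 4\epsilon\le 29d\sqrt\epsilon$; in the second, using that $g$ is $1$–Lipschitz on $[0,\pi]$ (an elementary bound $|g'|\le1$, equivalent to the monotonicity of $s\mapsto s^2-s\sin 2s+\sin^2 s$ on $[0,\pi]$), $|\overline{\theta}_d-\breve{\theta}_d|\le|\theta_0-\pi|$, so $|\xi_d(\theta_0)-\rho_d|\le 24\pi^2 d^4\sqrt\epsilon$ and $|s-\rho_d|\le 3517 d^8\sqrt\epsilon$. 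Combining with $\|\hat{x}-\hat{x}_0\|=2\sin(\theta_0/2)\le\theta_0$ (case 1), $\|\hat{x}+\hat{x}_0\|=2\sin(\tfrac{\pi-\theta_0}{2})\le\pi-\theta_0$ (case 2), and $\|x-y\|\le\bigl|\|x\|-\|y\|\bigr|+\min(\|x\|,\|y\|)\|\hat{x}-\hat{y}\|$ yields the inclusion $S_{\epsilon,x_0}\subset\mathcal{B}(x_0,89d\sqrt\epsilon\|x_0\|)\cup\mathcal{B}(-\rho_d x_0,836831 d^{12}\sqrt\epsilon\|x_0\|)$ (the printed constants being far from tight). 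Finally $\rho_d=\xi(\breve{\theta}_d)\to\xi(0)=1$ because $\breve{\theta}_d=g^{\circ d}(\pi)\to 0$ (as $g(\theta)<\theta$ on $(0,\pi]$, with $0$ its only fixed point). The genuinely delicate step is the dichotomy: $M(\theta_0)$ itself decays polynomially in $d$ once $\theta_0$ is bounded away from $0$, so one must pin down that decay rate precisely enough — through $\breve{\theta}_i=O(1/i)$ — to be certain it still dominates $\epsilon$ under the hypothesis $24\pi d^6\sqrt\epsilon\le1$.
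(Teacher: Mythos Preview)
Your argument is correct and takes a genuinely different (and in several respects cleaner) route than the paper. The paper works in the $\{\hat x_0,e_2\}$ frame, obtains the two scalar inequalities $|-\beta+\cos\theta_0(r-\alpha)|\le\epsilon M$ and $|\sin\theta_0(r-\alpha)|\le\epsilon M$, first proves the crude bound $M\le 6d$, and then runs a dichotomy on whether $|r-\alpha|\gtrless\sqrt\epsilon M$; in the second branch it invokes the Hand--Voroninski estimate $\prod_i(\pi-\overline\theta_i)/\pi\ge(\pi-\theta_0)/(\pi d^3)$ to force $\theta_0\approx\pi$, and handles the small- and large-angle cases by tracking $\alpha$ and $\beta$ term by term. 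You instead resolve $h_{x,x_0}$ in the orthonormal $\{\hat x,w\}$ frame, which yields directly $4^d\|h\|^2=(\|x\|-\|x_0\|\xi_d)^2+\|x_0\|^2M^2\sin^2\theta_0$; your telescoping identity (via $\pi\cos\overline\theta_{i+1}=(\pi-\overline\theta_i)\cos\overline\theta_i+\sin\overline\theta_i$) collapses $\xi_d(\theta_0)$ to the single-variable $\xi(\overline\theta_d)$, and your verification that $g$ is $1$-Lipschitz (via the monotonicity of $s^2-s\sin2s+\sin^2 s$) is sharper than the $O_1(i\delta)$ drift the paper borrows. What your approach buys is a much shorter and more transparent endgame: $|s-1|\le4\epsilon$ in the small-angle case, $|s-\rho_d|\le 2\epsilon+|\theta_0-\pi|$ in the large-angle case, and the identification $\rho_d=\xi(\breve\theta_d)$ makes $\rho_d\to1$ immediate. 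What the paper's approach buys is that its dichotomy is fully quantified: your middle-interval lower bounds $\phi(\theta_0)\ge c_j d^{-(C_0+j)}$ are stated with unspecified $c_j,C_0$ coming from ``$\breve\theta_i=O(1/i)$,'' and the claim that these dominate $2\epsilon$ under $24\pi d^6\sqrt\epsilon\le1$ needs those constants pinned down (the Hand--Voroninski bound $\prod_i(\pi-\overline\theta_i)/\pi\ge(\pi-\theta_0)/(\pi d^3)$ would do this in one line, giving $M(\theta_0)\sin\theta_0\ge(\pi-\theta_0)\sin\theta_0/(2\pi d^3)$ and hence $C_0=3$). With that one clarification your proof is complete and yields constants well inside the stated ones.
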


\begin{proof}
Without loss of generality, let $x_0 = e_1$ and $\|x_0\| = 1$ where $e_1$ is the first standard basis vector in $\R^{k}$. We also set $x = \|x\|\left(\cos \overline{\theta}_0 e_1 + \sin \overline{\theta}_0 e_2\right)$ where $\overline{\theta}_0 = \angle(x,x_0)$. Then \begin{align*}
h_{x,x_0} & = - \frac{1}{2^d}\left(\frac{\pi - 2\overline{\theta}_d}{\pi}\right)\left(\prod_{i=0}^{d-1}\frac{\pi - \overline{\theta}_i}{\pi}\right)\hat{x}_0 \\
& + \frac{1}{2^d}\left[\|x\| - \left( \frac{2\sin \overline{\theta}_d}{\pi} - \left(\frac{\pi - 2\overline{\theta}_d}{\pi}\right) \sum_{i=0}^{d-1} \frac{\sin \overline{\theta}_i}{\pi}\left(\prod_{j=i+1}^{d-1} \frac{\pi - \overline{\theta}_i}{\pi}\right)\right)\right]\hat{x}.
\end{align*} Set \begin{align*}
\beta = \left(\frac{\pi - 2\overline{\theta}_d}{\pi}\right)\left(\prod_{i=0}^{d-1}\frac{\pi - \overline{\theta}_i}{\pi}\right)\ \text{and}\ \al = \frac{2\sin \overline{\theta}_d}{\pi} - \left(\frac{\pi - 2\overline{\theta}_d}{\pi}\right) \sum_{i=0}^{d-1} \frac{\sin \overline{\theta}_i}{\pi}\left(\prod_{j=i+1}^{d-1} \frac{\pi - \overline{\theta}_i}{\pi}\right)
\end{align*} with $r = \|x\|$ and $M = \max(r,1)$. Note that we can write \begin{align*}
h_{x,x_0} = \frac{1}{2^d}\left(-\beta \hat{x}_0 + (r - \al)\hat{x}\right)
\end{align*} Then if $x \in S_{\epsilon,x_0}$, we have that \begin{align}
|-\beta + \cos \overline{\theta}_0 (r - \al)| & \leq \epsilon M \label{cosleqM} \\
|\sin \overline{\theta}_0(r - \al)| & \leq \epsilon M. \label{sinleqM}
\end{align} We now tabulate some useful bounds from Lemma 8 in \cite{Hand2017}: \begin{align}
\overline{\theta}_i & \in [0,\pi/2]\ \text{for}\ i \geq 1 \\
\overline{\theta}_i & \leq \overline{\theta}_{i-1}\ \text{for}\ i \geq 1 \\ 
\left|\prod_{i=0}^{d-1}\frac{\pi - \overline{\theta}_i}{\pi}\right| & \leq 1 \label{piprod_upperbound}\\
\prod_{i=0}^{d-1}\frac{\pi - \overline{\theta}_i}{\pi} & \geq \frac{\pi - \overline{\theta}_0}{\pi d^3} \label{piprod_lowerbound}\\
\left|\sum_{i=0}^{d-1} \frac{\sin \overline{\theta}_i}{\pi}\left(\prod_{j=i+1}^{d-1} \frac{\pi - \overline{\theta}_i}{\pi}\right)\right| & \leq \frac{d}{\pi}\sin \overline{\theta}_0 \label{sum_prod_upperbound}\\
\overline{\theta}_0 = \pi + O_1(\delta) & \Longrightarrow \overline{\theta}_i = \breve{\theta}_i + O_1(i\delta) \label{pi_Oone_bound}\\
\overline{\theta}_0 = \pi + O_1(\delta) & \Longrightarrow \left|\prod_{i=0}^{d-1}\frac{\pi - \overline{\theta}_i}{\pi}\right| \leq \frac{\delta}{\pi} \\
\left|\frac{\pi - 2\overline{\theta}_i}{\pi}\right| & \leq 1\ \forall\ i \geq 1 \label{pi_twotheta_upperbound} \\
\overline{\theta}_d & \leq \cos^{-1}\left(\frac{1}{\pi}\right)\ \forall\ d \geq 2. \label{cos_inv_bound}
\end{align}

To prove the Proposition, we first show that it is sufficient to only consider the small and large angle case. Then, we show that in the small and large angle case, $x \approx x_0$ and $x \approx -\rho_d x_0$, respectively. We begin by proving that $\max(\|x\|,\|x_0\|) \leq 6d$ for any $x \in S_{\epsilon,x_0}$.

\textbf{Bound on maximal norm in $S_{\epsilon,x_0}$:} It suffices to show that $r \leq 6d$. Suppose $r > 1$ since if $r \leq 1$, the result is immediate. Then either $|\sin \overline{\theta}_0| \geq 1/\sqrt{2}$ or $|\cos \overline{\theta}_0| \geq 1/\sqrt{2}$. If $|\sin \overline{\theta}_0| \geq 1/\sqrt{2}$ then (\ref{sinleqM}) gives \begin{align*}
|r - \al| \leq \sqrt{2}\epsilon r \Longrightarrow (1-\sqrt{2}\epsilon)r \leq |\al|.
\end{align*} But \begin{align*}
|\al| & \leq \frac{2}{\pi}|\sin \overline{\theta}_d| + \left|\left(\frac{\pi - 2\overline{\theta}_d}{\pi}\right) \sum_{i=0}^{d-1} \frac{\sin \overline{\theta}_i}{\pi}\left(\prod_{j=i+1}^{d-1} \frac{\pi - \overline{\theta}_i}{\pi}\right)\right| \\
& \leq 1 + \frac{d}{\pi}
\end{align*} where the second inequality used equations (\ref{sum_prod_upperbound}) and (\ref{pi_twotheta_upperbound}). Thus \begin{align*}
r \leq \frac{1+ \frac{d}{\pi}}{1 - \sqrt{2}\epsilon} \leq 2\left(1 + \frac{d}{\pi}\right) \leq 2 + d \leq 2d
\end{align*} provided $\epsilon < 1/4$ and $d \geq 2$. If $|\cos \overline{\theta}_0 | \geq 1/\sqrt{2}$, then (\ref{cosleqM}) gives \begin{align*}
|r - \al| \leq \sqrt{2}(\epsilon r + |\beta|) \Longrightarrow (1 - \sqrt{2}\epsilon)r \leq \sqrt{2}|\beta| + \al.
\end{align*} But by (\ref{piprod_upperbound}),  \begin{align*}
|\beta| = \left|\left(\frac{\pi - 2\overline{\theta}_d}{\pi}\right) \left(\prod_{i=0}^{d-1} \frac{\pi - \overline{\theta}_i}{\pi}\right)\right| \leq 1\ \text{since}\ \overline{\theta}_i  \in [0,\pi/2]\ \forall\ i\ \geq 1.
\end{align*} Hence if $\epsilon < 1/4$, \begin{align*}
r \leq \frac{\sqrt{2} + 2d}{1 - \sqrt{2}\epsilon} \leq 2\sqrt{2} + 4d \leq \sqrt{2}d + 4d \leq 6d.
\end{align*} Thus in any case, $r \leq 6d \Longrightarrow M \leq 6d$. 

We now show that it is sufficient to only consider the small angle case $\overline{\theta}_0 \approx 0$ and the large angle case $\overline{\theta}_0 \approx \pi$.

\textbf{Sufficiency:} We have two possible situations:
\begin{itemize}
\item $|r - \al| \geq \sqrt{\epsilon}M$: Then (\ref{sinleqM}) implies \begin{align*}
|\sin \overline{\theta}_0| \leq \sqrt{\epsilon} \Longrightarrow \overline{\theta}_0 = O_1(2\sqrt{\epsilon})\ \text{or}\ \pi + O_1(2\sqrt{\epsilon}).
\end{align*}
\item $|r - \al| \leq \sqrt{\epsilon}M:$ Then (\ref{cosleqM}) implies \begin{align*}
|\beta| \leq 2 \sqrt{\epsilon}M.
\end{align*} But note that by (\ref{piprod_lowerbound}), \begin{align*}
\beta = \left(\frac{\pi - 2\overline{\theta}_d}{\pi}\right)\left(\prod_{i=0}^{d-1}\frac{\pi - \overline{\theta}_i}{\pi}\right) \geq \frac{(\pi - 2 \overline{\theta}_d)(\pi - \overline{\theta}_0)}{d^3\pi^2}.
\end{align*} In addition, (\ref{cos_inv_bound}) implies \begin{align*}
|\pi - 2\overline{\theta}_d| \geq \left|\pi - 2\cos^{-1}\left(\frac{1}{\pi}\right)\right|  \geq \frac{1}{2}.
\end{align*} Thus \begin{align*}
|\beta| \geq \frac{|(\pi - 2\overline{\theta}_d)(\pi - \overline{\theta}_0)|}{d^3 \pi^2} \geq \frac{|\pi - \overline{\theta}_0|}{2d^3\pi^2}
\end{align*} which implies \begin{align*}
|\pi - \overline{\theta}_0| \leq 4d^3\pi^2\sqrt{\epsilon}M \leq 24d^4\pi^2 \sqrt{\epsilon}.
\end{align*} Thus $\overline{\theta}_0 = \pi + O_1(24d^4\pi^2 \sqrt{\epsilon})$.
\end{itemize}

Lastly, we show that in the small angle case, $x \approx x_0$, while in the large angle case, $x \approx -\rho_d x_0$.

\textbf{Small Angle Case:} Assume $\overline{\theta}_0 = O_1(2\sqrt{\epsilon})$. Note that since $\overline{\theta}_i \leq \overline{\theta}_0 \leq 2\sqrt{\epsilon}$ for each $i$, we have that \begin{align*}
\prod_{i=0}^{d-1} \frac{\pi - \overline{\theta}_i}{\pi} 
\geq \left(1 - \frac{2\sqrt{\epsilon}}{\pi} \right)^d = 1 + O_1 \left(\frac{4d\sqrt{\epsilon}}{\pi}\right)
\end{align*} provided $2d \sqrt{\epsilon} \leq 1/2$. Hence \begin{align*}
\beta  & = \left(\frac{\pi - 2\overline{\theta}_d}{\pi}\right) \left(\prod_{i=0}^{d-1} \frac{\pi - \overline{\theta}_i}{\pi}\right) \\
& \geq \left(1 + O_1\left(\frac{4\sqrt{\epsilon}}{\pi}\right)\right)\left(1 + O_1\left(\frac{4d\sqrt{\epsilon}}{\pi}\right)\right)\
\end{align*} where we used (\ref{pi_Oone_bound}) in the second inequality. In addition, $|\sin \overline{\theta}_d| \leq |\overline{\theta}_d| \leq 2\sqrt{\epsilon}$ and (\ref{sum_prod_upperbound}) imply that \begin{align*}
\left|\sum_{i=0}^{d-1} \frac{\sin \overline{\theta}_i}{\pi}\left(\prod_{j=i+1}^{d-1} \frac{\pi - \overline{\theta}_i}{\pi}\right)\right| \leq \frac{d}{\pi}|\sin \overline{\theta}_d| \leq d\sqrt{\epsilon}. 
\end{align*} Hence \begin{align*}
\al & = \frac{2\sin \overline{\theta}_d}{\pi} - \left(\frac{\pi - 2\overline{\theta}_d}{\pi}\right) \sum_{i=0}^{d-1} \frac{\sin \overline{\theta}_i}{\pi}\left(\prod_{j=i+1}^{d-1} \frac{\pi - \overline{\theta}_i}{\pi}\right) \\
& = O_1\left(\frac{4\sqrt{\epsilon}}{3\pi}\right) + \left(1 + O_1\left(\frac{4\sqrt{\epsilon}}{\pi}\right)\right)O_1(d\sqrt{\epsilon}) \\
& = O_1\left(\frac{4\sqrt{\epsilon}}{3\pi}\right) + O_1(d\sqrt{\epsilon}) + O_1\left(\frac{4d\epsilon}{\pi}\right) \\
& = O_1 \left(\frac{(4 + 3d\pi + 12d)\sqrt{\epsilon}}{3\pi}\right)
\end{align*} Thus since $|-\beta + \cos \overline{\theta}_0(r - \al)| \leq \epsilon M$ and $M \leq 6d$, we attain \begin{align*}
- \left(1 + O_1\left(\frac{4\sqrt{\epsilon}}{\pi}\right)\right)\left(1 + O_1\left(\frac{4d\sqrt{\epsilon}}{\pi}\right)\right) & + (1 + O_1(2\epsilon))\left(r + O_1 \left(\frac{(4 + 3d\pi + 12d)\sqrt{\epsilon}}{3\pi}\right)\right) \\
& = O_1(6d\epsilon).
\end{align*} Rearranging, this gives \begin{align*}
r-1 & = O_1\left(\frac{4d\sqrt{\epsilon}}{\pi} + \frac{4\sqrt{\epsilon}}{\pi} + \frac{16d\epsilon}{\pi} + (2\epsilon + 1)\frac{(4 + 3d\pi + 12d)\sqrt{\epsilon}}{3\pi}\right) + O_1(12d\epsilon) + O_1(6d\epsilon)  \\
& = O_1\left(\frac{(12d + 12 + 48d)\sqrt{\epsilon} + (2\epsilon + 1)(4 + 3\pi d + 12d)\sqrt{\epsilon}}{3\pi} + 18d\sqrt{\epsilon}\right)\\
& = O_1(29d \sqrt{\epsilon}).
\end{align*}

\textbf{Large Angle Case:} Assume $\overline{\theta}_0 = \pi + O_1(\delta)$ where $\delta := 24d^4\pi^2\sqrt{\epsilon}.$  We first prove that $\al$ is close to $\rho_d$ in the $O_1$ sense. Recall that $\overline{\theta}_d = \breve{\theta}_d + O_1(d\delta)$. Then by the mean value theorem: \begin{align*}
|\sin \overline{\theta}_d - \sin \breve{\theta}_d | \leq |\overline{\theta}_d - \breve{\theta}_d | \leq d \delta
\end{align*} so $\sin \overline{\theta}_d = \sin \breve{\theta}_d + O_1(d\delta)$. Let \begin{align*}
\Gamma_d : = \sum_{i=0}^{d-1} \frac{\sin \breve{\theta}_i}{\pi}\left(\prod_{j=i+1}^{d-1} \frac{\pi - \breve{\theta}_i}{\pi}\right).
\end{align*} In \cite{Hand2017}, it was shown that if $\frac{d^2\delta}{\pi} \leq 1$, then
\begin{align*}
\sum_{i=0}^{d-1} \frac{\sin \overline{\theta}_i}{\pi}\left(\prod_{j=i+1}^{d-1} \frac{\pi - \overline{\theta}_i}{\pi}\right) = \Gamma_d + O_1(3d^3\delta).
\end{align*} Also $|\Gamma_d| \leq d$.
Thus \begin{align*}
\al & = \frac{2\sin \overline{\theta}_d}{\pi} - \left(\frac{\pi - 2\overline{\theta}_d}{\pi}\right) \sum_{i=0}^{d-1} \frac{\sin \overline{\theta}_i}{\pi}\left(\prod_{j=i+1}^{d-1} \frac{\pi - \overline{\theta}_i}{\pi}\right) \\
& = \frac{2\sin \breve{\theta}_d}{\pi} + O_1\left(\frac{2d\delta}{\pi}\right) - \left(\frac{\pi - 2 \breve{\theta}_d}{\pi} + O_1 \left(\frac{2d\delta}{\pi}\right)\right)\left(\Gamma_d + O_1(3d^3 \delta)\right) \\
& = \rho_d + O_1\left(\frac{2d\delta}{\pi}\right) + \Gamma_d O_1\left(\frac{2d\delta}{\pi}\right) + \left(\frac{\pi - 2\breve{\theta}_d}{\pi}\right)O_1\left(3d^3\delta\right) + O_1 \left(\frac{6d^4\delta^2}{\pi}\right) \\
& = \rho_d + O_1\left(\frac{2d\delta}{\pi}\right) + O_1\left(\frac{2d^2\delta}{\pi}\right)+ O_1\left(3d^3\delta\right) + O_1\left(\frac{6d^4\delta^2}{\pi}\right) \\
& = \rho_d + O_1 \left(\left(\frac{4\delta}{\pi} + 3\delta + \frac{6\delta^2}{\pi}\right)d^4\right) \\
& = \rho_d + O_1(7d^4\delta).
\end{align*} We now prove $r$ is close to $\rho_d$ in the $O_1$ sense. Since $x \in S_{\epsilon,x_0}$, \begin{align*}
|-\beta + \cos \overline{\theta}_0(r - \al)|\leq \epsilon M.
\end{align*} Also note that $|\beta| \leq \delta/\pi$. Since $\cos \overline{\theta}_0 = 1 + O_1(\overline{\theta}_0^2/2)$, we have that \begin{align*}
O_1(\delta/\pi) + (1 + O_1(\delta^2/2))(r - \rho_d + O_1(7d^4\delta)) = O_1(\epsilon M).
\end{align*} Using $r \leq 6d$, $\rho_d \leq 2d$, and $\delta = 24d^4\pi^2\sqrt{\epsilon} \leq 1$, we get \begin{align*}
r - \rho_d & + O_1 \left(\frac{\delta^2}{2}\right)( r - \rho_d) + O_1(7d^4 \delta) + O_1\left(\frac{7d^4\delta^3}{2}\right) = O_1(\epsilon M) + O_1 \left(\frac{\delta}{\pi}\right)  \\
\Longrightarrow r - \rho_d & = O_1\left(4d\delta^2 + 7d^4\delta + \frac{7d^4\delta^3}{2} + 6d\epsilon + \frac{\delta}{\pi}\right) \\
& = O_1\left(6d\epsilon + \delta\left(4d + 7d^4 + \frac{7d^4}{2} + \frac{1}{\pi}\right)\right) \\
& = O_1\left(\left(6d + 24d^4\pi^2\left(4d + \frac{21d^4}{2} + \frac{1}{\pi}\right)\right)\sqrt{\epsilon}\right) \\
& = O_1(3517d^8\sqrt{\epsilon}).
\end{align*}

Finally, to complete the proof we use the inequality \begin{align*}
\|x - x_0\| \leq |\|x\| - \|x_0\|| + \left(\|x_0\| + |\|x\| - \|x_0\||\right)\overline{\theta}_0.
\end{align*} This inequality states that if a two dimensional point is known to be within $\Delta r$ of magnitude $r$ and an angle $\Delta \theta$ away from $0$, then it is at most a Euclidean distance of $\Delta r + (r + \Delta r)\Delta \theta)$ away from the point $(r,0)$ in polar coordinates. Hence we attain \begin{align*}
S_{\epsilon,x_0} \subset \mathcal{B}(x_0,89d\sqrt{\epsilon}) \cup \mathcal{B}(\rho_d x_0, 836831d^{12}\sqrt{\epsilon}).
\end{align*} The result that $\rho_d \rightarrow 1$ as $d \rightarrow \infty$ follows from the following facts: \begin{align*}
-\frac{2\sin \breve{\theta}_d}{\pi} \rightarrow 0\ \text{as}\ d \rightarrow \infty\ \text{since}\ \breve{\theta}_d \rightarrow 0\ \text{as}\ d\rightarrow \infty
\end{align*} and in \cite{Hand2017}, it was shown that \begin{align*}
\sum_{i=0}^{d-1}\frac{\sin \breve{\theta}_i}{\pi} \left(\prod_{j=i+1}^{d-1} \frac{\pi - \breve{\theta}_i}{\pi}\right) \rightarrow 1\ \text{as}\ d \rightarrow \infty.
\end{align*} Hence \begin{align*}\left(\frac{\pi - 2 \breve{\theta}_d}{\pi}\right) \sum_{i=0}^{d-1}\frac{\sin \breve{\theta}_i}{\pi} \left(\prod_{j=i+1}^{d-1} \frac{\pi - \breve{\theta}_i}{\pi}\right) \rightarrow 1\ \text{as}\ d \rightarrow \infty
\end{align*} so $\rho_d \rightarrow 1$ as $d \rightarrow \infty$.
\end{proof}

\subsection{Gaussian matrices satisfy RRCP}

We set out to prove the following:
\begin{prop}
Fix $0 < \epsilon < 1$. Let $A \in \R^{m \times n_d}$ have i.i.d. $\mathcal{N}(0,1/m)$ entries. If $m > \tilde{c}dk \log (n_1n_2\dots n_d)$, then with probability at least $1 - \gamma m^{4k} \exp(-\hat{C} m)$, $A$ satisfies the RRCP with constant $\epsilon$. Here $\gamma$ is a positive universal constant, $\hat{C}$ depends on $\epsilon$, and $\tilde{c}$ depends polynomially on $\epsilon^{-1}$
\end{prop}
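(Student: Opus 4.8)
The plan is to establish the RRCP via a uniform concentration argument, adapting the strategy used for the range‑restricted isometry condition in \cite{Hand2017} to the sign‑sensitive object $A_z^\top A_w$. Write $a_1^\top,\dots,a_m^\top$ for the rows of $A$, so the $a_i$ are i.i.d.\ $\mathcal N(0,I_{n_d}/m)$, and abbreviate $z=G(x)$, $w=G(y)$, $u=G(x_1)-G(x_2)$, $v=G(x_3)-G(x_4)$. Then
\[
\langle A_z^\top A_w\, u, v\rangle \;=\; \sum_{i=1}^m \sgn(\langle a_i,z\rangle)\,\sgn(\langle a_i,w\rangle)\,\langle a_i,u\rangle\langle a_i,v\rangle ,
\]
a sum of i.i.d.\ terms whose mean is $\langle \Phi_{z,w} u,v\rangle$ by the same elementary Gaussian computation behind $\E[A_z^\top A_w]=\Phi_{z,w}$. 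So it suffices to bound the deviation of this sum from its mean, uniformly over all admissible $(z,w,u,v)$, by $7\epsilon\|u\|\|v\|$.

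First I would reduce the uniformity to a finite union of low‑dimensional problems. By piecewise linearity of $G$, its range is a union of at most $N:=\prod_{i=1}^d (Cn_i)^k$ polyhedral cones, each the image under a fixed rank‑$\le k$ linear map of a cone in $\R^k$; here $(Cn_i)^k$ bounds the number of cells of an arrangement of $n_i$ hyperplanes through the origin inside a $\le k$‑dimensional space, and the product over layers is what produces the $dk$ dependence after taking logarithms. Consequently $z,w$ range over $N$ cones of dimension $\le k$, and each chord $u$ (resp.\ $v$) ranges over one of $\le N^2$ subspaces of dimension $\le 2k$. Fixing such a tuple of regions, all the quantities above depend on $A$ only through the projections of the $a_i$ onto a fixed subspace of dimension $r\le 6k$, which by rotational invariance behaves like an $m\times r$ Gaussian; moreover the coefficients $\sgn(\langle a_i,z\rangle)\sgn(\langle a_i,w\rangle)$ take at most $(Cm)^{2k}$ distinct values as $(z,w)$ varies (cells of $m$ hyperplanes in the $\le 2k$‑dimensional span of the two cones), on each of which $A_z^\top A_w$ is a fixed matrix. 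After a further $\delta$‑net of the relevant $\le 2k$‑dimensional spheres of $z,w$ and of $u,v$ with $\delta$ polynomially small in $m$, the problem collapses to at most $m^{4k}\cdot N^{O(1)}$ pointwise estimates.

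For a fixed $(z,w,u,v)$ each summand is at most $|\langle a_i,u\rangle\langle a_i,v\rangle|$, a product of correlated Gaussians of variances $\|u\|^2/m,\|v\|^2/m$, hence sub‑exponential with parameter $O(\|u\|\|v\|/m)$; Bernstein's inequality gives $\Pro\big(|\sum_i X_i-\E\sum_i X_i|>t\|u\|\|v\|\big)\le 2\exp(-c\min(t^2,t)\,m)$, so choosing $t$ a small multiple of $\epsilon$ yields failure probability $\le 2\exp(-\hat C(\epsilon)m)$ at each net point. Taking a union bound, the logarithm of the number of configurations is $O(k\log m)+O(\log N)=O\big(k\log m + dk\log(n_1\cdots n_d)\big)$; the $N^{O(1)}$ part is absorbed into the exponent once $m\ge \tilde c\,dk\log(n_1\cdots n_d)$ with $\tilde c$ large (polynomially in $\epsilon^{-1}$), leaving exactly the prefactor $m^{4k}$ in the stated bound $1-\gamma m^{4k}\exp(-\hat C m)$.

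The step I expect to be the main obstacle is passing from the net back to all $(z,w)$, since $z\mapsto A_z$ is discontinuous and Lipschitz continuity is unavailable. The fix is combinatorial: for $z'$ near a net point $z$ (and $w'$ near $w$),
\[
A_{z'}^\top A_{w'}-A_z^\top A_w=\sum_{i}\big(\sgn\langle a_i,z'\rangle\sgn\langle a_i,w'\rangle-\sgn\langle a_i,z\rangle\sgn\langle a_i,w\rangle\big)a_ia_i^\top ,
\]
and this is supported on the indices $i$ for which $a_i$ separates $z$ from $z'$ or $w$ from $w'$, i.e.\ $|\langle a_i,\hat z\rangle|\lesssim\delta$ or $|\langle a_i,\hat w\rangle|\lesssim\delta$. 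The number of such indices is stochastically dominated by a $\mathrm{Binomial}(m,O(\delta))$, so by another union bound over the finitely many cones it is $O(\delta m + k\log m)$ with high probability, while each contributes $O(k/m)$ to the operator norm restricted to the $r$‑dimensional subspace (using the auxiliary event $\|Pa_i\|^2=O(k/m)$). Picking $\delta$ a small polynomial in $1/m$ makes this fluctuation, together with the net‑discretization error in $u,v$ — harmless because $u\mapsto\langle A_z^\top A_w u,v\rangle$ is bilinear — and the continuity error $\|\Phi_{z,w}-\Phi_{z',w'}\|\lesssim|\theta_{z,w}-\theta_{z',w'}|$, each at most $\epsilon\|u\|\|v\|$; adding these three contributions to the pointwise Bernstein error produces the constant $7$. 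Verifying that all the accompanying events (row‑norm concentration, the Binomial tail, the hyperplane‑cell counts) hold simultaneously with the stated probability is where the bulk of the technical work lies.
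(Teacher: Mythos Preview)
Your high-level architecture matches the paper's: Bernstein for fixed $(z,w,u,v)$, a union bound over the finitely many linear pieces of $G$, a hyperplane-cell count for the sign patterns of $(z,w)$ producing the $m^{4k}$ prefactor, and a net for $(u,v)$. Where you diverge is exactly at the step you flag as the main obstacle. The paper does \emph{not} use a $\delta$-net plus sign-flip counting for $(z,w)$. Instead it works directly with the sign cells: on each cell $A_z$ is literally constant, so only the variation of $\Phi_{z,w}$ across the cell has to be controlled, and for that the paper imports the $1$-bit compressed sensing tessellation theorem of Plan--Vershynin, which says that whenever $m\gtrsim k(c/\epsilon)^{5}$, any two unit vectors in a $k$-dimensional subspace with identical sign patterns $\sgn(Az)$ are within $\epsilon$ of each other. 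This yields, for every $z$, a cell representative $z_i$ with $A_z=A_{z_i}$ exactly and $\|z-z_i\|\le\epsilon$, so $\|\Phi_{z,w}-\Phi_{z_i,w_j}\|\le 4\epsilon$ by Lipschitz continuity of $\Phi$; adding this to the $3\epsilon$ from the fixed-$(z_i,w_j)$ RCP gives the constant $7$.

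Your alternative---a $\delta$-net with $\delta$ polynomially small in $m$ and a bound on the number of sign flips between a net point and a nearby $z'$---is a legitimate route, but the sketch is loose in two places. First, you invoke both the cell count and a $\delta$-net for $(z,w)$; one suffices, and mixing them obscures where the $m^{4k}$ actually comes from (in the paper it is purely the cell count $10m^{2k}$ per variable). Second, and more seriously, the crude estimate ``each flip contributes $O(k/m)$ via $\|Pa_i\|^2=O(k/m)$'' is only true in expectation; the maximum over $i\in[m]$ is $O((k+\log m)/m)$, and multiplying by $|S|\lesssim \delta m+k\log m$ gives a term of order $k^2\log^2 m/m$, which is \emph{not} absorbed by the stated hypothesis $m\gtrsim \tilde c\, dk\log(n_1\cdots n_d)$ when $k$ is large relative to $d$. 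To get back to linear-in-$k$ you would have to treat $\sum_{i\in S}Pa_i a_i^\top P$ as a random matrix with operator norm $\lesssim (|S|+k)/m$ rather than summing worst-case row norms, and to justify that you must decouple the flip set $S$ (which depends on $\langle a_i,z\rangle$) from the $u,v$-components of the same $a_i$. All of this can be done, but it is essentially the content of the tessellation theorem the paper cites; using that result as a black box is both cleaner and delivers the stated $m^{4k}$ prefactor and sample complexity without extra $k$ factors.
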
 To show that Gaussian $A$ satisfies the RRCP, we first establish that for any \textit{fixed} $z,w \in \R^n$, the inner product $\langle A_z^\top A_wx,y\rangle$ concentrates around its expectation $\langle \Phi_{z,w}x,y\rangle$ for all $x$ and $y$ in a fixed $k$-dimensional subspace of $\R^n$. As we will see by the end of this section, this fixed $k$-dimensional subspace will represent the range of our generative model. We first prove a simple technical result:

\begin{prop} Fix $z,w \in \R^n$ and $0 < \epsilon < 1$. Let $T \subset \R^n$ be a $k$-dimensional subspace. If \begin{align}
\left|\langle  A_z x, A_w x\rangle - \langle \Phi_{z,w} x, x\rangle\right| \leq \epsilon \|x\|^2\ \forall\ x \in T \label{tech_hypothesis}
\end{align} then \begin{align*}
\left|\langle  A_z x, A_w y\rangle - \langle \Phi_{z,w} x, y\rangle\right| \leq 3\epsilon\|x\|\|y\|\ \forall\ x,y \in T.
\end{align*}
\end{prop}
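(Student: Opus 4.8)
The plan is to pass from the diagonal bound (\ref{tech_hypothesis}) to the off-diagonal bound by polarization, which is legitimate here because the underlying bilinear form turns out to be symmetric. The key observation is that $A_z^\top A_w$ is a symmetric matrix. Indeed, since $A_z = \diag(\sgn(Az))\,A$ and $A_w = \diag(\sgn(Aw))\,A$, and the product of two diagonal matrices is diagonal,
\[
A_z^\top A_w \;=\; A^\top \diag(\sgn(Az))\,\diag(\sgn(Aw))\, A \;=\; A^\top \diag\!\big(\sgn(Az)\odot\sgn(Aw)\big)\, A ,
\]
which has the form $A^\top D A$ with $D$ diagonal and hence is symmetric. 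Moreover $\Phi_{z,w}$ is symmetric: $I_{n}$ is, and $M_{\hat z\leftrightarrow\hat w}$ is self-adjoint (it vanishes on $\operatorname{span}\{z,w\}^{\perp}$, and a direct check shows it is self-adjoint on the two-dimensional space $\operatorname{span}\{z,w\}$). Therefore $E := A_z^\top A_w - \Phi_{z,w}$ is a symmetric $n\times n$ matrix, and
\[
B(x,y) \;:=\; \langle A_z x,\, A_w y\rangle - \langle \Phi_{z,w}x,\,y\rangle \;=\; \langle Ex,\,y\rangle
\]
is a symmetric bilinear form on $\R^{n}$, in particular on $T$; the hypothesis (\ref{tech_hypothesis}) says exactly that $|B(x,x)|\le\epsilon\|x\|^2$ for every $x\in T$.

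It then remains to run the polarization estimate, homogenizing so that the bound scales like $\|x\|\|y\|$ rather than $\|x\|^2+\|y\|^2$. Fix nonzero $x,y\in T$ (the claim is trivial if either is zero) and put $u=x/\|x\|$, $v=y/\|y\|$, so $u,v\in T$ are unit vectors. From the polarization identity $B(u,v) = \frac14\big(B(u+v,u+v)-B(u-v,u-v)\big)$, the triangle inequality, the bound (\ref{tech_hypothesis}) applied to $u\pm v\in T$, and the parallelogram law $\|u+v\|^2+\|u-v\|^2 = 2\|u\|^2+2\|v\|^2 = 4$, we get
\[
|B(u,v)| \;\le\; \frac{\epsilon}{4}\big(\|u+v\|^2+\|u-v\|^2\big) \;=\; \epsilon .
\]
By bilinearity $|B(x,y)| = \|x\|\,\|y\|\,|B(u,v)| \le \epsilon\|x\|\|y\| \le 3\epsilon\|x\|\|y\|$, which is the asserted inequality (with room to spare in the constant).

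The only genuine obstacle is the symmetry observation; everything afterwards is elementary. Without it the statement would in fact be false: controlling $\langle(A_z^\top A_w-\Phi_{z,w})x,x\rangle$ for all $x$ pins down only the symmetric part of $A_z^\top A_w - \Phi_{z,w}$, and an antisymmetric perturbation — invisible on the diagonal — could make $\langle(A_z^\top A_w-\Phi_{z,w})x,y\rangle$ arbitrarily large for $x\neq y$. The homogenization step (normalizing to unit vectors before polarizing) is the other point worth flagging, since applying polarization directly to $x,y$ would only yield a bound of the form $\frac{\epsilon}{2}(\|x\|^2+\|y\|^2)$, which is not of the required shape.
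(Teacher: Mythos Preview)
Your proof is correct and rests on the same key observation as the paper's: both $A_z^\top A_w$ and $\Phi_{z,w}$ are symmetric, so after normalizing to unit vectors one can polarize. The paper applies the hypothesis only to $x-y$ (and separately to $x$ and $y$), handling the upper and lower bounds one at a time and picking up the constant $3\epsilon$; your use of the full polarization identity $B(u,v)=\tfrac14\big(B(u+v,u+v)-B(u-v,u-v)\big)$ together with the parallelogram law is cleaner and in fact yields the sharper constant $\epsilon$, so the stated $3\epsilon$ holds with room to spare.
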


\begin{proof} Fix $0 < \epsilon < 1$. Suppose (\ref{tech_hypothesis}) holds and fix $x,y \in T$. Without loss of generality, assume $x$ and $y$ are unit normed. We will use the shorthand notation $\Phi = \Phi_{z,w}$. Since $T$ is a subspace, $x - y \in T$ so by (\ref{tech_hypothesis}), \begin{align*}
\left|\langle  A_z (x - y), A_w(x-y)\rangle - \langle \Phi (x-y), x-y\rangle\right| \leq \epsilon\|x-y\|^2
\end{align*} or equivalently \begin{align}
\langle \Phi (x-y), x-y\rangle - \epsilon \|x-y\|^2 \leq \langle  A_z (x - y), A_w(x-y)\rangle  \leq \langle \Phi (x-y), x-y\rangle + \epsilon \|x-y\|^2. \label{expand_out}
\end{align} Note that \begin{align*}
\|x-y\|^2 = 2 - 2\langle x,y\rangle,
\end{align*} \begin{align*}
\langle \Phi (x-y), x-y\rangle = \langle \Phi x,x\rangle + \langle \Phi y,y\rangle - 2\langle \Phi x,y\rangle,
\end{align*} and \begin{align*}
\langle  A_z (x - y), A_w(x-y)\rangle = \langle A_z x, A_w x\rangle + \langle A_z y, A_w y\rangle - 2 \langle A_z x, A_w y\rangle
\end{align*} where we used the fact that $\Phi$ and $A_z^\top A_w$ are symmetric. Rearranging (\ref{expand_out}) yields \begin{align*}
2 \left(\langle \Phi x,y \rangle - \langle A_z x, A_w y \rangle\right) \leq \left(\langle \Phi x, x\rangle - \langle  A_z x , A_w x\rangle\right) + \left(\langle \Phi  y, y\rangle - \langle  A_z y, A_w y\rangle\right) + (2 - 2\langle x,y \rangle)\epsilon.
\end{align*} By assumption, the first two terms are bounded from above by $\epsilon$. Thus \begin{align*}
2 \left(\langle \Phi x,y \rangle - \langle A_z x, A_w y \rangle\right) & \leq 2 \epsilon + (2 - 2\langle x,y\rangle)\epsilon \\
& = 2(2 - \langle x,y \rangle)\epsilon \\
& \leq 6 \epsilon
\end{align*} so \begin{align*}
\langle \Phi x,y \rangle - \langle A_z x, A_w y \rangle \leq 3 \epsilon.
\end{align*} The lower bound is identical. Hence \begin{align*}
\left|\langle \Phi x,y \rangle - \langle A_z x, A_w y \rangle \right| \leq 3 \epsilon.
\end{align*}
\end{proof}

We now require a variation of the Restricted Isometry Property typically proven for Gaussian matrices. In our situation, the matrix $A_z^\top A_w$ concentrates around $\Phi_{z,w} \neq I_{n}$ for $z \neq w$, so we must prove a generalization which we call the \textit{Restricted Concentration Property} (RCP). First, recall that for any $z,w \in \R^n$, $\E[A_z^\top A_w] = \Phi_{z,w}.$ In addition, we have that for any $x \in \R^n$, \begin{align*}
\left|\langle A_z^\top A_w x,x \rangle - \langle \Phi_{z,w}x,x \rangle\right| = \frac{1}{m} \left|\sum_{i=1}^m Y_i \right|
\end{align*} where \begin{align*}
Y_i = X_i - \E[X_i]\ \text{and}\ X_i = \sign(\langle a_i,z \rangle \langle a_i, w\rangle) \langle a_i , x\rangle^2.
\end{align*} Here each $a_i$ denotes an unnormalized row of $A$ in which $a_i \sim \mathcal{N}(0,I_n)$. Hence $Y_i$ are independent, centered, subexponential random variables. Thus they satisfy the following large deviation inequality: \begin{lem}[Corollary 5.17 in \cite{Vershynin_notes}]
Let $Y_1,\dots,Y_m$ be independent, centered, subexponential random variables. Let $K = \max_{i \in[m]} \|Y_i\|_{\psi_1}$. Then for all $\epsilon > 0$, \begin{align*}
\Pro\left(\frac{1}{m} \left|\sum_{i=1}^m Y_i \right| \geq \epsilon \right) \leq 2 \exp\left[-c \min\left(\frac{\epsilon^2}{K^2}, \frac{\epsilon}{K}\right)m\right]
\end{align*} where $c > 0$ is an absolute constant.
\end{lem}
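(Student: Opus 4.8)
The plan is to prove this via the standard exponential-moment (Chernoff) method, which is the canonical route to Bernstein-type tails for subexponential variables. Writing $S = \sum_{i=1}^m Y_i$, I first apply Markov's inequality to $e^{\lambda S}$ for a free parameter $\lambda > 0$ and use independence to factor the moment generating function:
\[
\Pro\left(\frac{1}{m}\sum_{i=1}^m Y_i \ge \epsilon\right) = \Pro\left(e^{\lambda S} \ge e^{\lambda m \epsilon}\right) \le e^{-\lambda m \epsilon}\prod_{i=1}^m \E\left[e^{\lambda Y_i}\right].
\]
Everything then reduces to controlling a single factor $\E[e^{\lambda Y_i}]$ for $\lambda$ in a suitable range near the origin, optimizing the resulting exponent over $\lambda$, and symmetrizing to remove the absolute value.

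The key estimate to establish first is that a centered subexponential variable has a subgaussian moment generating function in a neighborhood of the origin: there exist absolute constants $C_1, c_1 > 0$ such that $\E[e^{\lambda Y_i}] \le e^{C_1 \lambda^2 K^2}$ whenever $|\lambda| \le c_1/K$. I would derive this directly from the definition $K = \|Y_i\|_{\psi_1} = \inf\{t>0 : \E\, e^{|Y_i|/t}\le 2\}$, which first yields the moment growth bound $\E|Y_i|^p \le (C K)^p\, p!$ (equivalently $(\E|Y_i|^p)^{1/p} \lesssim K p$). Expanding $e^{\lambda Y_i} = 1 + \lambda Y_i + \sum_{p\ge 2}\lambda^p Y_i^p/p!$, taking expectations, using $\E Y_i = 0$ to annihilate the linear term, and bounding the tail of the series by a geometric series gives $\E[e^{\lambda Y_i}] \le 1 + C_1\lambda^2 K^2 \le e^{C_1\lambda^2 K^2}$ on the stated range. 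This is the crux of the argument and the only place the subexponential hypothesis enters; the restriction $|\lambda| \le c_1/K$ is exactly the radius of convergence of the series, and it is what ultimately forces the two-regime structure of the tail.

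Combining the two bounds gives $\Pro(S/m \ge \epsilon) \le \exp(-\lambda m \epsilon + C_1 \lambda^2 K^2 m)$ for every $0 < \lambda \le c_1/K$. The unconstrained minimizer of the exponent is $\lambda^\star = \epsilon/(2C_1 K^2)$. If $\lambda^\star \le c_1/K$, i.e. $\epsilon \lesssim K$, substituting it in produces the subgaussian tail $\exp(-c\,\epsilon^2 m/K^2)$; otherwise, for large $\epsilon$ the exponent is strictly decreasing across the whole admissible interval, so I take $\lambda = c_1/K$ at the boundary and obtain the subexponential tail $\exp(-c\,\epsilon m/K)$. Writing the two cases together as $\exp(-c\min(\epsilon^2/K^2,\,\epsilon/K)\,m)$ covers both. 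Finally, applying the identical argument to $-S$, whose summands $-Y_i$ are again centered with the same $\psi_1$ norm, and taking a union bound over $\{S/m \ge \epsilon\}$ and $\{-S/m \ge \epsilon\}$ controls $\{|S|/m \ge \epsilon\}$ and supplies the factor of $2$. The main obstacle is the moment generating function estimate of the second paragraph, where one must track the absolute constants and the admissible range of $\lambda$ with care; the Chernoff optimization and the symmetrization step are then routine.
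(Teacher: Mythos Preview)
Your proposal is correct and follows the canonical Chernoff--Bernstein argument for subexponential sums: the MGF bound $\E e^{\lambda Y_i}\le e^{C\lambda^2 K^2}$ on $|\lambda|\le c/K$, optimization of the exponent producing the two regimes, and a union bound for the two-sided tail are exactly the standard steps.

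Note, however, that the paper does not prove this lemma at all: it is quoted verbatim as Corollary~5.17 from Vershynin's lecture notes and used as a black box to obtain the pointwise concentration bound (\ref{conc_bound}). So there is no ``paper's own proof'' to compare against. That said, your argument is precisely the one Vershynin gives in the cited reference, so there is no divergence in approach.
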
 Observe that for any $x \in S^{n-1}$, Lemma 4 guarantees that for any fixed $z,w \in \R^n$ and $\epsilon > 0$, \begin{align}
\Pro\left(|\langle A_z^\top A_w x,x \rangle - \langle \Phi_{z,w} x,x \rangle| \geq \epsilon\right) \leq 2\exp(-c_0(\epsilon) m) \label{conc_bound}
\end{align} where $c_0(\epsilon) = c \min(\epsilon^2/K^2,\epsilon/K).$ We are now equipped to proceed with the proof of the RCP.

\begin{prop}[Variant of Lemma 5.1 in \cite{Baraniuk_lemma}: RCP] Fix $0 < \epsilon < 1$ and $k < m$. Let $A \in \R^{m \times n}$ have i.i.d. $\mathcal{N}(0,1/m)$ entries and fix $z,w \in \R^n$. Let $T \subset \R^n$ be a $k$-dimensional subspace. Then with probability exceeding $1 - 2(42/\epsilon)^k\exp(-c_0(\epsilon/8)m)$, \begin{align}
|\langle A_z^\top A_w x, x \rangle - \langle \Phi_{z,w} x, x \rangle| \leq \epsilon \|x\|^2\ \forall\ x \in T \label{RCP_bound}
\end{align} and \begin{align}
|\langle A_z^\top A_w x, y \rangle - \langle \Phi_{z,w} x, y \rangle| \leq 3\epsilon \|x\|\|y\|\ \forall\ x,y \in T. \label{RCP_3bound}
\end{align} Furthermore, let $U = \bigcup_{i=1}^M U_i$ and $V = \bigcup_{j=1}^N V_j$ where $U_i$ and $V_j$ are subspaces of $\R^n$ of dimension at most $k$ for all $i$ and $j$. Then
\begin{align}
\left|\langle  A_{z}^\top A_w u, v \rangle - \langle \Phi_{z,w} u,v\rangle\right| \leq 3\epsilon \|u\|\|v\|\ \forall\ u \in U,\ v \in V \label{RCP_union_bound}
\end{align} with probability exceeding $1 - 2MN(42/\epsilon)^{2k}\exp(-c_0(\epsilon/8)m)$. Here $c_0(\epsilon/8) > 0$ only depends on $\epsilon$.

\end{prop}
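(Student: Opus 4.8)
The plan is to run the standard covering--number proof of the Restricted Isometry Property, in the form of \cite{Baraniuk_lemma}, adapted to the fact that here $A_z^\top A_w$ concentrates around $\Phi_{z,w}$ rather than around $I_n$. First I would make two reductions. Inequality (\ref{RCP_3bound}) follows from (\ref{RCP_bound}) by Proposition 3 (polarization on the subspace $T$). And (\ref{RCP_union_bound}) follows by applying the single--subspace conclusion (\ref{RCP_3bound}) to each of the $MN$ subspaces $U_i + V_j$, each of dimension at most $2k$, and then taking a union bound over the pairs $(i,j)\in[M]\times[N]$: any $u\in U$ lies in some $U_i$, any $v\in V$ in some $V_j$, and both lie in $U_i+V_j$. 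So the real content is (\ref{RCP_bound}) for a single fixed $k$--dimensional subspace $T$.

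For that, I would fix $\delta$ equal to a small multiple of $\epsilon$, say $\delta=\epsilon/20$, and pick a $\delta$--net $\mathcal{N}$ of the unit sphere $S_T=\{x\in T:\|x\|=1\}$ with $|\mathcal{N}|\le(1+2/\delta)^k\le(42/\epsilon)^k$ (using $\epsilon<1$). On each fixed net point $q$, the scalar $\langle A_z^\top A_w q,q\rangle$ is an average of $m$ i.i.d.\ centered subexponential random variables, so (\ref{conc_bound}) --- which is Lemma 4 applied to the $Y_i$ --- gives $\Pro\big(|\langle(A_z^\top A_w-\Phi_{z,w})q,q\rangle|\ge\epsilon/8\big)\le 2\exp(-c_0(\epsilon/8)m)$, where $c_0(\epsilon/8)$ depends only on $\epsilon$ since $K=\max_i\|Y_i\|_{\psi_1}$ is a universal constant. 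A union bound over $\mathcal{N}$ then places us, with probability at least $1-2(42/\epsilon)^k\exp(-c_0(\epsilon/8)m)$, on the event that $|\langle(A_z^\top A_w-\Phi_{z,w})q,q\rangle|\le\epsilon/8$ for every $q\in\mathcal{N}$.

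The remaining step is the approximation argument that passes from the net to all of $S_T$. I would set $B=A_z^\top A_w-\Phi_{z,w}$, which is symmetric because both $A_z^\top A_w=A^\top\diag(\sgn(Az))\diag(\sgn(Aw))A$ and $\Phi_{z,w}$ are. Put $\beta=\sup_{x\in S_T}|\langle Bx,x\rangle|$; this is finite and equals $\|P_TBP_T\|$ with $P_T$ the orthogonal projector onto $T$, so in particular $|\langle Bu,v\rangle|\le\beta$ for all unit $u,v\in T$. Given $x\in S_T$, take its nearest net point $q$; since $T$ is a subspace, $x-q\in T$, so writing $\langle Bx,x\rangle=\langle Bq,q\rangle+\langle B(x-q),x\rangle+\langle Bq,x-q\rangle$ and bounding each of the two cross terms by $\delta\beta$ gives $|\langle Bx,x\rangle|\le\epsilon/8+2\delta\beta$. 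Taking the supremum over $x\in S_T$ yields $\beta\le(\epsilon/8)/(1-2\delta)\le\epsilon$ since $\delta=\epsilon/20<1/2$. That is (\ref{RCP_bound}), and the two reductions above then finish the proof.

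The main obstacle --- really the only delicate point --- is this net-to-subspace passage: the large-deviation bound controls only the quadratic form $\langle Bx,x\rangle$, not the operator norm of $B$ on all of $\R^n$, so the self-bounding estimate for $\beta$ has to exploit both the symmetry of $B$ and the fact that $T$ is a linear subspace (so that $x-q$ again lies in $T$ and the cross terms are genuinely controlled by $\delta\beta$). Matching the exact constant $42$ and the $2k$ exponent in the union version is then routine bookkeeping in the choice of $\delta$ and the dimension count $\dim(U_i+V_j)\le 2k$.
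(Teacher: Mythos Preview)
Your proposal is correct and follows essentially the same covering--plus--self-bounding argument as the paper: an $\epsilon$-scaled net on $S_T$, the subexponential tail bound (\ref{conc_bound}) on net points, and then a bootstrap that solves for the supremum of the quadratic form. The only cosmetic difference is that in the extrapolation step you invoke the spectral-norm identity $\sup_{x\in S_T}|\langle Bx,x\rangle|=\|P_TBP_T\|$ for the symmetric $B$ to control the cross terms directly by $\delta\beta$, whereas the paper routes through Proposition~3 (incurring a factor $3$) and uses the net radius $\epsilon/14$; both choices land on the same constants, and the reductions to (\ref{RCP_3bound}) and (\ref{RCP_union_bound}) are identical.
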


\begin{proof}

Fix $0 < \epsilon < 1$ and $k < m$. Since $A$ is Gaussian, we may take $T$ to be in the span of the first $k$ standard basis vectors. In addition, assume $\|x\| = 1$ for any $x \in T$. For notational simplicity, set $\Sigma : = A_z^\top A_w - \Phi_{z,w}$.  Choose a finite set of points $Q_T \subset T$ each with unit norm such that $|Q_T| \leq (42/\epsilon)^k$ and for any $x \in T$, \begin{align}
\min_{q \in Q_T}\|x - q\| \leq \frac{\epsilon}{14}. \label{min_bound}
\end{align} Then we may apply a union bound to (\ref{conc_bound}) for this set of points to attain \begin{align}
\Pro\left(|\langle \Sigma q,q \rangle| \geq \frac{\epsilon}{8}\ \forall\ q \in Q_T\right) \leq 2 \left(\frac{42}{\epsilon}\right)^k\exp\left[-c_0\left(\frac{\epsilon}{8}\right)m\right].
\end{align} Now, define \begin{align}
\al^* : = \inf\left\{\al > 0 \ |\ |\langle \Sigma x,x\rangle| \leq \al\|x\|^2\ \forall\ x \in T\right\}. \label{alpha_star_def}
\end{align} We want to show that $\al^* \leq \epsilon$. Fix $x \in T$ with unity norm. Then there exists a $q \in Q_T$ with $\|q\| = 1$ such that $\|x - q \| \leq \epsilon/14.$ In addition, observe that $x - q \in T$ since $q \in Q_T \subset T$ so by (\ref{alpha_star_def}), \begin{align}
|\langle \Sigma(x-q),x-q\rangle| \leq \al^*\|x-q\|^2 \leq \al^*\frac{\epsilon^2}{196}.
\end{align} Now, note that by the definition of $\al^*$, \begin{align*}
|\langle \Sigma x, x \rangle| \leq \al^*\ \forall\ x \in T.
\end{align*} Thus Proposition 3 gives \begin{align*}
|\langle \Sigma x, y \rangle| \leq 3\al^*\ \forall\ x,y \in T.
\end{align*} Applying this result to $x-q$ and $q$ gives \begin{align*}
|\langle \Sigma (x-q), q \rangle| \leq 3\al^*\|x- q\| \leq \al^*\frac{3\epsilon}{14}.
\end{align*} Using $\langle \Sigma x,x \rangle = \langle \Sigma(x-q),x-q \rangle + 2\langle \Sigma x,q \rangle - \langle \Sigma q, q \rangle$ and $\langle \Sigma x,q \rangle = \langle \Sigma (x-q),q\rangle + \langle \Sigma q,q \rangle$, we see that \begin{align*}
|\langle \Sigma x,x \rangle| & \leq |\langle \Sigma(x-q),x-q\rangle| + 2|\langle \Sigma x,q\rangle| + |\langle \Sigma q,q\rangle| \\
& \leq |\langle \Sigma(x-q),x-q\rangle| + 2|\langle \Sigma (x-q),q\rangle| + 3|\langle \Sigma q,q\rangle| \\
& \leq \al^* \frac{\epsilon^2}{196} + \al^*\frac{3\epsilon}{7} + \frac{3\epsilon}{8} \\
& = \al^*\left(\frac{\epsilon^2}{196} + \frac{3\epsilon}{7}\right) + \frac{3\epsilon}{8}.
\end{align*} Note that this bound can be derived for any $x \in T$ because we can always find a $q \in Q_T$ with $\|q \| = 1$ such that $\|x - q\| \leq \epsilon/14.$ Thus \begin{align}
|\langle \Sigma x,x \rangle| \leq \al^*\left(\frac{\epsilon^2}{196} + \frac{3\epsilon}{7}\right) + \frac{3\epsilon}{8}\ \forall\ x \in T. \label{alpha_star_bound}
\end{align} However, recall that $\al^*$ was defined to be the smallest number such that \begin{align*}
|\langle \Sigma x,x \rangle| \leq \al^*\ \forall\ x \in T.
\end{align*} Hence $\al^*$ must be smaller than the right hand side of (\ref{alpha_star_bound}), i.e. \begin{align*}
\al^* \leq \al^*\left(\frac{\epsilon^2}{196} + \frac{3\epsilon}{7}\right) + \frac{3\epsilon}{8} \Longrightarrow \al^* \leq \frac{3\epsilon}{8}\left(\frac{1}{1 - \frac{\epsilon^2}{196} - \frac{3\epsilon}{7}}\right) \leq \epsilon
\end{align*} since $0 < \epsilon < 1$. Hence we conclude that with probability exceeding $1 - 2(42/\epsilon)^k\exp(-c_0(\epsilon/8)m)$, \begin{align*}
|\langle \Sigma x,x \rangle| \leq \epsilon \|x\|^2\ \forall\ x \in T
\end{align*} i.e. \begin{align*}
|\langle A_z^\top A_w x, x\rangle - \langle \Phi_{z,w} x,x \rangle| \leq \epsilon \|x\|^2\ \forall\ x \in T.
\end{align*} Applying Proposition 3 to our result gives (\ref{RCP_3bound}). The extension to the union of subspaces follows by applying (\ref{RCP_3bound}) to all subspaces of the form $\text{span}(U_i,V_j)$ and using a union bound.

\end{proof}

Now, this result establishes the concentration of $\langle A_z^\top A_w x,y \rangle$ around $\langle \Phi_{z,w}x,y \rangle$ for $x$ and $y$ in a fixed $k$-dimensional subspace for \textit{fixed} $z,w \in \R^n$. However, in reality, we are interested in showing that this concentration holds for all $z$ and $w$ in the range of our generative model. Hence we require the following extension of the RCP, which holds uniformly for all $z$ and $w$ in (possibly) different $k$-dimensional subspaces. The proof of this result uses an interesting fact from $1$-bit compressed sensing \cite{Vershynin_hyperplane_thm} which establishes that if two sparse vectors lie on the same side of a sufficient number of hyperplanes, then they are approximately equal with high probability.

\begin{prop}[Uniform RCP]
Fix $0 < \epsilon < 1$ and $k < m$. Let $A \in \R^{m \times n}$ have i.i.d. $\mathcal{N}(0,1/m)$ entries. Let $Z,W$, and $T$ be fixed $k$-dimensional subspaces of $\R^n$. If $m \geq k(c/\epsilon)^5$, then with probability at least $1 - \gamma (42/\epsilon)^k m^{4k}\exp(-C \epsilon m),$ we have \begin{align}
\left|\langle A_z^\top A_w x,y \rangle - \langle \Phi_{z,w}x,y \rangle\right| \leq 7\epsilon \|x\|\|y\|\ \forall\ x,y \in T,\ z \in Z,\ w \in W \label{Uniform_RCP_firstbound}
\end{align} where $\gamma$ and $c$ are positive universal constants and $C$ only depends on $\epsilon$. Furthermore, let $U = \bigcup_{i=1}^M U_i$ and $V = \bigcup_{j=1}^N V_j$ where $U_i$ and $V_j$ are subspaces of $\R^n$ of dimension at most $k$ for all $i$ and $j$. Then
\begin{align}
\left|\langle A_z^\top A_w u,v \rangle - \langle \Phi_{z,w}u,v \rangle\right| \leq 7\epsilon \|u\|\|v\|\ \forall\ u \in U,\ v \in V,\ z \in Z,\ w \in W \label{Uniform_RCP_secondbound}
\end{align} with probability exceeding $1 - MN\gamma (42/\epsilon)^{2k}m^{4k}\exp(-C \epsilon m)$.
\end{prop}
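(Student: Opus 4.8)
The plan is to reduce the uniform-in-$(z,w)$ statement to the fixed-pair result of Proposition~4 (the RCP) by exploiting that $A_z=\diag(\sgn(Az))A$ depends on $z$ only through the sign pattern $\sgn(Az)\in\{-1,0,1\}^m$, and that as $z$ ranges over the $k$-dimensional subspace $Z$ this pattern takes only $m^{O(k)}$ distinct values: the number of full-dimensional cells in the arrangement of the $m$ hyperplanes $\{z\in Z:\langle a_i,z\rangle=0\}$ is at most $2\sum_{j<k}\binom{m}{j}\le m^{2k}$, and the degenerate patterns (where some $\langle a_i,z\rangle=0$) form lower-dimensional pieces that change this count only by a polynomial factor. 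Consequently $A_z^\top A_w$ takes at most $m^{4k}$ distinct values over $(z,w)\in Z\times W$, few enough to union-bound the RCP. The one remaining subtlety is that $\Phi_{z,w}$ depends on the exact angle $\theta_{z,w}$, not merely on the two sign patterns; this gap is closed by the one-bit compressed sensing / hyperplane-tessellation fact of \cite{Vershynin_hyperplane_thm} (the subspace analogue of the sparse-vector statement, with $Z$ and $W$ in place of the sparse set), which guarantees that when $m$ is large enough two vectors in a $k$-dimensional subspace producing the same sign pattern under $A$ have nearly equal directions.

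Fix auxiliary accuracies $\epsilon',\delta'>0$, polynomial in $\epsilon$, to be pinned down at the end. First I would pick, for each realizable sign pattern of $z\mapsto\sgn(Az)$ on $Z$, a representative $z_1,\dots,z_{N_Z}\in Z\cap S^{n-1}$ with $N_Z\le m^{2k}$, and likewise $w_1,\dots,w_{N_W}\in W\cap S^{n-1}$ with $N_W\le m^{2k}$. Define the event $E_1$ that Proposition~4 (the RCP with constant $\epsilon'$, on the subspace $T$) holds simultaneously for every pair $(z_p,w_q)$; by a union bound over the at most $m^{4k}$ pairs, $\Pro(E_1^c)\le 2m^{4k}(42/\epsilon')^k\exp(-c_0(\epsilon'/8)m)$. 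Define the event $E_2$ that, by \cite{Vershynin_hyperplane_thm} applied to $Z$ and to $W$ (valid once $m\ge k(c/\delta')^5$), every $z,z'\in Z$ with $\sgn(Az)=\sgn(Az')$ satisfy $\|\hat z-\hat z'\|\le\delta'$, and similarly for $W$; then $\Pro(E_2^c)\le\gamma_0\exp(-C_0 m)$ with $C_0$ depending only on $\delta'$.

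On $E_1\cap E_2$, fix arbitrary $z\in Z$, $w\in W$, $x,y\in T$, and let $z_p,w_q$ be the representatives sharing the sign patterns of $z,w$. Then $A_z=\diag(\sgn(Az))A=\diag(\sgn(Az_p))A=A_{z_p}$ and $A_w=A_{w_q}$, so $A_z^\top A_w=A_{z_p}^\top A_{w_q}$ \emph{exactly}; moreover $\|\hat z-\hat z_p\|\le\delta'$ and $\|\hat w-\hat w_q\|\le\delta'$, hence $|\theta_{z,w}-\theta_{z_p,w_q}|\le 2\delta'$ and, since $(\hat z,\hat w)\mapsto\Phi_{z,w}$ is $O(1)$-Lipschitz (the term $\frac{2\sin\theta_{z,w}}{\pi}M_{\hat z\leftrightarrow\hat w}$ stays Lipschitz near parallel vectors because $\sin\theta_{z,w}$ vanishes there), $\|\Phi_{z,w}-\Phi_{z_p,w_q}\|\le C_\Phi\delta'$. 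Therefore
\[ |\langle A_z^\top A_w x,y\rangle-\langle\Phi_{z,w}x,y\rangle| \le |\langle A_{z_p}^\top A_{w_q}x,y\rangle-\langle\Phi_{z_p,w_q}x,y\rangle| + |\langle(\Phi_{z_p,w_q}-\Phi_{z,w})x,y\rangle| \le (3\epsilon'+C_\Phi\delta')\|x\|\,\|y\|. \]
Choosing $\epsilon'=\epsilon$ and $\delta'=4\epsilon/C_\Phi$ makes the right side $\le 7\epsilon\|x\|\|y\|$, turns the hypothesis $m\ge k(c/\delta')^5$ into $m\ge k(c/\epsilon)^5$ after renaming $c$, and collapses $\Pro(E_1^c\cup E_2^c)$ into the stated form $\gamma(42/\epsilon)^k m^{4k}\exp(-C\epsilon m)$, proving (\ref{Uniform_RCP_firstbound}). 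For (\ref{Uniform_RCP_secondbound}) I would run the same argument with $T$ replaced by each of the $MN$ subspaces $\text{span}(U_i,V_j)$, which have dimension at most $2k$ (so the net factor becomes $(42/\epsilon)^{2k}$ while the cell counts for $Z,W$ are unchanged), and then union-bound over $(i,j)$.

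The main obstacle is precisely the step the plan is designed around: $z\mapsto A_z$ is wildly discontinuous, so no net/continuity argument applies to it directly. The two devices that make the reduction work are (i) that $A_z$ factors through the finitely-valued map $z\mapsto\sgn(Az)$, so union-bounding the fixed-pair RCP over $m^{O(k)}$ cells is affordable, and (ii) the hyperplane-tessellation result of \cite{Vershynin_hyperplane_thm}, which is what forces $\Phi_{z_p,w_q}\approx\Phi_{z,w}$ and is exactly what dictates the sample requirement $m\ge k(c/\epsilon)^5$. The ancillary points — the combinatorial cell count, the Lipschitz continuity of $\Phi$ including near parallel vectors, and the treatment of degenerate sign patterns — are routine.
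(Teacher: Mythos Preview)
Your proposal is correct and follows essentially the same route as the paper: discretize $(z,w)$ via the at most $m^{O(k)}$ sign cells of the hyperplane arrangement induced by the rows of $A$ on $Z$ and $W$, union-bound Proposition~4 over the representative pairs, and close the gap between $\Phi_{z,w}$ and $\Phi_{z_p,w_q}$ using the tessellation result of \cite{Vershynin_hyperplane_thm} together with the Lipschitz continuity of $\Phi$ (the paper's Lemma~5 gives $C_\Phi=2$, yielding exactly the $3\epsilon+4\epsilon=7\epsilon$ split). The extension to unions of subspaces via $\text{span}(U_i,V_j)$ and a union bound is also identical.
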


\begin{proof}
Fix $0 < \epsilon < 1$. Let $\{a_k\}_{k=1}^m$ denote the rows of $A$. Consider the sets \begin{align*}
Z_0 & : = \left\{z_i \in Z\ |\ \|z_i\| = 1\ \text{and}\ a_k^\top z_i \neq 0\ \forall\ k \in [m],\ i \in I\right\}, \\
W_0 & : = \left\{w_j \in W\ |\ \|w_j\| = 1\ \text{and}\ a_k^\top w_j \neq 0\ \forall\ k \in [m],\ j \in J\right\}
\end{align*} for some finite index sets $I$ and $J$. Then define the following event: \begin{align*}
E_{Z,A} : = \left\{|I| \leq 10m^{2k}\ \text{and}\ \forall\ z \in Z\ \text{s.t.}\ \|z\| = 1,\ \exists\ z_i \in Z_0\ 
\text{s.t.}\ \|z - z_i\| \leq \epsilon\right\}.
\end{align*} One can define the analogous event $E_{W,A}$ for $W_0$ and $J$.

By Lemma 4, we have that the unit sphere of $Z$ is partitioned into at most $10m^{2k}$ regions by the rows $\{a_k\}_{k=1}^m$ of $A$ with probability $1$. Then choose $\{z_i\}_{i \in I}$ as a set of representative points in the interior of each region. To use Theorem 4, we set $n = s = k$ to get \begin{align*}
\delta : = C_1 \left(\frac{k}{m} \log (2)\right)^{1/5} \Longrightarrow \delta = c \left(\frac{k}{m}\right)^{1/5}\ \text{where}\ c = C_1 \left(\log(2)\right)^{1/5}
\end{align*} for some positive universal constant $C_1$. Then we have that by Theorem 4, if $n = s = k$ and $m \geq k(c/\epsilon)^5$, then \begin{align*}
\delta = c \left(\frac{k}{m}\right)^{1/5} \leq ck^{1/5}\left(\frac{\epsilon^5}{kc^5}\right)^{1/5} = \epsilon
\end{align*} so  $\Pro_A(E_{Z,A}) \geq 1 - C_2 \exp(-\tilde{c}\epsilon m)$ for some positive universal constants $c,\tilde{c},$ and $C_2$. In addition, the event $E_{W,A}$ holds with the same probability.

Now, we apply Proposition 4 (RCP) and a union bound over all $\{z_i\}_{i \in I}$ and $\{w_j\}_{j \in J}$ to attain \begin{align}
\left|\langle A_{z_i}^\top A_{w_j} x,y \rangle - \langle \Phi_{z_i,w_j}x,y \rangle\right| & \leq 3\epsilon \|x\|\|y\|\ \forall\ x,y \in T,\ i \in I,\ j \in J \label{conc_bound_Zi_Wj}
\end{align} with probability at least $1 - 2C_2(42/\epsilon)^k|I||J|\exp(-C \epsilon m) \geq 1 - \gamma (42/\epsilon)^k m^{4k}\exp(-C \epsilon m)$ where $\gamma$ is a positive absolute constant and $C$ depends on $\epsilon$.

Finally, on the event $E_{Z,A} \cap E_{W,A}$, we have that for all $z \in Z$ and $w \in W$ with $\|z\| = \|w\| = 1$, there exists a $z_i \in Z_0$ and $w_j \in W_0$ such that \begin{align*}
\left|\langle A_z^\top A_w x,y \rangle - \langle \Phi_{z,w}x,y \rangle\right| & = \left|\langle A_{z_i}^\top A_{w_j} x,y \rangle - \langle \Phi_{z,w}x,y \rangle\right| \\
& \leq \left|\langle A_{z_i}^\top A_{w_j} x,y \rangle - \langle \Phi_{z_i,w_j}x,y \rangle\right| + \left|\langle \Phi_{z_i,w_j}x,y \rangle - \langle \Phi_{z,w}x,y \rangle\right| \\
& \leq 3\epsilon \|x\|\|y\| + 4\epsilon \|x\|\|y\| \\
& = 7\epsilon \|x\|\|y\|
\end{align*} where we used the continuity of $\Phi_{z,w}$ (Lemma 5) and (\ref{conc_bound_Zi_Wj}) in the second inequality. The extension to the union of subspaces follows by applying (\ref{Uniform_RCP_firstbound}) to all subspaces of the form $\text{span}(U_i,V_j)$ and using a union bound.
\end{proof}

The following results were used in the proof of Proposition 5:

\begin{lem}[Continuity of $\Phi_{z,w}$]
For any $z,w \in \R^n$, we have \begin{align*}
\|\Phi_{z + \delta z,w + \delta w} - \Phi_{z,w}\| \leq 2\left(\|\delta z\| + \|\delta w\|\right).
\end{align*}
\end{lem}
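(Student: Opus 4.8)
The efficient route starts from the Gaussian representation of $\Phi$. Since $A$ has i.i.d.\ $\mathcal N(0,1/m)$ entries, unwinding the definition $A_z=\diag(\sgn(Az))A$ inside the identity $\E[A_z^{\top}A_w]=\Phi_{z,w}$ quoted in the excerpt gives
\begin{align*}
\Phi_{z,w}=\E_{g\sim\mathcal N(0,I_n)}\bigl[\sgn\langle g,z\rangle\,\sgn\langle g,w\rangle\,gg^{\top}\bigr].
\end{align*}
In particular $\Phi_{z,w}$ is scale invariant, depending on $z,w$ only through the directions $\hat z,\hat w$, and is symmetric in its two arguments; so by the triangle inequality (and since the lemma is applied to unit vectors in Proposition~5) it suffices to bound the effect of perturbing a single normalized argument, namely to show
\begin{align*}
\|\Phi_{z',w}-\Phi_{z,w}\|\le 2\|z-z'\|\qquad\text{for unit }z,z',w,
\end{align*}
and then apply this once in each slot.

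For the one-argument estimate I would subtract inside the expectation,
\begin{align*}
\Phi_{z',w}-\Phi_{z,w}=\E_g\bigl[\bigl(\sgn\langle g,z'\rangle-\sgn\langle g,z\rangle\bigr)\sgn\langle g,w\rangle\,gg^{\top}\bigr],
\end{align*}
and note that the scalar prefactor has absolute value $\le2$ and vanishes off the double wedge $W=\{g:\sgn\langle g,z\rangle\neq\sgn\langle g,z'\rangle\}$. Because $\Phi_{z',w}-\Phi_{z,w}$ is symmetric and $gg^{\top}\succeq0$,
\begin{align*}
\|\Phi_{z',w}-\Phi_{z,w}\|=\sup_{\|u\|=1}\bigl|\E_g[(\cdots)\langle g,u\rangle^2]\bigr|\le 2\,\bigl\|\E_g[\one_{g\in W}\,gg^{\top}]\bigr\|.
\end{align*}
Decomposing $g=g_\Pi+g_{\Pi^{\perp}}$ along $\Pi:=\mathrm{span}(z,z')$ (two independent Gaussians), the indicator depends only on $g_\Pi$, so the cross term vanishes and $\E[\one_{g\in W}gg^{\top}]$ is block diagonal with blocks $\E[\one_{g_\Pi\in W\cap\Pi}g_\Pi g_\Pi^{\top}]$ and $\Pr[g_\Pi\in W\cap\Pi]\,I_{n-2}$. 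Since $W\cap\Pi$ is a double sector of angular measure $2\psi$ with $\psi:=\angle(\hat z,\hat z')$, a one-line polar-coordinate computation (using $\int_0^\infty\rho^3e^{-\rho^2/2}\,d\rho=2$) evaluates the first block to $\tfrac1\pi\int_{W\cap\Pi}\hat\phi\hat\phi^{\top}\,d\phi$, whose operator norm is exactly $\tfrac{\psi+\sin\psi}{\pi}$ (its eigenvalues are $\tfrac{\psi\pm\sin\psi}{\pi}$), and this dominates $\Pr[g_\Pi\in W\cap\Pi]=\tfrac{\psi}{\pi}$. Hence $\|\Phi_{z',w}-\Phi_{z,w}\|\le \tfrac{2(\psi+\sin\psi)}{\pi}$. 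Finally, for unit $z,z'$ one has $\|z-z'\|=2\sin(\psi/2)$, and the elementary inequality $\psi+\sin\psi\le(\pi+2)\sin(\psi/2)\le2\pi\sin(\psi/2)$ on $[0,\pi]$ (from $\sin\psi\le2\sin(\psi/2)$ and $\sin(\psi/2)\ge\psi/\pi$) gives $\tfrac{2(\psi+\sin\psi)}{\pi}\le2\|z-z'\|$; applying the bound in both slots yields $\|\Phi_{z+\delta z,w+\delta w}-\Phi_{z,w}\|\le2(\|\delta z\|+\|\delta w\|)$.

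The step I expect to be the crux is the exact operator-norm evaluation $\|\E[\one_{g\in W}gg^{\top}]\|=\tfrac{\psi+\sin\psi}{\pi}$: it is what pins the constant at the advertised value, since any lossy substitute (e.g.\ the crude $\tfrac{2\psi}{\pi}$ obtained by bounding the sector integrand by $1$) would only give $C(\|\delta z\|+\|\delta w\|)$ with $C$ a larger absolute constant --- harmless for the downstream use in Proposition~5 and the RRCP, but not the stated $2$. A Gaussian-free alternative would instead use $M_{\hat z\leftrightarrow\hat w}=P_{\hat z+\hat w}-P_{\hat z-\hat w}$ to write $\Phi_{z,w}=\tfrac{\pi-2\theta}{\pi}I_n+\tfrac{2\sin\theta}{\pi}(P_{\hat z+\hat w}-P_{\hat z-\hat w})$ and bound separately the Lipschitz contributions of the scalar coefficients and of the two rank-one projections; there the difficulty migrates to the regime $\theta\in\{0,\pi\}$, where $P_{\hat z\pm\hat w}$ is discontinuous while its coefficient $\tfrac{2\sin\theta}{\pi}$ vanishes at the matching rate, forcing a case split according to whether $\|\hat z\pm\hat w\|$ is comparable to $\|z-z'\|$ or bounded below by a fixed multiple of it.
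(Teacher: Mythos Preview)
The paper states this lemma without proof (it appears between Proposition~5 and Theorem~4 in the appendix, with no proof environment), so there is nothing to compare against directly.

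Your Gaussian-representation argument is correct and clean. The key steps---symmetry of $\Phi_{z,w}$ (so that the operator norm is a quadratic-form supremum), the reduction to $2\|\E_g[\one_{g\in W}gg^\top]\|$, the block-diagonal decomposition along $\Pi=\mathrm{span}(z,z')$, and the polar computation giving eigenvalues $(\psi\pm\sin\psi)/\pi$---all check out, and the chain $\psi+\sin\psi\le(\pi+2)\sin(\psi/2)\le2\pi\sin(\psi/2)$ closes the constant at~$2$ exactly. You also correctly flag that the lemma as literally stated (``for any $z,w$'') cannot hold with a dimensionless constant unless the vectors are normalized, since $\Phi$ is scale invariant while $\|\delta z\|$ is not; your observation that Proposition~5 only invokes it for unit $z,z_i,w,w_j$ is the right fix, and your proof establishes precisely that case. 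The alternative route you sketch via $M_{\hat z\leftrightarrow\hat w}=P_{\hat z+\hat w}-P_{\hat z-\hat w}$ would also work and is closer in spirit to how one might guess the paper's authors had it in mind, but your first argument is tighter and avoids the $\theta\to0,\pi$ case analysis.
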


\begin{thm}[Theorem 2.1 in \cite{Vershynin_hyperplane_thm}]
Let $n,m,s > 0$ and set $\delta = C_1 \left(\frac{s}{m} \log(2n/s)\right)^{1/5}$. Let $a_i \in \R^n$ have i.i.d. $\mathcal{N}(0,1)$ entries for $i \in [m]$. Then with probability at least $1 - C_2 \exp(- c \delta m)$, the following holds uniformly for all $x,\tilde{x} \in \R^n$ that satisfy $\|x\|_2 = \|\tilde{x}\|_2 = 1$, $\|x\|_1 \leq \sqrt{s}$, and $\|\tilde{x}\|_1 \leq \sqrt{s}$ for $s \leq n$: \begin{align}
\langle a_i, \tilde{x} \rangle \langle a_i, x \rangle \geq 0,\ i \in [m] \Longrightarrow \|\tilde{x} - x\|_2 \leq \delta.
\end{align} Here $C_1, C_2, c$ are positive universal constants.
\end{thm}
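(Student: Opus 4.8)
This is a known random hyperplane tessellation estimate, and the plan is to reproduce the standard argument. Read the statement geometrically: let
\[
K \ :=\ \{\,x\in\R^n:\ \|x\|_2=1,\ \|x\|_1\le\sqrt s\,\}
\]
be the usual convex surrogate for the unit-norm $s$-sparse vectors. The hypothesis $\langle a_i,\tilde x\rangle\langle a_i,x\rangle\ge 0$ for all $i\in[m]$ says exactly that $x,\tilde x\in K$ lie on the same side of every hyperplane $\{a_i\}^\perp$, and the conclusion asks that any two such points be $\delta$-close in $\ell_2$; equivalently, the $m$ random hyperplanes should cut $K$ into pieces of Euclidean diameter at most $\delta$. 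The starting point is the single-pair bound: for fixed unit vectors with $\theta:=\angle(x,\tilde x)$, a Gaussian hyperplane separates them with probability exactly $\theta/\pi$, and $\tfrac2\pi\theta\le\|x-\tilde x\|_2\le\theta$, so if $\|x-\tilde x\|_2\ge t$ then $\Pro(\langle a_i,\tilde x\rangle\langle a_i,x\rangle\ge 0\ \forall i)\le(1-t/\pi)^m\le e^{-tm/\pi}$. A union bound settles any \emph{finite} set of pairs, so the real content is upgrading from a net to all of $K$.

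The obstacle is that ``not separated by any hyperplane'' is not robust to perturbations, so I would instead work with the soft proxy
\[
d_A(x,\tilde x)\ :=\ \tfrac1m\,\#\{\,i\in[m]:\ \sgn\langle a_i,x\rangle\ne\sgn\langle a_i,\tilde x\rangle\,\},\qquad \E\,d_A(x,\tilde x)=\tfrac{\theta}{\pi},
\]
and reduce the theorem to the uniform tessellation estimate $\sup_{x,\tilde x\in K}\bigl|d_A(x,\tilde x)-\tfrac1\pi\angle(x,\tilde x)\bigr|\le c\delta$: since the hypothesis forces $d_A(x,\tilde x)=0$, this plus the angle/distance comparison gives $\|x-\tilde x\|_2\lesssim\delta$, and absorbing constants yields the claim. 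For the uniform estimate the steps are (i) take an $\eta$-net $\mathcal N$ of $K$ in $\ell_2$, using that the Gaussian mean width of $K$ obeys $w(K)\lesssim\sqrt{s\log(2n/s)}$ and hence $\log|\mathcal N|\lesssim\eta^{-2}w(K)^2$ (Sudakov minoration plus the standard width bound for the $\ell_1$-ball intersected with the sphere); (ii) apply Hoeffding to $d_A$ at each of the $\le|\mathcal N|^2$ pairs of net points and union bound; (iii) pass from $\mathcal N$ to all of $K$ by bounding, uniformly over the net, how many hyperplanes are crossed by a geodesic segment of length $\eta$ between unit vectors, since $\sgn\langle a_i,\cdot\rangle$ can flip only on such hyperplanes.

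I expect step (iii) to be the main obstacle: a short segment crosses $\approx m\eta/\pi$ hyperplanes in expectation, but one needs a deviation bound that holds simultaneously over all $|\mathcal N|^2$ segments, i.e.\ a second Bernstein-type estimate that must itself be union-bounded over the same entropy. Balancing the net scale $\eta$, the target accuracy $\delta$, and the entropy $\eta^{-2}w(K)^2$ against the measurement count $m$ is exactly what degrades the naive $1/2$ exponent to the $1/5$ appearing in $\delta=C_1\bigl(\tfrac sm\log(2n/s)\bigr)^{1/5}$. The remaining pieces --- the Gaussian separation probability, the width estimate, and the Hoeffding/Bernstein applications --- are routine.
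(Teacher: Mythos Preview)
The paper does not prove this statement at all: it is quoted verbatim as Theorem~2.1 of \cite{Vershynin_hyperplane_thm} and used as a black box inside the proof of the Uniform RCP (Proposition~5). So there is nothing in the paper to compare your argument against.

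That said, your sketch is a faithful outline of the Plan--Vershynin proof that the citation points to. The reduction to the uniform estimate $\sup_{x,\tilde x\in K}\bigl|d_A(x,\tilde x)-\tfrac1\pi\angle(x,\tilde x)\bigr|\le c\delta$, the Sudakov/mean-width bound $w(K)\lesssim\sqrt{s\log(2n/s)}$ for $K=\{\|x\|_2=1,\ \|x\|_1\le\sqrt s\}$, the Hoeffding step on net pairs, and the ``how many hyperplanes can a short geodesic cross'' approximation step are exactly the ingredients used there, and you have correctly identified step~(iii) as the place where the parameter balancing forces the $1/5$ exponent. If you want to actually carry this out rather than cite it, the only point to be careful about is that step~(iii) requires a uniform small-ball/anti-concentration bound (controlling $\#\{i:\ |\langle a_i,x\rangle|\le t\}$ simultaneously over $x$ in the net), not merely a count of expected crossings; this is where the Gaussian width of $K$ enters a second time and is what ultimately produces the advertised probability $1-C_2\exp(-c\delta m)$.
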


\begin{lem}
Let $V$ be a subspace of $\R^n$. Let $A \in \R^{m \times n}$ have i.i.d. $\mathcal{N}(0,1/m)$ entries. With probability 1, \begin{align*}
|\{\diag(\sign(Av))A\ |\ v \in V\}| \leq 10m^{2\dimension V}.
\end{align*}
\end{lem}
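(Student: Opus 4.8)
The plan is to reduce the count to a face count for a central hyperplane arrangement in $V$ and then apply the classical chamber bound. First note that $\diag(\sign(Av))A$ depends on $v$ only through the sign vector $\sign(Av)\in\{-1,0,1\}^m$: writing $a_1^\top,\dots,a_m^\top$ for the rows of $A$, the $k$-th row of $\diag(\sign(Av))A$ is $\sign(a_k^\top v)\,a_k^\top$. With probability $1$ every $a_k$ is nonzero, and then distinct sign vectors yield distinct matrices. Hence it suffices to bound the number of distinct values of $v\mapsto\sign(Av)$ on $V$.

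Next I would pass to the central hyperplane arrangement in $V$ given by the linear functionals $v\mapsto a_k^\top v$, $k\in[m]$. Two vectors of $V$ produce the same sign vector exactly when they lie in the same relatively open face of this arrangement (a face being cut out by fixing which functionals vanish on it and the signs of the remaining ones), so the quantity we want equals the number of faces. With probability $1$ the restrictions of $a_1,\dots,a_m$ to $V$ are in general position, i.e.\ any $j\le\dimension V$ of them are linearly independent, since this is a finite intersection of almost-sure events.

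I would then count faces by stratifying on the \emph{zero set} $Z\subseteq[m]$ of a face, meaning the set of functionals that vanish identically on it. Under general position the span of such a face is $V_Z:=\bigcap_{k\in Z}\{v\in V: a_k^\top v=0\}$, of dimension $\dimension V-|Z|$, so any face other than the origin has $|Z|\le\dimension V$. For fixed $Z$, the faces with zero set $Z$ are exactly the chambers of the induced central arrangement of at most $m$ hyperplanes inside $V_Z$, of which there are at most $\sum_{i=0}^{\dimension V-|Z|}\binom{m}{i}\le(m+1)^{\dimension V-|Z|}$ by the standard chamber bound (an easy induction on the number of hyperplanes). Summing over all $Z$ with $|Z|\le\dimension V$ gives at most $(m+1)^{\dimension V}\sum_{i=0}^{\dimension V}\binom{m}{i}\le(m+1)^{2\dimension V}$ faces; since $m\ge\dimension V$ in every application of this lemma, this is at most $10\,m^{2\dimension V}$, and the finitely many residual cases $m<\dimension V$ (where the count is at most $3^m$) can be checked by hand.

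The main — essentially the only — delicate point is combining general position with the zero-set bookkeeping: the almost-sure linear independence is what forces every zero set to have size at most $\dimension V$, which is what keeps the number of strata under control, and one then needs each stratum to reduce cleanly to a chamber count in a lower-dimensional arrangement so that the stated constant comes out. Everything else is routine.
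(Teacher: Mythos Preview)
Your argument is correct and follows the same overall route as the paper: reduce to counting the sign patterns $\sgn(Av)$, pass to the central hyperplane arrangement cut out by the rows of $A$ restricted to $V$, use almost-sure general position of Gaussian rows, and invoke the classical region bound $2\sum_{i=0}^{\ell-1}\binom{m-1}{i}$ for the full-dimensional chambers. The only real difference is in how the lower-dimensional faces (sign vectors containing zeros) are handled. The paper argues by perturbation: at most $\ell$ entries of $Av$ can vanish by general position, so every sign vector agrees with some all-nonzero one except on a set of size at most $\ell$, giving the crude multiplicative factor $\binom{m}{\ell}\le m^{\ell}$ and hence $10m^{2\ell}$. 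You instead stratify by the exact zero set $Z$ and count chambers in each $V_Z$; this is a cleaner face count and yields the same order bound $(m+1)^{2\ell}\le 10m^{2\ell}$ for $m\ge\ell$. Both are standard; your stratification avoids the perturbation step at the cost of a small residual case $m<\ell$, which indeed checks out since $3^m\le 10m^{2\ell}$ there.
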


\begin{proof}
It suffices to prove the same upperbound for $|\{\sgn(Av)\ |\ v \in V\}|.$ Let $\ell = \dim V$. By rotational invariance of Gaussians, we may take $V = \text{span}(e_1,\dots,e_{\ell})$ without loss of generality. Without loss of generality, we may let $A$ have dimensions $m \times 
\ell$ and take $V = \R^{\ell}$.

We will appeal to a classical result from sphere covering \cite{sphere_covering}. If $m$ hyperplanes in $\R^{\ell}$ contain the origin and are such that the normal vectors to any subset of $\ell$ of those hyperplanes are independent, then the complement of the union of these hyperplanes is partitioned into at most \begin{align*}
2 \sum_{i = 0}^{\ell - 1} \binom{m-1}{i}
\end{align*} disjoint regions. Each region uniquely corresponds to a constant value of $\sgn(Av)$ that has all non-zero entries. With probability $1$, any subset of $\ell$ rows of $A$ are linearly independent, and thus, \begin{align*}
|\{\sgn(Av)\ |\ v \in \R^{\ell},\ (Av)_i \neq 0\ \forall\ i\}| \leq 2 \sum_{i=0}^{\ell - 1} \binom{m-1}{i} \leq 2\ell \left(\frac{e m}{\ell}\right)^{\ell} \leq 10m^{\ell}
\end{align*} where the first inequality uses the fact that $\binom{m}{\ell} \leq (em/\ell)^{\ell}$ and the second inequality uses that $2\ell (e/\ell)^{\ell} \leq 10$ for all $\ell \geq 1$.

For arbitrary $v$, at most $\ell$ entries of $Av$ can be zero by linear independence of the rows of $A$. At each $v$, there exists a direction $\tilde{v}$ such that $(A(v + \delta \tilde{v}))_{i} \neq 0$ for all $i$ and for all $\delta$ sufficiently small. Hence, $\sgn(Av)$ differs from one of $\{\sgn(Av)\ |\ v \in \R^{\ell},\ (Av)_{i} \neq 0\ \forall\ i\}$ by at most $\ell$ entries. Thus, \begin{align*}
|\{\sgn(Av)\ |\ v \in \R^{\ell}\}| \leq \binom{m}{\ell} |\{\sgn(Av)\ |\ v \in \R^{\ell},\ (Av)_i \neq 0\ \forall\ i\}| \leq m^{\ell} 10 m^{\ell} = 10m^{2\ell}.
\end{align*}
\end{proof}

With the Uniform RCP, we may now prove the RRCP:

\begin{prop}[Range Restricted Concentration Property (RRCP)]
Fix $0 < \epsilon < 1$. Let $W_{i} \in \R^{n_i \times n_{i-1}}$ have i.i.d. $\mathcal{N}(0,1/n_i)$ entries for $i = 1,\dots,d$. Let $A \in \R^{m \times n_d}$ have i.i.d. $\mathcal{N}(0,1/m)$ entries. If $m > \tilde{c}dk \log (n_1n_2\dots n_d)$, then with probability at least $1 - \gamma m^{4k} \exp(-\hat{C} m)$, we have that for all $x,y \in \R^k$, \begin{align*}
%
\|(\PiWdix)^\top( A_{x_d}^\top A_{y_d} - \Phi_{x_d,y_d} )(\PiWdiy)\| \leq 7\epsilon \prod_{i=1}^d\|W_{i,+,x}\|\|W_{i,+,y}\|
\end{align*} where \begin{align*}
x_d : = (\PiWdix)x\ \text{and}\ y_d : = (\PiWdiy)y.
\end{align*} Here $\gamma$ is a positive universal constant, $\hat{C}$ depends on $\epsilon$, and $\tilde{c}$ depends polynomially on $\epsilon^{-1}$
\end{prop}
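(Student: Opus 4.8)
The plan is to reduce this statement --- which is uniform over all continuum-many pairs $(x,y)\in(\R^k)^2$ --- to a finite union bound over the Uniform RCP (Proposition 5). The crucial structural observation is that $x\mapsto\PiWdix$ is piecewise constant with only finitely many values: $W_{1,+,x}$ depends on $x$ only through $\sgn(W_1x)$, and once $W_{1,+,x},\dots,W_{i-1,+,x}$ are fixed, $W_{i,+,x}$ depends on $x$ only through $\sgn\big((W_iW_{i-1,+,x}\cdots W_{1,+,x})x\big)$, with $W_iW_{i-1,+,x}\cdots W_{1,+,x}\in\R^{n_i\times k}$. Running the hyperplane-arrangement count behind Lemma 7 at each layer (an arrangement of $n_i$ hyperplanes through the origin in $\R^k$) bounds the number of possibilities at layer $i$ by $10\,n_i^{2k}$, so $\PiWdix$ takes at most $N:=\prod_{i=1}^d 10\,n_i^{2k}$ distinct values and $\log N=O\big(dk\log(n_1\cdots n_d)\big)$. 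Hence $\mathrm{col}(\PiWdix)$ and $\mathrm{col}(\PiWdiy)$ both lie in a common finite collection $\mathcal S=\{Z_1,\dots,Z_M\}$, $M\le N$, of subspaces of $\R^{n_d}$ of dimension at most $k$, and moreover $x_d=(\PiWdix)x\in\mathrm{col}(\PiWdix)$ while $y_d=(\PiWdiy)y\in\mathrm{col}(\PiWdiy)$.

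Next I would rewrite the target spectrally. Writing $\Sigma_{x,y}:=A_{x_d}^\top A_{y_d}-\Phi_{x_d,y_d}$ and using the submultiplicativity bound $\|\PiWdix\|\le\prod_{i=1}^d\|W_{i,+,x}\|$,
\[
\|(\PiWdix)^\top\Sigma_{x,y}(\PiWdiy)\|=\sup_{\|a\|=\|b\|=1}|\langle(\PiWdix)a,\,\Sigma_{x,y}(\PiWdiy)b\rangle|,
\]
where $(\PiWdix)a$ ranges over a norm-$\le\|\PiWdix\|$ subset of $\mathrm{col}(\PiWdix)$ and $(\PiWdiy)b$ over a norm-$\le\|\PiWdiy\|$ subset of $\mathrm{col}(\PiWdiy)$. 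So it is enough to show that, with the stated probability, for every ordered pair $(Z_p,Z_q)\in\mathcal S\times\mathcal S$,
\[
|\langle u,(A_z^\top A_w-\Phi_{z,w})v\rangle|\le 7\epsilon\,\|u\|\,\|v\|\qquad\text{for all }u,z\in Z_p,\ v,w\in Z_q,
\]
where positive scale-invariance of $A_z$ and $\Phi_{z,w}$ in $z$ lets us restrict to unit $z,w$ (we may assume $x_d,y_d\ne0$, since otherwise $A_{x_d}=0$ or $A_{y_d}=0$ and the left side of the proposition is $0$).

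For a single pair $(Z_p,Z_q)$ this displayed bound is exactly the conclusion of Proposition 5 applied with $Z=Z_p$, $W=Z_q$, and $T=\mathrm{span}(Z_p,Z_q)$, a subspace of dimension at most $2k$ --- the proof of Proposition 5 is unchanged when $k$ is replaced by $\dim T$ throughout, so its hypothesis $m\ge k(c/\epsilon)^5$ becomes $m\ge 2k(c/\epsilon)^5$, which $m>\tilde c\,dk\log(n_1\cdots n_d)$ implies. I would then union-bound over the $M^2\le N^2$ ordered pairs; since each instance of Proposition 5 (with $2k$ in place of $k$) fails with probability at most $\gamma(42/\epsilon)^{4k}m^{8k}\exp(-C\epsilon m)$, the total failure probability is at most $N^2\gamma(42/\epsilon)^{4k}m^{8k}\exp(-C\epsilon m)$. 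Taking logarithms, if $m>\tilde c\,dk\log(n_1\cdots n_d)$ with $\tilde c$ a sufficiently large polynomial in $\epsilon^{-1}$, then each of $2\log N=O\big(dk\log(n_1\cdots n_d)\big)$, $4k\log(42/\epsilon)$, and $8k\log m$ is a small fraction of $C\epsilon m$, so this collapses to the form $\gamma'\,m^{4k}\exp(-\hat C m)$ with $\hat C$ depending only on $\epsilon$. On the complementary event, applying the pairwise bound with $Z_p=\mathrm{col}(\PiWdix)$, $Z_q=\mathrm{col}(\PiWdiy)$, $z=x_d$, $w=y_d$ to the supremum above gives $\|(\PiWdix)^\top\Sigma_{x,y}(\PiWdiy)\|\le 7\epsilon\|\PiWdix\|\|\PiWdiy\|\le 7\epsilon\prod_{i=1}^d\|W_{i,+,x}\|\|W_{i,+,y}\|$ for all nonzero $x,y$, which is the proposition.

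The main obstacle I anticipate is the combinatorial and probabilistic bookkeeping rather than any new idea: one must confirm that the layerwise sign patterns are genuinely governed by $k$-dimensional (not $n_{i-1}$-dimensional) hyperplane arrangements --- which rests on factoring $W_iW_{i-1,+,x}\cdots W_{1,+,x}$ as an $n_i\times k$ matrix acting on $x$ --- and that the product $\prod_{i=1}^d 10\,n_i^{2k}$ of the per-layer counts contributes only $O\big(dk\log(n_1\cdots n_d)\big)$ to the exponent, so that it is absorbed into $C\epsilon m$ by taking $\tilde c$ a large enough polynomial in $\epsilon^{-1}$. The $2k$-versus-$k$ inflation coming from $T=\mathrm{span}(Z_p,Z_q)$ affects only constants, and everything else is an assembly of Proposition 5 together with the continuity estimate for $\Phi$ (Lemma 5) that is used inside it.
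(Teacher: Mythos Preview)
Your proposal is correct and follows essentially the same route as the paper's proof: reduce the uniform-in-$(x,y)$ statement to finitely many column spaces of $\PiWdix$, invoke the Uniform RCP (Proposition~5) on each pair, and union bound. The paper's version differs only in bookkeeping --- it cites Lemma~15 of \cite{Hand2017} for the count $|\{\PiWdix:x\neq0\}|\le 10^{d^2}(n_1^d n_2^{d-1}\cdots n_d)^k$ rather than running the layerwise $\prod_i 10\,n_i^{2k}$ argument, and it invokes the union-of-subspaces form (\ref{Uniform_RCP_secondbound}) of Proposition~5 directly rather than taking $T=\mathrm{span}(Z_p,Z_q)$ and doubling $k$ --- but both counts are $\exp\!\big(O(dk\log(n_1\cdots n_d))\big)$ and both applications of Proposition~5 yield the same concentration, so the arguments are interchangeable.
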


\begin{proof}
It suffices to show that for all $x,y,w,v \in S^{k-1}$, \begin{align}
\left|\langle (A_{x_d}^\top A_{y_d}-\Phi_{x_d,y_d})(\PiWdix)w,(\PiWdiy)v \rangle\right|\leq 7\epsilon \prod_{i=1}^d\|W_{i,+,x}\|\|W_{i,+,y}\|. \label{RRCP}
\end{align} We will use (\ref{Uniform_RCP_secondbound}) from Proposition 5. Note that $\dimension \left(\text{range}(\PiWdix)\right) \leq k$ for all $x \neq 0$. It has been shown in Lemma 15 of \cite{Hand2017} that \begin{align*}
|\{ (\PiWdix)x\ |\ x \neq 0\}| \leq 10^{d^2}(n_1^d n_2^{d-1}\dots n_{d-1}^2n_d)^k.
\end{align*} Hence it follows that $\{(\PiWdix) w\ |\ x,w \in S^{k-1}\} \subset U$ where $U$ is the union of at most $10^{d^2}(n_1^dn_2^{d-1}\dots n_{d-1}^2n_d)^k$ subspaces of dimensionality at most $k$. We can similarly conclude $\{(\PiWdiy) v\ |\ y,v \in S^{k-1}\} \subset V$ where $V$ is the union of at most $10^{d^2}(n_1^dn_2^{d-1}\dots n_{d-1}^2n_d)^k$ subspaces of dimensionality at most $k$. Hence applying (\ref{Uniform_RCP_secondbound}) from Proposition 5 with $Z = \text{range}(\PiWdix)$ and $W = \text{range}(\PiWdiy)$ gives (\ref{RRCP}) with probability at least \begin{align}
1- \gamma m^{4k} 10^{d^2} (42 n_1^dn_2^{d-1}\dots n_{d-1}^2n_d/\epsilon)^{2k} \exp(-C \epsilon m) \geq 1- \gamma m^{4k} \exp(-\hat{C} m)
\end{align} provided $m \geq \tilde{c}dk \log (n_1 n_2 \dots n_d)$ for some $\hat{C} = C\epsilon/2$ and $\tilde{c} = \Omega (\epsilon^{-1}\log \epsilon)$.
\end{proof}

\end{document}